\numberwithin{equation}{section}
\newtheorem{theorem}{Theorem}[section]
\newtheorem{proposition}[theorem]{Proposition}
\theoremstyle{remark}
\newtheorem*{remark}{Remark}
\definecolor{darkgreen}{RGB}{0,180,0}
\colorlet[named]{green}{darkgreen}
\theoremstyle{definition}
\newcommand{\caps}[2]{\begin{tikzpicture}
  \draw (0,-0.5) arc(-90:90:0.5);
  \filldraw[black] (0.5,0) circle (2pt);
  \node at (0,-0.5) [anchor=east] {$#2$};
  \node at (0,0.5) [anchor=east] {$#1$};
\end{tikzpicture}}
\newcommand{\gammaice}[4]{\begin{tikzpicture}
\coordinate (a) at (-.75, 0);
\coordinate (b) at (0, .75);
\coordinate (c) at (.75, 0);
\coordinate (d) at (0, -.75);
\coordinate (aa) at (-.75,.5);
\coordinate (cc) at (.75,.5);
\draw (a)--(c);
\draw (b)--(d);
\draw[fill=white] (a) circle (.25);
\draw[fill=white] (b) circle (.25);
\draw[fill=white] (c) circle (.25);
\draw[fill=white] (d) circle (.25);
\node at (0,1) { };
\node at (a) {$#1$};
\node at (b) {$#2$};
\node at (c) {$#3$};
\node at (d) {$#4$};
\path[fill=white] (0,0) circle (.2);
\node at (0,0) {$z_i$};
\end{tikzpicture}}
\title{Computation of partition functions of free fermionic solvable lattice models via permutation graphs}
\author{Chenyang Zhong\thanks{Department of Statistics, Columbia University}}
\date{\today}
\begin{document}
\maketitle

\begin{abstract}
In this paper, we introduce a novel and general method for computing partition functions of solvable lattice models with free fermionic Boltzmann weights. The method is based on the ``permutation graph'' and the ``$F$-matrix'': the permutation graph is a generalization of the $R$-matrix, and the $F$-matrix is constructed based on the permutation graph. The method allows generalizations to lattice models that are related to Cartan types B and C. Two applications are presented: they involve an ice model related to Tokuyama's formula and another ice model representing a Whittaker function on the metaplectic double cover of $\mathrm{Sp}(2r,F)$ with $F$ being a non-archimedean local field. 
\end{abstract}

\section{Introduction}

Solvable lattice models, which originate from statistical physics, have been playing an important role in various areas of mathematics and mathematical physics, including algebraic combinatorics (e.g. \cite{BBBG,BSW,Kup,Kup2,WZ}), quantum field theory (e.g. \cite{Bax3,BPZ,DMS}), and integrable probability (e.g. \cite{BP,BW,CP,OP}). A solvable lattice model is usually based on a finite lattice, and a state of the model is a labeling of the edges of the lattice. Each state of the model is then associated with a locally determined Boltzmann weight. 

The partition function of a lattice model, which is defined as the sum of the Boltzmann weights of all admissible states of the model, is of great importance. From the point of view of statistical physics, the partition function encodes thermodynamic properties of the lattice model. Recent works (see e.g. \cite{Bor,BBF,MS2}) have also found realizations of various symmetric functions--such as the Schur, Hall-Littlewood, and Grothendieck polynomials--as partition functions of certain solvable lattice models.

The Yang-Baxter equation (cf. \cite{Jim,Maj}), also known as the ``star-triangle relation'', plays a key role in solvable lattice models. It reveals symmetries of the partition function of the lattice model. In Baxter's seminal work \cite{Bax2,Bax1} and many later works, the Yang-Baxter equation is crucially utilized to obtain explicit expressions for partition functions of solvable lattice models. 

However, going from the Yang-Baxter equation to an explicit evaluation of the partition function often requires quite a bit of non-trivial work, such as the combinatorics of Gelfand-Tsetlin patterns and Proctor patterns (see e.g. \cite{BBF,Iva}) and the Izergin-Korepin technique (see e.g. \cite{Ize,Kor,Whe}). A natural question is, for a given lattice model that satisfies the Yang-Baxter equation, is there a \underline{general} route to the computation of the partition function? This paper provides an answer for a class of solvable lattice models called ``free fermionic solvable lattice models'', which we will review in Section \ref{Sect.1.1}. A high-level overview of the method will be given in Section \ref{Sect.1.2}. 

\subsection{Free fermionic solvable lattice models}\label{Sect.1.1}

Recently, there has been a series of works that relate solvable lattice models to representation theory. This originates in the seminal work \cite{BBF}, where a parametrized Yang-Baxter equation with non-abelian parameter group is introduced. The Yang-Baxter equation corresponds to a six-vertex model with \underline{free fermionic} Boltzmann weights. ``Free fermionic'' means that, if we denote by $a_1,a_2,b_1,b_2,c_1,c_2$ the Boltzmann weights of the six-vertex model (see Section \ref{Sect.2} for details), then the constraint $a_1a_2+b_1b_2-c_1c_2=0$ is satisfied. The partition function of the six-vertex model is shown to be equal to the product of a Schur polynomial and a deformation of the Weyl denominator of the general linear group, which provides an alternative proof of Tokuyama's formula \cite{Tok}. The result of \cite{BBF} is generalized to factorial Schur functions in \cite{BMN}. Later works \cite{BBBG3,BBBG4,BBBG5,BBCFG} construct solvable lattice models whose partition functions represent metaplectic Whittaker functions and Iwahori Whittaker functions. These culminate in the work \cite{BBBG2} that constructs a supersymmetric solvable lattice model whose partition function gives metaplectic Iwahori Whittaker functions. 

The above developments are for Cartan type A. There is also a parallel line of works for other Cartan types. Hamel and King \cite{HK1,HK2} and Ivanov \cite{Iva} constructed lattice models whose partition functions equal the product of an irreducible character and a deformation of Weyl's denominator of the symplectic group $\mathrm{Sp}(2n,\mathbb{C})$. The Yang-Baxter equation as developed in \cite{BBF} is used in Ivanov's work. Brubaker et al. \cite{BBCG} constructed a solvable lattice model and made the conjecture that its partition function represents the metaplectic Whittaker function on the double cover of $\mathrm{Sp}(2n,F)$, where $F$ is a non-archimedean local field. In later works \cite{Mot,MSW}, a dual version of the model in \cite{Iva} and generalizations of the models in \cite{BBCG,Iva} are studied. In \cite{Gra}, Ivanov's model is generalized to metaplectic ice for Cartan type C. Further developments include \cite{BS,BS2,BSn}.

The lattice model in \cite{BBF} is based on a finite rectangular lattice. Each state of the model is represented by a labeling of the edges of the lattice by $\pm$ signs, which we refer to as ``spins'' in the following. Given a labeling, each vertex of the lattice is associated with a Boltzmann weight determined by the spins on its adjacent edges. The Boltzmann weight of the corresponding state is defined as the product of the Boltzmann weights of all the vertices. The lattice models in \cite{Iva} and \cite{BBCG} are also based on a finite rectangular lattice, but involve two types of vertices called $\Delta$ ice and $\Gamma$ ice that alternate in rows. There are also U-turn vertices, which we call ``cap vertices'' in this paper, that connect two adjacent rows of $\Delta$ ice and $\Gamma$ ice on the right boundary. The Boltzmann weights for these vertices will be reviewed in Section \ref{Sect.2}. The lattice models in \cite{BBF} and \cite{BBCG} will be reviewed in Sections \ref{Sect.4} and \ref{Sect.6}, respectively.

\subsection{Overview of the strategy}\label{Sect.1.2}

In this subsection, we give a high-level overview of our strategy for computing partition functions of free fermionic solvable lattice models. We focus on the lattice models in \cite{BBF} and \cite{BBCG} to illustrate the method. We expect our method to apply to a broad class of free fermionic solvable lattice models.

Throughout the paper, we identify the $+$ spin with $0$ and the $-$ spin with $1$. 

We start with the model in \cite{BBF}. Suppose that the rectangular lattice has $N$ rows and $\lambda_1+N$ columns. For each column with a given assignment of spins on the top and bottom boundary (the boundary condition encountered in this model is $\alpha\in\{0,1\}$ on the top and $0$ at the bottom), we view it as an operator called the ``column operator''. Specifically, for every $a\in\{1,\cdots,N\}$, we denote by $W_a$ a two-dimensional vector space over $\mathbb{C}$ spanned by the basis vectors $|0\rangle$ and $|1\rangle$. Then the column operator is an element of $End(W_1\otimes\cdots \otimes W_N)$. The precise definition will be given in Section \ref{Sect.3.2}. 

The partition function of the lattice model can be written as a certain component of the product of $\lambda_1+N$ column operators. The idea is to \underline{conjugate} the column operators so that the components of the conjugated operators have an explicit form. The key concepts involved in this conjugation are the ``permutation graph'' and the ``$F$-matrix'', which we will introduce in Sections \ref{Sect.3.3} and \ref{Sect.3.4}. The permutation graph is an $N$-site generalization of the $R$-matrix, which corresponds to the rotated vertex in the Yang-Baxter equation. It is an element of $End(W_1\otimes\cdots\otimes W_N)$ that depends on two permutations $\rho_1,\rho_2\in S_N$. The Yang-Baxter equation and another relation, the unitarity relation, are used to ensure that the permutation graph is well-defined. The $F$-matrix is then constructed based on the permutation graph. It is also an element of $End(W_1\otimes \cdots \otimes W_N)$, and is used to conjugate the column operators. The explicit form of the conjugated column operators are given in Proposition \ref{P3} below, based on which the partition function of the lattice model can be evaluated. 

For the model in \cite{BBCG}, in Section \ref{Sect.5}, we generalize the definitions of the column operator, the permutation graph, and the $F$-matrix to incorporate both $\Delta$ ice and $\Gamma$ ice. Another complication comes from the cap vertices on the right boundary. Suppose that the rectangular lattice has $N=2r$ rows. There are $r$ cap vertices in total, and we define a ``cap vector'' $K$ based on all the cap vertices. In view of the conjugation procedure, if we denote the $F$-matrix by $F$, we need to analyze the components of the vector $FK$. This is done with the help of the caduceus relation, an additional relation that is more specific to Cartan types B and C. The components of the corresponding operators are given in Proposition \ref{P5}. The evaluation of the partition function of this lattice model in Theorem \ref{Theorem2} gives a new proof of a conjecture made in \cite{BBCG} (the conjecture was first proved by Motegi et al. \cite{MSW} using the Izergin-Korepin technique; see below for details).

Previously, for Boltzmann weights that are related to the quantum group $U_q(\widehat{\mathfrak{sl}_2})$, the permutation graph and the $F$-matrix are introduced in \cite{MS}. These are used in \cite{MZ} to evaluate the partition functions of certain vertex models that are related to Hall-Littlewood polynomials. Those Boltzmann weights are quite different from the free fermionic Boltzmann weights considered in this paper. For free fermionic lattice models, the definition and application of the permutation graph and the $F$-matrix involve quite a bit more complications than those for $U_q(\widehat{\mathfrak{sl}_2})$ Boltzmann weights. 

In the following, we discuss previous approaches for the computation of partition functions of free fermionic solvable lattice models and compare them with the method introduced in this paper. 

One approach is based on the combinatorics of Gelfand-Tsetlin patterns (e.g. \cite{BBF}) and Proctor patterns (e.g. \cite{Iva}). Specifically, it relies on the Yang-Baxter equation (and two additional relations, the caduceus relation and the fish relation, for the model in \cite{Iva}) to establish symmetries of the partition function normalized by certain factors. Based on such symmetries, it can be shown that the normalized partition function is independent of a certain parameter of the model. Specializing this parameter to a particular value reduces the six-vertex model to a five-vertex model, and the combinatorics of Gelfand-Testlin patterns (or Proctor patterns) and the Weyl character formula are used to evaluate the partition function of the five-vertex model. This approach requires non-trivial combinatorial arguments tailored for each specific model, and does not lead to an evaluation of the partition function of the lattice model in \cite{BBCG}. Other combinatorial approaches include methods based on bijection (see e.g. \cite{HamelKing}) and methods based on lattice paths (see e.g. \cite{Okada}), which are also tailored for specific models. The method introduced in this paper is more general (for example, it applies to the lattice model in \cite{BBCG}) and less specific to the details of the models.

Another approach is based on the Izergin-Korepin technique. The Izergin-Korepin technique was introduced by Korepin \cite{Kor} and Izergin \cite{Ize}, and involves obtaining sufficiently many properties that are satisfied by a class of partition functions to uniquely identify them (these properties include initial conditions and recursive relations). In the context of free fermionic solvable lattice models, the Izergin-Korepin technique was applied in \cite{Mot2} to obtain a generalization of the Tokuyama formula for factorial Schur functions. This was later extended in \cite{Mot3,MSW} to elliptic Felderhof models and generalizations of the models in \cite{BBCG,Iva}. The partition function of \cite{BBCG} is computed through this approach in \cite{MSW} based on the generalized models. 

In applying the Izergin-Korepin technique, one needs to find the correct set of properties that can uniquely identify the partition function. This may require generalizations of the lattice model at hand and can be problem specific. Also, the Izergin-Korepin technique is a method for \underline{verifying} a given form of the partition function, and does not provide much insights on how the explicit form can be \underline{discovered}. In comparison, the method introduced here is general and leads to a \underline{direct} computation of the partition function of a particular lattice model.

\bigskip

The rest of this paper is organized as follows. In Section \ref{Sect.2}, we review the Boltzmann weights and the Yang-Baxter equations that are involved in the lattice models from \cite{BBF} and \cite{BBCG}. Two additional relations, the unitarity relation and the caduceus relation, are also discussed in this section. Then we introduce the column operator, the permutation graph, and the $F$-matrix for type A lattice models--which include the model in \cite{BBF}--in Section \ref{Sect.3}. Equipped with these tools, we apply the method outlined in Section \ref{Sect.1.2} to the lattice model in \cite{BBF} and obtain an explicit evaluation of its partition function in Section \ref{Sect.4}. Section \ref{Sect.5} extends the concepts in Section \ref{Sect.3} to type B and C lattice models, which include the model in \cite{BBCG}. These combined with the strategy described in Section \ref{Sect.1.2} lead to an explicit expression for the partition function of the model in \cite{BBCG}, giving a new proof of the conjecture made in \cite{BBCG}.

\subsection{Acknowledgement}
The author wishes to thank Daniel Bump for his encouragement and many helpful conversations.

\section{Boltzmann weights and Yang-Baxter equations}\label{Sect.2}

In Section \ref{Sect.2.1}, we review the Boltzmann weights of vertices that are involved in the lattice models from \cite{BBF} and \cite{BBCG}. Then we review the Yang-Baxter equations and the unitarity relation that are satisfied by these Boltzmann weights in Section \ref{Sect.2.2}. For the model from \cite{BBCG}, an additional relation, the caduceus relation, is also discussed.

There are three types of vertices that we consider: the ordinary vertex, the $R$-vertex, and the cap vertex. Ordinary vertices are vertices in the rectangular lattice on which the lattice model is based. There are two types of ordinary vertices called $\Delta$ ice and $\Gamma$ ice. Only $\Gamma$ ice is involved in the model from \cite{BBF}. $R$-vertices are auxiliary rotated vertices that appear in the Yang-Baxter equations. There are four types of $R$-vertices called $\Delta\Delta$ ice, $\Delta\Gamma$ ice, $\Gamma\Delta$ ice, and $\Gamma\Gamma$ ice. Only $\Gamma\Gamma$ ice is involved in the model from \cite{BBF}. Finally, cap vertices are U-turn vertices on the right boundary of the rectangular lattice that connect two adjacent rows of $\Delta$ ice and $\Gamma$ ice. They are only involved in the model from \cite{BBCG}.

\subsection{Boltzmann weights}\label{Sect.2.1}

In this subsection, we review the Boltzmann weights of the three types of vertices following \cite{BBF,BBCG}.

We start with the ordinary vertices. The Boltzmann weight of an ordinary vertex depends on the spins on the four edges that are adjacent to it and two parameters called the deformation parameter (denoted by $v\in \mathbb{C}$) and the spectral parameter (denoted by $z\in\mathbb{C}$). The deformation parameter is fixed for a given lattice model, while the spectral parameter can vary across different rows. The Boltzmann weights for $\Gamma$ ice and $\Delta$ ice are listed in Figures \ref{Figure2.1}-\ref{Figure2.2}. Here for $\Delta$ ice the signs of the spins on the horizontal edges are switched (from $+$ to $-$ and from $-$ to $+$) compared to those in \cite{BBF,BBCG} in order to simplify the presentation in later sections.

\begin{figure}[!h]
\[
\begin{array}{|c|c|c|c|c|c|}
\hline
   a_1 & a_2 & b_1 & b_2 & c_1 & c_2\\
\hline
\gammaice{+}{+}{+}{+} &
  \gammaice{-}{-}{-}{-} &
  \gammaice{+}{-}{+}{-} &
  \gammaice{-}{+}{-}{+} &
  \gammaice{-}{+}{+}{-} &
  \gammaice{+}{-}{-}{+}\\
\hline
   1 & z & -v & z & z(1-v) & 1\\
\hline\end{array}\]
\caption{Boltzmann weights for $\Gamma$ ice with deformation parameter $v$ and spectral parameter $z$}
\label{Figure2.1}
\end{figure}

\begin{figure}[!h]
\[
\begin{array}{|c|c|c|c|c|c|}
\hline
   a_1 & a_2  & b_1 &  b_2  &  c_1 & c_2\\
\hline
\gammaice{+}{+}{+}{+} &
  \gammaice{-}{-}{-}{-} &
  \gammaice{+}{-}{+}{-} &
  \gammaice{-}{+}{-}{+} &
  \gammaice{-}{+}{+}{-}  &
   \gammaice{+}{-}{-}{+}\\
\hline
   1 & -v z  & 1 &   z  &  z(1-v) & 1\\
\hline\end{array}\]
\caption{Boltzmann weights for $\Delta$ ice with deformation parameter $v$ and spectral parameter $z$}
\label{Figure2.2}
\end{figure}

Now we introduce the $R$-vertices. They are rotated vertices that appear in the Yang-Baxter equations (see Section \ref{Sect.2.2} for details). The Boltzmann weight of an $R$-vertex depends on the spins on the four edges that are adjacent to it, the deformation parameter $v\in\mathbb{C}$, and two spectral parameters $z,z'\in\mathbb{C}$. The Boltzmann weights for the four types of $R$-vertices ($\Gamma\Gamma$ ice, $\Gamma\Delta$ ice, $\Delta\Gamma$ ice, and $\Delta\Delta$ ice) are shown in Figures \ref{Figure2.3}-\ref{Figure2.6}. Here for $\Gamma\Delta$ ice, $\Delta\Gamma$ ice, and $\Delta\Delta$ ice we switch the signs of certain spins (from $+$ to $-$ and from $-$ to $+$) compared to those in \cite{BBF,BBCG} in accordance with the change for $\Delta$ ice. We also take a different normalization of the Boltzmann weights compared to that in \cite{BBF,BBCG} so that the Boltzmann weight of the $a_1$ pattern equals $1$ for the four types of $R$-vertices.  

\begin{figure}[!h]
\[\begin{array}{|c|c|c|c|c|c|}
\hline
   a_1 & a_2 & b_1 & b_2 & c_1 & c_2\\
\hline
\begin{tikzpicture}[scale=0.7]
\draw (0,0) to [out = 0, in = 180] (2,2);
\draw (0,2) to [out = 0, in = 180] (2,0);
\draw[fill=white] (0,0) circle (.35);
\draw[fill=white] (0,2) circle (.35);
\draw[fill=white] (2,0) circle (.35);
\draw[fill=white] (2,2) circle (.35);
\node at (0,0) {$+$};
\node at (0,2) {$+$};
\node at (2,2) {$+$};
\node at (2,0) {$+$};
\node at (2,0) {$+$};
\path[fill=white] (1,1) circle (.3);
\node at (1,1) {$R_{z,z'}$};
\end{tikzpicture}&
\begin{tikzpicture}[scale=0.7]
\draw (0,0) to [out = 0, in = 180] (2,2);
\draw (0,2) to [out = 0, in = 180] (2,0);
\draw[fill=white] (0,0) circle (.35);
\draw[fill=white] (0,2) circle (.35);
\draw[fill=white] (2,0) circle (.35);
\draw[fill=white] (2,2) circle (.35);
\node at (0,0) {$-$};
\node at (0,2) {$-$};
\node at (2,2) {$-$};
\node at (2,0) {$-$};
\path[fill=white] (1,1) circle (.3);
\node at (1,1) {$R_{z,z'}$};
\end{tikzpicture}&
\begin{tikzpicture}[scale=0.7]
\draw (0,0) to [out = 0, in = 180] (2,2);
\draw (0,2) to [out = 0, in = 180] (2,0);
\draw[fill=white] (0,0) circle (.35);
\draw[fill=white] (0,2) circle (.35);
\draw[fill=white] (2,0) circle (.35);
\draw[fill=white] (2,2) circle (.35);
\node at (0,0) {$+$};
\node at (0,2) {$-$};
\node at (2,2) {$+$};
\node at (2,0) {$-$};
\path[fill=white] (1,1) circle (.3);
\node at (1,1) {$R_{z,z'}$};
\end{tikzpicture}&
\begin{tikzpicture}[scale=0.7]
\draw (0,0) to [out = 0, in = 180] (2,2);
\draw (0,2) to [out = 0, in = 180] (2,0);
\draw[fill=white] (0,0) circle (.35);
\draw[fill=white] (0,2) circle (.35);
\draw[fill=white] (2,0) circle (.35);
\draw[fill=white] (2,2) circle (.35);
\node at (0,0) {$-$};
\node at (0,2) {$+$};
\node at (2,2) {$-$};
\node at (2,0) {$+$};
\path[fill=white] (1,1) circle (.3);
\node at (1,1) {$R_{z,z'}$};
\end{tikzpicture}&
\begin{tikzpicture}[scale=0.7]
\draw (0,0) to [out = 0, in = 180] (2,2);
\draw (0,2) to [out = 0, in = 180] (2,0);
\draw[fill=white] (0,0) circle (.35);
\draw[fill=white] (0,2) circle (.35);
\draw[fill=white] (2,0) circle (.35);
\draw[fill=white] (2,2) circle (.35);
\node at (0,0) {$-$};
\node at (0,2) {$+$};
\node at (2,2) {$+$};
\node at (2,0) {$-$};
\path[fill=white] (1,1) circle (.3);
\node at (1,1) {$R_{z,z'}$};
\end{tikzpicture}&
\begin{tikzpicture}[scale=0.7]
\draw (0,0) to [out = 0, in = 180] (2,2);
\draw (0,2) to [out = 0, in = 180] (2,0);
\draw[fill=white] (0,0) circle (.35);
\draw[fill=white] (0,2) circle (.35);
\draw[fill=white] (2,0) circle (.35);
\draw[fill=white] (2,2) circle (.35);
\node at (0,0) {$+$};
\node at (0,2) {$-$};
\node at (2,2) {$-$};
\node at (2,0) {$+$};
\path[fill=white] (1,1) circle (.3);
\node at (1,1) {$R_{z,z'}$};
\end{tikzpicture}\\
\hline
1&\frac{z-vz'}{z'-vz}
&\frac{v(z-z')}{z'-vz}
&\frac{z-z'}{z'-vz}
&\frac{(1-v)z}{z'-vz}
&\frac{(1-v)z'}{z'-vz}\\
\hline
\end{array}\]
\caption{Boltzmann weights for $\Gamma\Gamma$ ice with deformation parameter $v$ and spectral parameters $z,z'$}
\label{Figure2.3}
\end{figure}

\begin{figure}[!h]
\[\begin{array}{|c|c|c|c|c|c|}
\hline
   a_1 & a_2 & b_1 & b_2 & c_1 & c_2\\
\hline
\begin{tikzpicture}[scale=0.7]
\draw (0,0) to [out = 0, in = 180] (2,2);
\draw (0,2) to [out = 0, in = 180] (2,0);
\draw[fill=white] (0,0) circle (.35);
\draw[fill=white] (0,2) circle (.35);
\draw[fill=white] (2,0) circle (.35);
\draw[fill=white] (2,2) circle (.35);
\node at (0,0) {$+$};
\node at (0,2) {$+$};
\node at (2,2) {$+$};
\node at (2,0) {$+$};
\node at (2,0) {$+$};
\path[fill=white] (1,1) circle (.3);
\node at (1,1) {$R_{z,z'}$};
\end{tikzpicture}&
\begin{tikzpicture}[scale=0.7]
\draw (0,0) to [out = 0, in = 180] (2,2);
\draw (0,2) to [out = 0, in = 180] (2,0);
\draw[fill=white] (0,0) circle (.35);
\draw[fill=white] (0,2) circle (.35);
\draw[fill=white] (2,0) circle (.35);
\draw[fill=white] (2,2) circle (.35);
\node at (0,0) {$-$};
\node at (0,2) {$-$};
\node at (2,2) {$-$};
\node at (2,0) {$-$};
\path[fill=white] (1,1) circle (.3);
\node at (1,1) {$R_{z,z'}$};
\end{tikzpicture}&
\begin{tikzpicture}[scale=0.7]
\draw (0,0) to [out = 0, in = 180] (2,2);
\draw (0,2) to [out = 0, in = 180] (2,0);
\draw[fill=white] (0,0) circle (.35);
\draw[fill=white] (0,2) circle (.35);
\draw[fill=white] (2,0) circle (.35);
\draw[fill=white] (2,2) circle (.35);
\node at (0,0) {$+$};
\node at (0,2) {$-$};
\node at (2,2) {$+$};
\node at (2,0) {$-$};
\path[fill=white] (1,1) circle (.3);
\node at (1,1) {$R_{z,z'}$};
\end{tikzpicture}&
\begin{tikzpicture}[scale=0.7]
\draw (0,0) to [out = 0, in = 180] (2,2);
\draw (0,2) to [out = 0, in = 180] (2,0);
\draw[fill=white] (0,0) circle (.35);
\draw[fill=white] (0,2) circle (.35);
\draw[fill=white] (2,0) circle (.35);
\draw[fill=white] (2,2) circle (.35);
\node at (0,0) {$-$};
\node at (0,2) {$+$};
\node at (2,2) {$-$};
\node at (2,0) {$+$};
\path[fill=white] (1,1) circle (.3);
\node at (1,1) {$R_{z,z'}$};
\end{tikzpicture}&
\begin{tikzpicture}[scale=0.7]
\draw (0,0) to [out = 0, in = 180] (2,2);
\draw (0,2) to [out = 0, in = 180] (2,0);
\draw[fill=white] (0,0) circle (.35);
\draw[fill=white] (0,2) circle (.35);
\draw[fill=white] (2,0) circle (.35);
\draw[fill=white] (2,2) circle (.35);
\node at (0,0) {$-$};
\node at (0,2) {$+$};
\node at (2,2) {$+$};
\node at (2,0) {$-$};
\path[fill=white] (1,1) circle (.3);
\node at (1,1) {$R_{z,z'}$};
\end{tikzpicture}&
\begin{tikzpicture}[scale=0.7]
\draw (0,0) to [out = 0, in = 180] (2,2);
\draw (0,2) to [out = 0, in = 180] (2,0);
\draw[fill=white] (0,0) circle (.35);
\draw[fill=white] (0,2) circle (.35);
\draw[fill=white] (2,0) circle (.35);
\draw[fill=white] (2,2) circle (.35);
\node at (0,0) {$+$};
\node at (0,2) {$-$};
\node at (2,2) {$-$};
\node at (2,0) {$+$};
\path[fill=white] (1,1) circle (.3);
\node at (1,1) {$R_{z,z'}$};
\end{tikzpicture}\\
\hline
1&\frac{z'-vz}{z-vz'}
&\frac{v(z-z')}{z-vz'}
&\frac{z-z'}{z-vz'}
&\frac{(1-v)z}{z-vz'}
&\frac{(1-v)z'}{z-vz'}\\
\hline
\end{array}\]
\caption{Boltzmann weights for $\Delta\Delta$ ice with deformation parameter $v$ and spectral parameters $z,z'$}
\label{Figure2.4}
\end{figure}

\begin{figure}[!h]
\[\begin{array}{|c|c|c|c|c|c|}
\hline
   a_1 & a_2 & b_1 & b_2 & c_1 & c_2\\
\hline
\begin{tikzpicture}[scale=0.7]
\draw (0,0) to [out = 0, in = 180] (2,2);
\draw (0,2) to [out = 0, in = 180] (2,0);
\draw[fill=white] (0,0) circle (.35);
\draw[fill=white] (0,2) circle (.35);
\draw[fill=white] (2,0) circle (.35);
\draw[fill=white] (2,2) circle (.35);
\node at (0,0) {$+$};
\node at (0,2) {$+$};
\node at (2,2) {$+$};
\node at (2,0) {$+$};
\node at (2,0) {$+$};
\path[fill=white] (1,1) circle (.3);
\node at (1,1) {$R_{z,z'}$};
\end{tikzpicture}&
\begin{tikzpicture}[scale=0.7]
\draw (0,0) to [out = 0, in = 180] (2,2);
\draw (0,2) to [out = 0, in = 180] (2,0);
\draw[fill=white] (0,0) circle (.35);
\draw[fill=white] (0,2) circle (.35);
\draw[fill=white] (2,0) circle (.35);
\draw[fill=white] (2,2) circle (.35);
\node at (0,0) {$-$};
\node at (0,2) {$-$};
\node at (2,2) {$-$};
\node at (2,0) {$-$};
\path[fill=white] (1,1) circle (.3);
\node at (1,1) {$R_{z,z'}$};
\end{tikzpicture}&
\begin{tikzpicture}[scale=0.7]
\draw (0,0) to [out = 0, in = 180] (2,2);
\draw (0,2) to [out = 0, in = 180] (2,0);
\draw[fill=white] (0,0) circle (.35);
\draw[fill=white] (0,2) circle (.35);
\draw[fill=white] (2,0) circle (.35);
\draw[fill=white] (2,2) circle (.35);
\node at (0,0) {$+$};
\node at (0,2) {$-$};
\node at (2,2) {$+$};
\node at (2,0) {$-$};
\path[fill=white] (1,1) circle (.3);
\node at (1,1) {$R_{z,z'}$};
\end{tikzpicture}&
\begin{tikzpicture}[scale=0.7]
\draw (0,0) to [out = 0, in = 180] (2,2);
\draw (0,2) to [out = 0, in = 180] (2,0);
\draw[fill=white] (0,0) circle (.35);
\draw[fill=white] (0,2) circle (.35);
\draw[fill=white] (2,0) circle (.35);
\draw[fill=white] (2,2) circle (.35);
\node at (0,0) {$-$};
\node at (0,2) {$+$};
\node at (2,2) {$-$};
\node at (2,0) {$+$};
\path[fill=white] (1,1) circle (.3);
\node at (1,1) {$R_{z,z'}$};
\end{tikzpicture}&
\begin{tikzpicture}[scale=0.7]
\draw (0,0) to [out = 0, in = 180] (2,2);
\draw (0,2) to [out = 0, in = 180] (2,0);
\draw[fill=white] (0,0) circle (.35);
\draw[fill=white] (0,2) circle (.35);
\draw[fill=white] (2,0) circle (.35);
\draw[fill=white] (2,2) circle (.35);
\node at (0,0) {$-$};
\node at (0,2) {$+$};
\node at (2,2) {$+$};
\node at (2,0) {$-$};
\path[fill=white] (1,1) circle (.3);
\node at (1,1) {$R_{z,z'}$};
\end{tikzpicture}&
\begin{tikzpicture}[scale=0.7]
\draw (0,0) to [out = 0, in = 180] (2,2);
\draw (0,2) to [out = 0, in = 180] (2,0);
\draw[fill=white] (0,0) circle (.35);
\draw[fill=white] (0,2) circle (.35);
\draw[fill=white] (2,0) circle (.35);
\draw[fill=white] (2,2) circle (.35);
\node at (0,0) {$+$};
\node at (0,2) {$-$};
\node at (2,2) {$-$};
\node at (2,0) {$+$};
\path[fill=white] (1,1) circle (.3);
\node at (1,1) {$R_{z,z'}$};
\end{tikzpicture}\\
\hline
1&1
&\frac{z'-v^2z}{z'-vz}
&\frac{z-z'}{z'-vz}
&\frac{(1-v)z}{z'-vz}
&\frac{(1-v)z'}{z'-vz}\\
\hline
\end{array}\]
\caption{Boltzmann weights for $\Delta\Gamma$ ice with deformation parameter $v$ and spectral parameters $z,z'$}
\label{Figure2.5}
\end{figure}

\begin{figure}[!h]
\[\begin{array}{|c|c|c|c|c|c|}
\hline
   a_1 & a_2 & b_1 & b_2 & c_1 & c_2\\
\hline
\begin{tikzpicture}[scale=0.7]
\draw (0,0) to [out = 0, in = 180] (2,2);
\draw (0,2) to [out = 0, in = 180] (2,0);
\draw[fill=white] (0,0) circle (.35);
\draw[fill=white] (0,2) circle (.35);
\draw[fill=white] (2,0) circle (.35);
\draw[fill=white] (2,2) circle (.35);
\node at (0,0) {$+$};
\node at (0,2) {$+$};
\node at (2,2) {$+$};
\node at (2,0) {$+$};
\node at (2,0) {$+$};
\path[fill=white] (1,1) circle (.3);
\node at (1,1) {$R_{z,z'}$};
\end{tikzpicture}&
\begin{tikzpicture}[scale=0.7]
\draw (0,0) to [out = 0, in = 180] (2,2);
\draw (0,2) to [out = 0, in = 180] (2,0);
\draw[fill=white] (0,0) circle (.35);
\draw[fill=white] (0,2) circle (.35);
\draw[fill=white] (2,0) circle (.35);
\draw[fill=white] (2,2) circle (.35);
\node at (0,0) {$-$};
\node at (0,2) {$-$};
\node at (2,2) {$-$};
\node at (2,0) {$-$};
\path[fill=white] (1,1) circle (.3);
\node at (1,1) {$R_{z,z'}$};
\end{tikzpicture}&
\begin{tikzpicture}[scale=0.7]
\draw (0,0) to [out = 0, in = 180] (2,2);
\draw (0,2) to [out = 0, in = 180] (2,0);
\draw[fill=white] (0,0) circle (.35);
\draw[fill=white] (0,2) circle (.35);
\draw[fill=white] (2,0) circle (.35);
\draw[fill=white] (2,2) circle (.35);
\node at (0,0) {$+$};
\node at (0,2) {$-$};
\node at (2,2) {$+$};
\node at (2,0) {$-$};
\path[fill=white] (1,1) circle (.3);
\node at (1,1) {$R_{z,z'}$};
\end{tikzpicture}&
\begin{tikzpicture}[scale=0.7]
\draw (0,0) to [out = 0, in = 180] (2,2);
\draw (0,2) to [out = 0, in = 180] (2,0);
\draw[fill=white] (0,0) circle (.35);
\draw[fill=white] (0,2) circle (.35);
\draw[fill=white] (2,0) circle (.35);
\draw[fill=white] (2,2) circle (.35);
\node at (0,0) {$-$};
\node at (0,2) {$+$};
\node at (2,2) {$-$};
\node at (2,0) {$+$};
\path[fill=white] (1,1) circle (.3);
\node at (1,1) {$R_{z,z'}$};
\end{tikzpicture}&
\begin{tikzpicture}[scale=0.7]
\draw (0,0) to [out = 0, in = 180] (2,2);
\draw (0,2) to [out = 0, in = 180] (2,0);
\draw[fill=white] (0,0) circle (.35);
\draw[fill=white] (0,2) circle (.35);
\draw[fill=white] (2,0) circle (.35);
\draw[fill=white] (2,2) circle (.35);
\node at (0,0) {$-$};
\node at (0,2) {$+$};
\node at (2,2) {$+$};
\node at (2,0) {$-$};
\path[fill=white] (1,1) circle (.3);
\node at (1,1) {$R_{z,z'}$};
\end{tikzpicture}&
\begin{tikzpicture}[scale=0.7]
\draw (0,0) to [out = 0, in = 180] (2,2);
\draw (0,2) to [out = 0, in = 180] (2,0);
\draw[fill=white] (0,0) circle (.35);
\draw[fill=white] (0,2) circle (.35);
\draw[fill=white] (2,0) circle (.35);
\draw[fill=white] (2,2) circle (.35);
\node at (0,0) {$+$};
\node at (0,2) {$-$};
\node at (2,2) {$-$};
\node at (2,0) {$+$};
\path[fill=white] (1,1) circle (.3);
\node at (1,1) {$R_{z,z'}$};
\end{tikzpicture}\\
\hline
1&1
&\frac{v^2z'-z}{z-vz'}
&\frac{z-z'}{z-vz'}
&\frac{(1-v)z}{z-vz'}
&\frac{(1-v)z'}{z-vz'}\\
\hline
\end{array}\]
\caption{Boltzmann weights for $\Gamma\Delta$ ice with deformation parameter $v$ and spectral parameters $z,z'$}
\label{Figure2.6}
\end{figure}

Finally we introduce the cap vertices. These are used in the lattice model from \cite{BBCG}. The Boltzmann weight of a cap vertex depends on the spins on the two edges that are adjacent to it, the deformation parameter $v\in\mathbb{C}$, and the spectral parameter $z\in\mathbb{C}$. The Boltzmann weights are shown in Figure \ref{Figure2.7}. Here, the sign of the spin on the top edge is switched (from $+$ to $-$ and from $-$ to $+$) in accordance with the change for $\Delta$ ice. We note that the model from \cite{Iva} involves a different set of Boltzmann weights for the cap vertices. Our method applies to that model as well.

\begin{figure}[!h]
\[
\begin{array}{|c|c|c|c|c|c|}
\hline
\text{Cap} &\caps{-}{+} & \caps{+}{-} \\
\hline
\text{Boltzmann weight}  & -\sqrt{v} z^{1\slash 2}  &  z^{-1\slash 2} \\
\hline\end{array}\]
\caption{Boltzmann weights for a cap vertex with deformation parameter $v$ and spectral parameter $z$}
\label{Figure2.7}
\end{figure}

\subsection{Yang-Baxter equations, unitarity relation, and caduceus relation}\label{Sect.2.2}

In this subsection, we review several relations that are satisfied by the vertices introduced in Section \ref{Sect.2.1}. These include two sets of Yang-Baxter equations known as the ``$RTT$ relation'' and the ``$RRR$ relation'', the unitarity relation, and the caduceus relation.

We start with the Yang-Baxter equations. Two ordinary vertices and an $R$-vertex satisfy the following set of Yang-Baxter equations known as the ``$RTT$ relation''. These relations were obtained in \cite[Theorem 9]{BBF2} (\cite{BBF2} is the arXiv version of \cite{BBF}).

\begin{proposition}[\cite{BBF2}, Theorem 8]\label{YBE1}
For any $X,Y\in \{\Gamma,\Delta\}$ the following holds. Assume that $S$ is $X$ ice with spectral parameter $z_i$, $T$ is $Y$ ice with spectral parameter $z_j$, and $R$ is $XY$ ice with spectral parameters $z_i,z_j$. Then the partition functions of the following two configurations are equal for any fixed combination of spins $a,b,c,d,e,f$.
\begin{equation}
\hfill
\begin{tikzpicture}[baseline=(current bounding box.center)]
  \draw (0,1) to [out = 0, in = 180] (2,3) to (4,3);
  \draw (0,3) to [out = 0, in = 180] (2,1) to (4,1);
  \draw (3,0) to (3,4);
  \draw[fill=white] (0,1) circle (.3);
  \draw[fill=white] (0,3) circle (.3);
  \draw[fill=white] (3,4) circle (.3);
  \draw[fill=white] (4,3) circle (.3);
  \draw[fill=white] (4,1) circle (.3);
  \draw[fill=white] (3,0) circle (.3);
  \draw[fill=white] (2,3) circle (.3);
  \draw[fill=white] (2,1) circle (.3);
  \draw[fill=white] (3,2) circle (.3);
  \node at (0,1) {$a$};
  \node at (0,3) {$b$};
  \node at (3,4) {$c$};
  \node at (4,3) {$d$};
  \node at (4,1) {$e$};
  \node at (3,0) {$f$};
  \node at (2,3) {$g$};
  \node at (3,2) {$h$};
  \node at (2,1) {$i$};
\filldraw[black] (3,3) circle (2pt);
\node at (3,3) [anchor=south west] {$S$};
\filldraw[black] (3,1) circle (2pt);
\node at (3,1) [anchor=north west] {$T$};
\filldraw[black] (1,2) circle (2pt);
\node at (1,2) [anchor=west] {$R$};
\end{tikzpicture}\qquad\qquad
\begin{tikzpicture}[baseline=(current bounding box.center)]
  \draw (0,1) to (2,1) to [out = 0, in = 180] (4,3);
  \draw (0,3) to (2,3) to [out = 0, in = 180] (4,1);
  \draw (1,0) to (1,4);
  \draw[fill=white] (0,1) circle (.3);
  \draw[fill=white] (0,3) circle (.3);
  \draw[fill=white] (1,4) circle (.3);
  \draw[fill=white] (4,3) circle (.3);
  \draw[fill=white] (4,1) circle (.3);
  \draw[fill=white] (1,0) circle (.3);
  \draw[fill=white] (2,3) circle (.3);
  \draw[fill=white] (1,2) circle (.3);
  \draw[fill=white] (2,1) circle (.3);
  \node at (0,1) {$a$};
  \node at (0,3) {$b$};
  \node at (1,4) {$c$};
  \node at (4,3) {$d$};
  \node at (4,1) {$e$};
  \node at (1,0) {$f$};
  \node at (2,3) {$j$};
  \node at (1,2) {$k$};
  \node at (2,1) {$l$};
\filldraw[black] (1,3) circle (2pt);
\node at (1,3) [anchor=south west] {$T$};
\filldraw[black] (1,1) circle (2pt);
\node at (1,1) [anchor=north west]{$S$};
\filldraw[black] (3,2) circle (2pt);
\node at (3,2) [anchor=west] {$R$};
\end{tikzpicture}
\end{equation}
\end{proposition}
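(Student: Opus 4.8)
The plan is to deduce this from the cited result of \cite{BBF2}, since the statement is exactly the star--triangle (Yang--Baxter) relation in ``$RTT$'' form for a six-vertex model with free fermionic Boltzmann weights, and then to reconcile it with the bookkeeping conventions of Section~\ref{Sect.2.1}. I will also indicate how the self-contained verification goes, as that is where the real content lies.

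The backbone of any proof is the \emph{conservation (ice) rule}. Reading off Figures~\ref{Figure2.1}--\ref{Figure2.2}, every ordinary vertex---$\Gamma$ or $\Delta$ ice---has the same number of $-$ spins on its west and north edges as on its east and south edges, and reading off Figures~\ref{Figure2.3}--\ref{Figure2.6}, every $R$-vertex has the same number of $-$ spins on its two left legs as on its two right legs. Summing the three local balances around either hexagon in the statement, all interior edges cancel, leaving the single global balance: the number of $-$ spins among $a,b,c$ equals the number among $d,e,f$. Hence both partition functions vanish unless $(a,b,c,d,e,f)$ satisfies this, so it suffices to treat the twenty admissible boundary sextuples, for each of the four ice-type pairs $(X,Y)\in\{\Gamma,\Delta\}^2$. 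For a fixed admissible boundary, the three interior edges satisfy three local conservation equations whose sum is the (already granted) global balance, so only two are independent and the interior spins are pinned down to at most two states; one lists these, writes each side of the equation as the resulting one- or two-term sum of products of Boltzmann weights, and is left with a rational-function identity in $z_i,z_j,v$. Each such identity follows from the free fermionic relation $a_1a_2+b_1b_2=c_1c_2$ obeyed by all six weight tables (for $\Gamma\Gamma$ ice, for instance, it reduces to $(z_i-vz_j)(z_j-vz_i)+v(z_i-z_j)^2=(1-v)^2z_iz_j$), and one clears denominators and checks it directly.

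To instead lift the identity from \cite{BBF2} rather than redo the enumeration, I would observe that the tables here differ from those of \cite{BBF2} only by (i) an overall rescaling of each $R$-vertex table by a factor depending on $z_i,z_j,v$ alone, which multiplies both hexagon partition functions by the same scalar and hence preserves their equality; and (ii) the spin relabellings recorded in Section~\ref{Sect.2.1}, namely flipping the spins on the horizontal edges of $\Delta$ ice and on the matching legs of the $\Delta\Gamma$, $\Gamma\Delta$, and $\Delta\Delta$ $R$-vertices. The main obstacle is organizational rather than conceptual: one must track (ii) consistently through \emph{both} hexagon diagrams---keeping in mind that the $R$-vertex interchanges the two horizontal strands, so the strand carrying the ``$\Delta$'' labels meets different boundary legs on the two sides---and then check that the net effect on the pair of hexagon sums is a symmetry of the equation. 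Carrying this out, along with the rescaling bookkeeping and the fact that all four pairs $(X,Y)$ must be handled (not only the $\Gamma\Gamma$ case that enters the model of \cite{BBF}), is the bulk of the work, whereas the underlying mechanism is just the conservation law together with the free fermionic identities above.
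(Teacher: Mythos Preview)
Your proposal is correct, and in fact goes further than the paper does: the paper offers no proof of this proposition at all, treating it as a citation to \cite{BBF2} (Theorem~8 there), so your sketch of how to either reduce to that reference via the rescaling and spin-flip bookkeeping of Section~\ref{Sect.2.1}, or verify the $20\times 4$ boundary cases directly using conservation plus the free fermionic identity, is more than the paper itself provides.

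One small correction: the attribution in the paper's own label is to Theorem~8 of \cite{BBF2}, but the surrounding text says ``Theorem~9''; regardless, the content is the standard $RTT$ relation for the free fermionic six-vertex model, and your outline of the argument is the right one.
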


Three $R$-vertices satisfy another set of Yang-Baxter equations known as the ``$RRR$ relation''. These relations were obtained in \cite[Theorem 10]{BBF2}.

\begin{proposition}[\cite{BBF2}, Theorem 10]\label{YBE2}
For any $X,Y,Z\in\{\Gamma,\Delta\}$ the following holds. Assume that $R$ is $XY$ ice with spectral parameters $z_i,z_j$, $S$ is $XZ$ ice with spectral parameters $z_i,z_k$, and $T$ is $YZ$ ice with spectral parameters $z_j,z_k$. Then the partition functions of the following two configurations are equal for any fixed combination of spins $a,b,c,d,e,f$.
\begin{equation}
\hfill
\begin{tikzpicture}[baseline=(current bounding box.center)]
  \draw (0,0) to [out = 0, in = 180] (1.5,1.5) to [out = 0, in = 180] (3,3) to (4.5,3);
  \draw (0,3) to (1.5,3) to [out = 0, in = 180] (3,1.5) to [out=0,in=180] (4.5,0);
  \draw (0,1.5) to [out=0,in=180] (1.5,0) to (3,0) to [out=0, in=180] (4.5,1.5);
  \draw[fill=white] (0,0) circle (.3);
  \draw[fill=white] (0,1.5) circle (.3);
  \draw[fill=white] (0,3) circle (.3);
  \draw[fill=white] (2.25,0) circle (.3);
  \draw[fill=white] (1.5,1.5) circle (.3);
  \draw[fill=white] (3,1.5) circle (.3);
  \draw[fill=white] (4.5,0) circle (.3);
  \draw[fill=white] (4.5,1.5) circle (.3);
  \draw[fill=white] (4.5,3) circle (.3);
  \node at (0,0) {$a$};
  \node at (0,1.5) {$b$};
  \node at (0,3) {$c$};
  \node at (4.5,3) {$d$};
  \node at (4.5,1.5) {$e$};
  \node at (4.5,0) {$f$};
  \node at (1.5,1.5) {$g$};
  \node at (2.25,0) {$h$};
  \node at (3,1.5) {$i$};
\filldraw[black] (0.75,0.75) circle (2pt);
\node at (0.75,0.75) [anchor=west] {$R$};
\filldraw[black] (2.25,2.25) circle (2pt);
\node at (2.25,2.25) [anchor=west] {$S$};
\filldraw[black] (3.75,0.75) circle (2pt);
\node at (3.75,0.75) [anchor=west] {$T$};
\end{tikzpicture}\qquad\qquad
\begin{tikzpicture}[baseline=(current bounding box.center)]
  \draw (0,0) to (1.5,0) to [out = 0, in = 180] (3,1.5) to [out=0, in=180] (4.5,3);
  \draw (0,3) to [out=0,in=180] (1.5,1.5) to [out = 0, in = 180] (3,0) to (4.5,0);
  \draw (0,1.5) to [out=0,in=180] (1.5,3)  to (3,3) to [out=0, in=180] (4.5,1.5);
  \draw[fill=white] (0,0) circle (.3);
  \draw[fill=white] (0,1.5) circle (.3);
  \draw[fill=white] (0,3) circle (.3);
  \draw[fill=white] (4.5,0) circle (.3);
  \draw[fill=white] (4.5,1.5) circle (.3);
  \draw[fill=white] (4.5,3) circle (.3);
  \draw[fill=white] (2.25,3) circle (.3);
  \draw[fill=white] (1.5,1.5) circle (.3);
  \draw[fill=white] (3,1.5) circle (.3);
  \node at (0,0) {$a$};
  \node at (0,1.5) {$b$};
  \node at (0,3) {$c$};
  \node at (4.5,3) {$d$};
  \node at (4.5,1.5) {$e$};
  \node at (4.5,0) {$f$};
  \node at (2.25,3) {$j$};
  \node at (1.5,1.5) {$k$};
  \node at (3,1.5) {$l$};
\filldraw[black] (0.75,2.25) circle (2pt);
\node at (0.75,2.25) [anchor=west] {$T$};
\filldraw[black] (2.25,0.75) circle (2pt);
\node at (2.25,0.75) [anchor=west]{$S$};
\filldraw[black] (3.75,2.25) circle (2pt);
\node at (3.75,2.25) [anchor=west] {$R$};
\end{tikzpicture}
\end{equation}
\end{proposition}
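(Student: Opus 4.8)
\emph{Proof strategy.} This coincides with \cite[Theorem 10]{BBF2}; here I outline how I would establish it directly. The plan is to convert the equality of partition functions into a short list of rational-function identities in $z_i,z_j,z_k$ and $v$, and to verify those. First I would record a conservation law: reading off the six admissible patterns in Figures \ref{Figure2.3}--\ref{Figure2.6}, every $R$-vertex carries the multiset of spins on its two incoming (left) edges to the multiset on its two outgoing (right) edges; in particular the number of $-$ spins on the left equals the number on the right. Applying this to both diagrams in the statement, a nonzero internal state must have the same number $n$ of $-$ spins among $\{a,b,c\}$ as among $\{d,e,f\}$, and the internal edges must carry a permutation of the boundary spins. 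This splits the check into the four sectors $n\in\{0,1,2,3\}$. The sectors $n=0$ and $n=3$ are immediate: every edge is forced to be $+$, resp.\ $-$; each diagram has a single internal state; and the two diagrams contribute the same product of three $a_1$-weights ($=1$), resp.\ the same product of three $a_2$-weights (scalars, multiplied in a different order).

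The substance is the sector $n=1$. Here exactly one of $a,b,c$ and one of $d,e,f$ equals $-$, and the contributing internal states are those in which this lone $-$ travels along a path through the braid of three strands while every other edge is $+$. The sum over internal states on each side then collapses to a product of three $2\times2$ ``one-particle scattering'' matrices --- each assembled from the $a_2$-, $b_1$-, $b_2$-, $c_1$-, $c_2$-weights of the relevant $R$-vertex and acting on the two-dimensional space recording which of the two strands at a crossing carries the $-$ --- and the $RRR$ relation in this sector is exactly the braid relation $\check R_{12}\check R_{13}\check R_{23}=\check R_{23}\check R_{13}\check R_{12}$ (with spectral parameters $z_i$ on strand $1$, $z_j$ on strand $2$, $z_k$ on strand $3$) for those matrices. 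That is a handful of rational identities, one per placement of the $-$, and I would verify them using the free fermion relation $a_1a_2+b_1b_2-c_1c_2=0$ together with the explicit weights. The sector $n=2$ I would then handle as the mirror image under the global swap $+\leftrightarrow-$, which exchanges $a_1\leftrightarrow a_2$, $b_1\leftrightarrow b_2$, $c_1\leftrightarrow c_2$ in Figures \ref{Figure2.3}--\ref{Figure2.6}: now the lone $+$ ``hole'' plays the role of the particle and the computation has the same shape with relabeled weights.

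Finally I would run this through all $2^3=8$ choices of $(X,Y,Z)\in\{\Gamma,\Delta\}^3$, taking the weights from the four tables in Figures \ref{Figure2.3}--\ref{Figure2.6}. A cleaner alternative, which sidesteps the case split, is to observe that after a diagonal gauge transformation the four $R$-vertex weight systems are the values of a single free-fermionic $R$-matrix at four points of the non-abelian parameter group of \cite{BBF}, for which the parametrized Yang--Baxter equation is already known, so that the $RRR$ relation for all type combinations is subsumed at once. I do not expect any single hard estimate: the obstacle is organizational --- enumerating the contributing internal states correctly sector by sector and keeping the rational-function bookkeeping uniform across all eight type combinations and all boundary data. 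The genuine content sits in the one-particle sector, and the precise spot where the argument could conceivably break (but does not) is that the braid relation there closes if and only if the free fermion condition holds.
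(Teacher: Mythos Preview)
The paper does not supply a proof of this proposition at all: it simply records the statement and attributes it to \cite[Theorem~10]{BBF2}. So there is nothing to compare on the paper's side beyond the citation.

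Your outlined direct verification is sound and is the standard route if one wants to avoid invoking the reference. The spin-conservation reduction to sectors $n\in\{0,1,2,3\}$ is correct; the $n=0,3$ cases are indeed trivial; and the organization of the $n=1$ sector as a one-particle braid identity for the $2\times 2$ blocks built from $b_1,b_2,c_1,c_2$ (with $a_1$ on the spectator strand) is exactly how such checks are carried out. The particle--hole swap argument for $n=2$ is also fine as stated, since you are verifying identities for the explicit weights in Figures~\ref{Figure2.3}--\ref{Figure2.6} rather than for an abstract family. Your closing alternative---absorbing all eight $(X,Y,Z)$ choices into the parametrized Yang--Baxter equation of \cite{BBF} after a gauge---is precisely what the paper's citation of \cite{BBF2} amounts to, so in that sense your second route and the paper's route coincide.

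One small caution: the remark that ``the braid relation there closes if and only if the free fermion condition holds'' overstates things. For three $R$-matrices with independent spectral parameters, the free fermion constraint $a_1a_2+b_1b_2=c_1c_2$ on each vertex is necessary but not by itself sufficient for the $RRR$ relation; what makes the identities close here is the specific rational parametrization of the weights in Figures~\ref{Figure2.3}--\ref{Figure2.6} (equivalently, membership in the non-abelian parameter group of \cite{BBF}). In your actual verification you would be plugging in those explicit weights anyway, so this does not affect the correctness of the plan---just the rhetoric.
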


The $R$-vertices also satisfy the unitarity relation as given by the following theorem. We note again that the normalization of the $R$-vertices here is different from that of \cite{BBF,BBCG}.

\begin{proposition}\label{Unit}
For any $X,Y\in\{\Gamma,\Delta\}$ the following holds. Assume that $S$ is $XY$ ice with spectral parameters $z_i,z_j$, and $T$ is $YX$ ice with spectral parameters $z_j,z_i$. Then the partition function of the following two configurations are equal for any fixed combination of spins $a,b,c,d$. Here, the partition function of the right configuration is $\mathbbm{1}_{a=d} \mathbbm{1}_{b=c}$.
\begin{equation}
\hfill
\begin{tikzpicture}[baseline=(current bounding box.center)]
  \draw (0,0) to [out = 0, in = 180] (1.5,1.5) to [out = 0, in = 180] (3,0);
  \draw (0,1.5) to [out=0,in=180] (1.5,0) to [out=0,in=180] (3,1.5);
  \draw[fill=white] (0,0) circle (.3);
  \draw[fill=white] (0,1.5) circle (.3);
  \draw[fill=white] (1.5,0) circle (.3);
  \draw[fill=white] (1.5,1.5) circle (.3);
  \draw[fill=white] (3,0) circle (.3);
  \draw[fill=white] (3,1.5) circle (.3);
  \node at (0,0) {$a$};
  \node at (0,1.5) {$b$};
  \node at (1.5,0) {$e$};
  \node at (1.5,1.5) {$f$};
  \node at (3,0) {$d$};
  \node at (3,1.5) {$c$};
\filldraw[black] (0.75,0.75) circle (2pt);
\node at (0.75,0.75) [anchor=west] {$S$};
\filldraw[black] (2.25,0.75) circle (2pt);
\node at (2.25,0.75) [anchor=west] {$T$};
\end{tikzpicture}\qquad\qquad
\begin{tikzpicture}[baseline=(current bounding box.center)]
 \draw (0,0) to (3,0);
  \draw (0,1.5) to (3,1.5);
  \draw[fill=white] (0,0) circle (.3);
  \draw[fill=white] (0,1.5) circle (.3);
  \draw[fill=white] (3,0) circle (.3);
  \draw[fill=white] (3,1.5) circle (.3);
  \node at (0,0) {$a$};
  \node at (0,1.5) {$b$};
  \node at (3,0) {$d$};
  \node at (3,1.5) {$c$};
\end{tikzpicture}
\end{equation}
\end{proposition}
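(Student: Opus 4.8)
The plan is to verify the identity directly by a finite case-check, exploiting the structure of the ice rule. First I would observe that the left-hand configuration has two $R$-vertices $S$ (type $XY$, parameters $z_i,z_j$) and $T$ (type $YX$, parameters $z_j,z_i$) glued along the two interior edges labelled $e,f$, and that the partition function of that configuration is $\sum_{e,f}\mathrm{wt}_S(a,b;e,f)\,\mathrm{wt}_T(e,f;d,c)$, where the sum runs over the four choices $e,f\in\{+,-\}$. The ice rule (conservation of the number of $-$ spins at each vertex) immediately forces $a+b=e+f=c+d$ as multisets of spins in any nonzero term, so the claimed right-hand side $\mathbbm{1}_{a=d}\mathbbm{1}_{b=c}$ can only ever fail to match when $\{a,b\}=\{c,d\}$; all other spin combinations give $0$ on both sides, which I would dispose of first.

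Next I would split into the admissible cases according to the multiset $\{a,b\}$. The two ``extreme'' cases $a=b=c=d=+$ and $a=b=c=d=-$ are trivial: each involves a single product of an $a_1$-type weight (which equals $1$ by our chosen normalization, see Figures \ref{Figure2.3}--\ref{Figure2.6}) with an $a_2$-type weight, and one checks from the tables that $a_1^{S}a_2^{T}=a_2^{S}a_1^{T}=1$ (the $a_2$ weights of $S$ and $T$ are reciprocals of each other since swapping $(z_i,z_j)\mapsto(z_j,z_i)$ inverts the relevant ratios). The substantive cases are $\{a,b\}=\{+,-\}$, i.e. $(a,b)\in\{(+,-),(-,+)\}$ with $(d,c)$ ranging over the same two options. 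Here the interior pair $(e,f)$ must also be one of $(+,-),(-,+)$, so there are exactly two terms in each sum, built from the $b_1,b_2,c_1,c_2$ weights. I would then write out the $2\times 2$ ``transfer'' picture: the matrix $M_S$ with entries $\mathrm{wt}_S$ on the two-dimensional space spanned by $(+,-)$ and $(-,+)$, likewise $M_T$, and check $M_S M_T = I$. Concretely this amounts to the two scalar identities (diagonal) $b_1^S b_1^T + c_1^S c_2^T = 1$ and $b_2^S b_2^T + c_2^S c_1^T = 1$, together with the two off-diagonal vanishings $b_1^S c_2^T + c_1^S b_2^T = 0$ and $c_2^S b_1^T + b_2^S c_1^T = 0$; I'd do this once for each of the four type-combinations $XY\in\{\Gamma\Gamma,\Gamma\Delta,\Delta\Gamma,\Delta\Delta\}$, noting that in each case $T=YX$ is obtained from $S=XY$ by the transposition $z_i\leftrightarrow z_j$, which sends e.g. the $\Gamma\Gamma$ denominator $z_j-vz_i$ to $z_i-vz_j$ and the $\Gamma\Delta$/$\Delta\Gamma$ weights into each other (so the $\Gamma\Gamma$ and $\Delta\Delta$ checks are genuinely of $XX$-type, while the mixed $\Gamma\Delta$ check reuses the $\Delta\Gamma$ table).

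The main obstacle — though it is bookkeeping rather than conceptual — is keeping the normalization and the sign conventions straight: because we have (i) renormalized every $R$-vertex so the $a_1$ weight is $1$, (ii) flipped the horizontal spins for $\Delta$ ice and the relevant $R$-vertices, and (iii) flipped the top spin of the cap, the weights in Figures \ref{Figure2.3}--\ref{Figure2.6} are not literally those of \cite{BBF,BBCG}, so one cannot simply cite their unitarity statement. I would therefore be careful that the off-diagonal cancellations really use the minus sign hidden in the $b$-weights (e.g. $b_1^{\Gamma\Gamma}=\tfrac{v(z-z')}{z'-vz}$ versus $b_2^{\Gamma\Gamma}=\tfrac{z-z'}{z'-vz}$) and that the common denominators $(z_j-vz_i)(z_i-vz_j)$ clear correctly. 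An alternative, slicker route I would mention is to deduce this from the $RTT$ relation of Proposition \ref{YBE1} by a standard argument (the product $M_S M_T$ commutes with all monodromy matrices and is a scalar by an irreducibility/degree count, hence equals the identity after fixing its value at one spin configuration), but for a six-vertex model the direct $2\times2$ verification is shorter and self-contained, so that is the route I would actually write out.
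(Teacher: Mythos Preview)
Your approach is correct and coincides with the paper's: the paper's entire proof is the single sentence ``The relation is checked using a SAGE program,'' i.e.\ exactly the finite case-check you outline, done by machine rather than by hand.

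Two small bookkeeping slips to fix if you actually write it out. First, the two extreme cases give $a_1^S a_1^T=1$ and $a_2^S a_2^T=1$ respectively, not a mixed $a_1\cdot a_2$ product; your parenthetical reason (``the $a_2$ weights of $S$ and $T$ are reciprocals under $z_i\leftrightarrow z_j$'') is the correct one for the all-$-$ case. Second, in the mixed $\{+,-\}$ sector your four $b,c$ identities are not quite right: if you index the $2\times 2$ block by $(\text{bottom},\text{top})\in\{(+,-),(-,+)\}$ using the R-vertex convention (BL, TL, TR, BR)$=(a,b,c,d)$, then $M_S=\begin{pmatrix} c_2^S & b_1^S\\ b_2^S & c_1^S\end{pmatrix}$, and the identities to check are $c_2^S c_2^T+b_1^S b_2^T=1$, $b_2^S b_1^T+c_1^S c_1^T=1$ (diagonals) together with $c_2^S b_1^T+b_1^S c_1^T=0$, $b_2^S c_2^T+c_1^S b_2^T=0$ (off-diagonals). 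Your identity $b_1^S b_1^T+c_1^S c_2^T$ does not simplify to $1$ for the $\Gamma\Gamma$ weights, so re-derive the matrix entries carefully before substituting.
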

\begin{proof}
The relation is checked using a SAGE program.
\end{proof}

The four types of $R$-vertices and the cap vertices also satisfy the following relation called the ``caduceus relation''. It is used in the lattice model from \cite{BBCG}. Note that the normalization of the $R$-vertices here is different from that in \cite{BBCG}.

\begin{proposition}[\cite{BBCG}]\label{caduc}
Assume that $A$ is $\Delta\Delta$ ice of spectral parameters $z_i,z_j$, $B$ is $\Gamma\Gamma$ ice of spectral parameters $z_i^{-1},z_j^{-1}$, $C$ is $\Gamma\Delta$ ice of spectral parameters $z_i^{-1},z_j$, and $D$ is $\Delta\Gamma$ ice of spectral parameters $z_i,z_j^{-1}$. Also assume that the cap vertices $K_1,K_2$ have spectral parameters $z_i$ and $z_j$, respectively. Denote by $Z(I_1(\epsilon_1,\epsilon_2,\epsilon_3,\epsilon_4))$ the partition function of the following configuration with fixed combination of spins $\epsilon_1,\epsilon_2,\epsilon_3,\epsilon_4$.
\begin{equation}
\label{eqn:caduceus1}
\hfill
I_1(\epsilon_1,\epsilon_2,\epsilon_3,\epsilon_4)=
\begin{tikzpicture}[baseline=(current bounding box.center)]
  \draw (0,0) to (0.6,0) to [out=0, in=180] (3,2);
  \draw (0,3) to (0.6,3) to [out=0, in=180] (3,1);
  \draw (0,1) to [out=0, in=180] (2.4,3) to (3,3) ;
  \draw (0,2) to [out=0, in=180] (2.4,0) to (3,0);
  \draw (3,2) arc(-90:90:0.5);
  \draw (3,0) arc(-90:90:0.5);
  \filldraw[black] (3.5,0.5) circle (2pt);
  \filldraw[black] (3.5,2.5) circle (2pt);
  \filldraw[black] (0.9,1.5) circle (2pt);
  \filldraw[black] (2.1,1.5) circle (2pt);
  \filldraw[black] (1.5,0.5) circle (2pt);
  \filldraw[black] (1.5,2.5) circle (2pt);
  \node at (0.9,1.5) [anchor=south] {$D$};
  \node at (2.1,1.5) [anchor=south] {$C$};
  \node at (1.5,0.5) [anchor=south] {$B$};
  \node at (1.5,2.5) [anchor=south] {$A$};
  \node at (0,0) [anchor=east] {$\epsilon_4$};
  \node at (0,1) [anchor=east] {$\epsilon_3$};
  \node at (0,3) [anchor=east] {$\epsilon_1$};
  \node at (0,2) [anchor=east] {$\epsilon_2$};
  \node at (3.5,2.5) [anchor=west] {$K_1$};
  \node at (3.5,0.5) [anchor=west] {$K_2$};
\end{tikzpicture}
\end{equation}
Also denote by $Z(I_2(\epsilon_1,\epsilon_2,\epsilon_3,\epsilon_4))$ the partition function of the following configuration with fixed combination of spins $\epsilon_1,\epsilon_2,\epsilon_3,\epsilon_4$.
\begin{equation}
\hfill
I_2(\epsilon_1,\epsilon_2,\epsilon_3,\epsilon_4)=
\begin{tikzpicture}[baseline=(current bounding box.center)]
  \draw (0,2) arc(-90:90:0.5);
  \draw (0,0) arc(-90:90:0.5);
  \filldraw[black] (0.5,0.5) circle (2pt);
  \filldraw[black] (0.5,2.5) circle (2pt);
  \node at (0,0) [anchor=east] {$\epsilon_4$};
  \node at (0,1) [anchor=east] {$\epsilon_3$};
  \node at (0,3) [anchor=east] {$\epsilon_1$};
  \node at (0,2) [anchor=east] {$\epsilon_2$};
  \node at (0.5,2.5) [anchor=west] {$K_2$};
  \node at (0.5,0.5) [anchor=west] {$K_1$};
\end{tikzpicture}
\end{equation}
Then for any fixed combination of spins $\epsilon_1,\epsilon_2,\epsilon_3,\epsilon_4$, we have
\begin{equation}
    Z(I_1(\epsilon_1,\epsilon_2,\epsilon_3,\epsilon_4))=\frac{z_j-vz_i}{z_i-vz_j}Z(I_2(\epsilon_1,\epsilon_2,\epsilon_3,\epsilon_4)).
\end{equation}
\end{proposition}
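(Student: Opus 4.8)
The identity is local: for each fixed choice of external spins $\epsilon_1,\epsilon_2,\epsilon_3,\epsilon_4$, both $Z(I_1(\epsilon_1,\epsilon_2,\epsilon_3,\epsilon_4))$ and $Z(I_2(\epsilon_1,\epsilon_2,\epsilon_3,\epsilon_4))$ are finite sums of products of Boltzmann weights over the internal edges. Since the cap weights in Figure~\ref{Figure2.7} vanish unless the two adjacent spins are opposite, and the ice rule for the $R$-vertices in Figures~\ref{Figure2.3}--\ref{Figure2.6} conserves the number of $-$ spins, most of the $2^4$ external assignments contribute $0$ on both sides; a brute-force comparison over the remaining assignments (as was done for the unitarity relation, Proposition~\ref{Unit}) already settles the claim. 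What follows is a more conceptual route that also explains the prefactor.

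Group the four rows of $I_1$ into two ``cables'': the $i$-cable formed by the $\Delta$ strand of spectral parameter $z_i$ together with the $\Gamma$ strand of parameter $z_i^{-1}$, and the $j$-cable formed by the $\Delta$ strand of parameter $z_j$ and the $\Gamma$ strand of parameter $z_j^{-1}$. With this grouping the four $R$-vertices $A$ ($\Delta\Delta$ ice, parameters $z_i,z_j$), $B$ ($\Gamma\Gamma$ ice, parameters $z_i^{-1},z_j^{-1}$), $C$ ($\Gamma\Delta$ ice), and $D$ ($\Delta\Gamma$ ice) are precisely the four elementary crossings that carry the $i$-cable past the $j$-cable, and the cap vertices $K_1$ (parameter $z_i$) and $K_2$ (parameter $z_j$) close the $i$-cable and the $j$-cable respectively; the configuration $I_2$ is the same pair of closed cables with the crossing undone. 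The plan is: (i) use the $RRR$ relation, Proposition~\ref{YBE2}, to reorder the four elementary crossings so that two of them come to lie immediately to the left of one cap vertex; (ii) establish a ``fish''-type identity for a single $R$-vertex one of whose pairs of legs is closed by a cap --- rewriting it as a cap with an appropriate spectral parameter times a scalar --- which is itself a small two-vertex check from Figures~\ref{Figure2.3}--\ref{Figure2.6} and Figure~\ref{Figure2.7}; (iii) apply the fish identity repeatedly to absorb all four $R$-vertices into the two caps, recovering the configuration $I_2$ and collecting the total scalar. One expects the contributions of $B$, $C$, and $D$ to telescope and leave exactly the factor $\frac{z_j-vz_i}{z_i-vz_j}$, which one recognizes as the $a_2$ weight of $\Delta\Delta$ ice with parameters $z_i,z_j$ (Figure~\ref{Figure2.4}), equivalently the $a_2$ weight of $\Gamma\Gamma$ ice with parameters $z_i^{-1},z_j^{-1}$ (Figure~\ref{Figure2.3}).

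The hard part is the scalar bookkeeping in steps (ii)--(iii): four distinct ice types are involved, the $\Gamma$ strands carry the inverted parameters $z_i^{-1},z_j^{-1}$, and the cap weights themselves contribute factors $z^{\pm 1/2}$ and $\sqrt v$; one must verify that everything combines to the stated prefactor with no leftover dependence on $z_i$, $z_j$, or $v$, and that at each stage the ice types match so that Proposition~\ref{YBE2} --- and, whenever a pair of opposite crossings is created and removed, the unitarity relation of Proposition~\ref{Unit} --- actually applies. Should a clean telescoping resist, the direct verification over the finitely many admissible external spin assignments indicated above is unconditional and finishes the proof.
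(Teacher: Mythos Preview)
The paper does not give its own proof of this proposition: it is quoted from \cite{BBCG} and stated without argument, exactly as Proposition~\ref{Unit} just above it is disposed of by ``checked using a SAGE program.'' Your fallback --- a direct finite check over the admissible external spin assignments $(\epsilon_1,\epsilon_2,\epsilon_3,\epsilon_4)$ --- is therefore already at the level of what the paper does for its comparable local identities, and it is a complete and correct proof.

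Your conceptual route (cable the $i$- and $j$-strands, use Proposition~\ref{YBE2} to reorder the four crossings, then absorb $R$-vertices into caps via a fish-type identity) is the standard way the caduceus relation is derived in the literature --- indeed the paper itself alludes to ``the caduceus relation and the fish relation'' in the Introduction when discussing \cite{Iva}. As written, though, it is a plan rather than a proof: you neither state the precise fish identity you need (there are four ice types and two cap orientations, so several versions) nor carry out the scalar bookkeeping, and you explicitly hedge with ``should a clean telescoping resist.'' If you want this to stand on its own, you must either (a) write down the two fish identities explicitly and show the four applications telescope to $\tfrac{z_j-vz_i}{z_i-vz_j}$, or (b) simply commit to the finite check, which is what both this paper (for Proposition~\ref{Unit}) and \cite{BBCG} effectively do.
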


\section{Column operator, permutation graph, and $F$-matrix}\label{Sect.3}

In this section, we introduce three key concepts that are used in our method for computing the partition function: the column operator, the permutation graph, and the $F$-matrix. We focus on type A lattice models in this section, and defer the generalization to type B and C lattice models to Section \ref{Sect.5}. Therefore, in this section, only $\Gamma$ ice is involved for ordinary vertices, and only $\Gamma\Gamma$ ice is involved for $R$-vertices.

Some basic notations are given in Section \ref{Sect.3.1}. Then we introduce the three concepts in Sections \ref{Sect.3.2}-\ref{Sect.3.4}, respectively. In Section \ref{Sect.3.5}, we derive some basic properties of the $F$-matrix.

\subsection{Basic notations}\label{Sect.3.1}
In this subsection, we set up some basic notations. For any $a\in\{1,2,\cdots\}$, we let $W_a \cong \mathbb{C}^2$ be a 2-dimensional vector space over $\mathbb{C}$ spanned by two basis vectors $|0\rangle$ and $|1\rangle$ (here, as mentioned before, $0$ corresponds to the $+$ spin and $1$ corresponds to the $-$ spin). For any $i,j\in\{0,1\}$, we denote by $E_a^{(i,j)}$ the $2\times 2$ elementary matrix acting on $W_a$ with $1$ at position $(i,j)$ and $0$ elsewhere (the rows and columns of the matrix are labeled by $0,1$). 

Now we discuss the $R$-matrix. For any $a,b,c,d\in\{0,1\}$ and any $x_i,x_j\in\mathbb{C}$, we denote by $R(a,b,c,d;x_i,x_j)$ the Boltzmann weight of the following $R$-vertex with spectral parameters $x_i,x_j$:
\begin{equation}
\begin{tikzpicture}[scale=0.7]
\draw (0,0) to [out = 0, in = 180] (2,2);
\draw (0,2) to [out = 0, in = 180] (2,0);
\draw[fill=white] (0,0) circle (.35);
\draw[fill=white] (0,2) circle (.35);
\draw[fill=white] (2,0) circle (.35);
\draw[fill=white] (2,2) circle (.35);
\node at (0,0) {$a$};
\node at (0,2) {$b$};
\node at (2,2) {$c$};
\node at (2,0) {$d$};
\path[fill=white] (1,1) circle (.3);
\node at (1,1) {$R_{x_i,x_j}$};
\end{tikzpicture}
\end{equation}
For any two distinct positive integers $i,j$, we define the $R$-matrix $R_{i,j}(x_i,x_j)$ with spectral parameters  $x_i,x_j$ that acts on $W_i \otimes W_j$ as follows:
\begin{equation}
    R_{i,j}(x_i,x_j)=\sum_{a,b,c,d\in\{0,1\}}R(a,b,c,d;x_i,x_j) E_i^{(a,c)} E_j^{(b,d)}.
\end{equation}
We also denote $R(x_1,x_2):=R_{12}(x_1,x_2)$.

We also use the following notations for the Boltzmann weights. For $\Gamma$ ice with spectral parameter $x_i$, we denote by  $a_1(x_i)$ the Boltzmann weight of the $a_1$ state (see Figure \ref{Figure2.1}), and similarly for the other states. For $\Gamma\Gamma$ ice with spectral parameters $x_i,x_j$, we denote by $a_1(x_i,x_j)$ the Boltzmann weight of the $a_1$ state (see Figure \ref{Figure2.3}), and similarly for the other states.

In the following, we fix a positive integer $N$. For any $(i_1,i_2,\cdots,i_N)\in \{0,1\}^N$, we denote by $|i_1,i_2,\cdots,i_N\rangle=|i_1\rangle\otimes  |i_2\rangle \otimes \cdots \otimes |i_N\rangle$ the corresponding basis vector of $W_1 \otimes \cdots \otimes W_N$, and $\langle i_1,i_2,\cdots,i_N|\in (W_1\otimes \cdots  \otimes W_N)^{*}$ the dual vector of $|i_1,i_2,\cdots,i_N\rangle$. Then for any operator $A\in End(W_1\otimes \cdots \otimes W_N)$, we define the component $(A)_{i_1\cdots i_N}^{j_1\cdots j_N}$ by
\begin{equation}\label{Oped}
    A|j_1,\cdots,j_N\rangle =\sum_{(i_1,\cdots,i_N)\in\{0,1\}^{N}}(A)_{i_1\cdots i_N}^{j_1\cdots j_N}|i_1,\cdots,i_N\rangle,\text{ for any }(j_1,\cdots,j_N)\in\{0,1\}^N.
\end{equation}
Similarly, for any $a\in W_1\otimes \dots \otimes W_N$, the component $(a)_{i_1,\cdots,i_N}$ is defined by
\begin{equation}
    (a)_{i_1,\cdots,i_N}=\langle i_1,\cdots,i_N|a, \text{ for any }(i_1,\cdots,i_N)\in\{0,1\}^N;
\end{equation}
for any $a\in (W_1\otimes \cdots  \otimes W_N)^{*}$, the component $(a)^{i_1\cdots i_N}$ is defined by
\begin{equation}
    (a)^{i_1,\cdots,i_N}=a |i_1,\cdots,i_N\rangle, \text{ for any }(i_1,\cdots,i_N)\in\{0,1\}^N.
\end{equation}

\subsection{Column operator}\label{Sect.3.2}

In this subsection, we introduce the concept of ``column operator''. Namely, for any $\alpha\in\{0,1\}$ and $\vec{x}=(x_1,\cdots,x_N)\in\mathbb{C}^N$, we define the column operator $S^{[\alpha]}(\vec{x})\in End(W_1\otimes \cdots \otimes W_N)$ by specifying its components $(S^{[\alpha]}(\vec{x}))_{i_1\cdots i_N}^{j_1\cdots j_N}$ for any $(i_1,\cdots,i_N),(j_1,\cdots,j_N)\in\{0,1\}^N$.

To define $(S^{[\alpha]}(\vec{x}))_{i_1\cdots i_N}^{j_1\cdots j_N}$, recall that we have identified $+$ spin with $0$ and $-$ spin with $1$. Consider a column of ordinary vertices whose spectral parameters are given by $x_1,\cdots,x_N$ from bottom to top. We also specify the boundary condition as follows: the top edge is labeled $\alpha$, the bottom edge is labeled $0$, the left edges are labeled $i_1,\cdots,i_N$ (from bottom to top), and the right edges are labeled $j_1,\cdots,j_N$ (from bottom to top). The component $(S^{[\alpha]}(\vec{x}))_{i_1\cdots i_N}^{j_1\cdots j_N}$ is defined as the partition function of this configuration. An illustration of this component is given below.

\begin{equation}
\hfill
\begin{tikzpicture}[baseline=(current bounding box.center)]
  \draw (-0.5,0.5) to (0.5,0.5);
  \draw (-0.5,1) to (0.5,1);
  \draw (-0.5,2) to (0.5,2);
  \draw (-0.5,2.5) to (0.5,2.5);
  \draw (0,0) to (0,3);
  \node at (-0.5,0.5) [anchor=east] {$i_1$};
 \node at (-0.5,1) [anchor=east] {$i_2$};
  \node at (-0.5,1.5) [anchor=east] {$\cdots$};
 \node at (-0.5,2) [anchor=east] {$i_{N-1}$};
  \node at (-0.5,2.5) [anchor=east] {$i_N$};
   \node at (0.5,0.5) [anchor=west] {$j_1$};
 \node at (0.5,1) [anchor=west] {$j_2$};
  \node at (0.5,1.5) [anchor=west] {$\cdots$};
 \node at (0.5,2) [anchor=west] {$j_{N-1}$};
  \node at (0.5,2.5) [anchor=west] {$j_N$};
\node at (0,0) [anchor=north] {$0$};
\node at (0,3) [anchor=south] {$\alpha$};
\end{tikzpicture}\quad\quad
\end{equation}

More generally, for any $\sigma\in S_N$, $\alpha\in\{0,1\}$, and $\vec{x}=(x_1,\cdots,x_N)\in\mathbb{C}^N$, we define the column operator $S^{[\alpha]}_{\sigma}(\vec{x})\in End(W_1\otimes \cdots \otimes W_N)$ by specifying its components $(S^{[\alpha]}_{\sigma}(\vec{x}))_{i_1\cdots i_N}^{j_1\cdots j_N}$ for any $(i_1,\cdots,i_N),(j_1,\cdots,j_N)\in \{0,1\}^N$. 

To define $(S^{[\alpha]}_{\sigma}(\vec{x}))_{i_1\cdots i_N}^{j_1\cdots j_N}$, consider a column of ordinary vertices whose spectral parameters are given by $x_1,\cdots,x_N$ from bottom to top. We also specify the boundary condition as follows: the top edge is labeled $\alpha$, the bottom edge is labeled $0$, the left edges are labeled $i_{\sigma(1)},\cdots, i_{\sigma(N)}$ (from bottom to top), and the right edges are labeled $j_{\sigma(1)},\cdots,j_{\sigma(N)}$ (from bottom to top). The component $(S^{[\alpha]}_{\sigma}(\vec{x}))_{i_1\cdots i_N}^{j_1\cdots j_N}$ is defined as the partition function of this configuration. A illustration of this component is given below.

\begin{equation}
\hfill
\begin{tikzpicture}[baseline=(current bounding box.center)]
  \draw (-0.5,0.5) to (0.5,0.5);
  \draw (-0.5,1) to (0.5,1);
  \draw (-0.5,2) to (0.5,2);
  \draw (-0.5,2.5) to (0.5,2.5);
  \draw (0,0) to (0,3);
  \node at (-0.5,0.5) [anchor=east] {$i_{\sigma(1)}$};
 \node at (-0.5,1) [anchor=east] {$i_{\sigma(2)}$};
  \node at (-0.5,1.5) [anchor=east] {$\cdots$};
 \node at (-0.5,2) [anchor=east] {$i_{\sigma(N-1)}$};
  \node at (-0.5,2.5) [anchor=east] {$i_{\sigma(N)}$};
   \node at (0.5,0.5) [anchor=west] {$j_{\sigma(1)}$};
 \node at (0.5,1) [anchor=west] {$j_{\sigma(2)}$};
  \node at (0.5,1.5) [anchor=west] {$\cdots$};
 \node at (0.5,2) [anchor=west] {$j_{\sigma(N-1)}$};
  \node at (0.5,2.5) [anchor=west] {$j_{\sigma(N)}$};
\node at (0,0) [anchor=north] {$0$};
\node at (0,3) [anchor=south] {$\alpha$};
\end{tikzpicture}\quad\quad
\end{equation}

% Here we also give an alternative (equivalent) way of defining the column operator. For every $1\leq i\leq N$, we let $T_i(x_i)\in End(W_i\otimes V)$ be the operator that corresponds to the ordinary vertex on the $i$th row (from bottom to top) with spectral parameter $x_i$ (defined in a similar manner as the $R$-matrix). Then the component $(S^{[\alpha]}(\vec{x}))_{i_1\cdots i_N}^{j_1\cdots j_N}$ can be defined in terms of the operator $T=(T_N(x_N))_{N,N+1}\cdots (T_{1}(x_1))_{1,N+1}\in End(W_1\otimes \cdots W_N\otimes V)$ as follows:
% \begin{equation*}
%     (S^{[\alpha]}(\vec{x}))_{i_1\cdots i_N}^{j_1\cdots j_N}=(T)_{i_1\cdots i_N \alpha}^{j_1\cdots j_N 0}.
% \end{equation*}

\subsection{Permutation graph}\label{Sect.3.3}
In this subsection, we introduce the concept of ``permutation graph'' for free fermionic Boltzmann weights. The permutation graph is a generalization of the $R$-matrix. 

For any two permutations $\rho_1,\rho_2\in S_N$ and any vector of spectral parameters $\vec{x}=(x_1,\cdots,x_N)\in\mathbb{C}^N$, the ``permutation graph'' $R_{\rho_1}^{\rho_2}(\vec{x})$ is an element of $End(W_1\otimes \cdots \otimes W_N)$ as defined below.

We first consider the case where $\rho_1=s_i=(i,i+1)$ and $\rho_2=id$ for some $1\leq i\leq N-1$. We let
\begin{equation}
    R_{s_i}^{id}(\vec{x})=R_{i,  i+1}(x_i,x_{i+1}).
\end{equation}
More generally, for any $\rho_1\in S_N$, we let
\begin{equation}
    R_{\rho_1}^{\rho_1}(\vec{x})=1,\quad R_{\rho_1\circ s_i}^{\rho_1}(\vec{x})=R_{\rho_1(i),\rho_1(i+1)}(x_{\rho_1(i)},x_{\rho_1(i+1)}),
\end{equation}
and recursively for any $\rho_1,\rho_2\in S_N$,
\begin{equation}
    R_{\rho_1 \circ  s_i}^{\rho_2}(\vec{x})=R_{\rho_1}^{\rho_2}(\vec{x})  R_{\rho_1\circ s_i}^{\rho_1}(\vec{x}).
\end{equation}
For general $\rho_1,\rho_2$, $R_{\rho_1}^{\rho_2}(\vec{x})$ can be constructed from the above definition. By the ``$RRR$'' Yang-Baxter equations and the unitarity relation (Theorems \ref{YBE2}-\ref{Unit}), $R_{\rho_1}^{\rho_2}(\vec{x})$ is well-defined.

As a simple example, consider the case where $N=4$, $\rho_1=id$, and $\rho_2=(132)$. Then for any $(i_1,\cdots,i_N),(j_1,\cdots,j_N)\in \{0,1\}^N$, the component $(R_{\rho_1}^{\rho_2}(\vec{x}))_{i_1\cdots i_N}^{j_1 \cdots j_N}$ is given by the partition function of the following configuration. $S$ as shown below is $\Gamma\Gamma$ ice with spectral parameters $x_3,x_1$, and $T$ is $\Gamma\Gamma$ ice with spectral parameters $x_3,x_2$.

\begin{equation}
\hfill
\begin{tikzpicture}[baseline=(current bounding box.center)]
  \draw (0,3) to (5,3);
  \draw (0,0) to [out=0,in=180] (4,2) to (5,2);
  \draw (0,1) to [out=0,in=180] (4,0) to (5,0);
  \draw (0,2) to [out=0,in=180] (4,1) to (5,1);
  \node at (0,0) [anchor=east] {$i_3$};
  \node at (0,1) [anchor=east] {$i_1$};
  \node at (0,2) [anchor=east] {$i_2$};
  \node at (0,3) [anchor=east] {$i_4$};
  \node at (5,0) [anchor=west] {$j_1$};
  \node at (5,1) [anchor=west] {$j_2$};
  \node at (5,2) [anchor=west] {$j_3$};
  \node at (5,3) [anchor=west] {$j_4$};
  \node at (4.5,3) [anchor=south] {$\Gamma,x_4$};
  \node at (4.5,2) [anchor=south] {$\Gamma, x_3$};
  \node at (4.5,1) [anchor=south] {$\Gamma,x_2$};
  \node at (4.5,0) [anchor=south] {$\Gamma,x_1$};
 \filldraw[black] (1.55,0.65) circle (2pt);
\node at (1.6,0.65) [anchor=west] {$S$};
\filldraw[black] (2.4,1.3) circle (2pt);
\node at (2.45,1.3) [anchor=west] {$T$};
\end{tikzpicture}
\end{equation}

\subsection{$F$-matrix}\label{Sect.3.4}

Based on the permutation graph, we construct the ``$F$-matrix'' as follows. The $F$-matrix introduced in this paper is inspired by the $F$-matrix (also called the ``factorization matrix'') introduced in \cite{MS} for $U_q(\widehat{\mathfrak{sl}}_2)$ $R$-matrix, but works for free fermionic Boltzmann weights.

First we set up some relevant notations. For any permutation $\rho\in S_N$, we define
\begin{eqnarray}\label{I}
    I(\rho)&:=&\{(k_1,\cdots,k_N)\in \{0,1\}^N: 0\leq k_N\leq \cdots\leq k_1\leq 1; \text{ for any }1\leq t \leq N-1, \text{ if }\rho(t)>\rho(t+1),\nonumber \\
    &&\quad \text{ then }k_t>k_{t+1}\},
\end{eqnarray}
\begin{eqnarray}\label{I'}
    I'(\rho)&:=&\{(k_1,\cdots,k_N)\in\{0,1\}^N: 0\leq k_1\leq \cdots\leq k_N\leq 1; \text{ for any }1\leq t \leq N-1, \text{ if }\rho(t)<\rho(t+1),\nonumber\\
    &&\quad \text{ then }k_t<k_{t+1}\}.
\end{eqnarray}
As a simple example, consider the case where $N=2$. When $\rho=id$, we have $I(\rho)=\{(0,0),(1,0),(1,1)\}$ and $I'(\rho)=\{(0,1)\}$. When $\rho=(12)$, we have $I(\rho)=\{(1,0)\}$ and $I'(\rho)=\{(0,0),(0,1),(1,1)\}$.

Now we define the $F$-matrices $F(\vec{x})=F_{1\cdots N}(\vec{x}),F^{*}(\vec{x})=F^{*}_{1\cdots N}(\vec{x})\in End(W_1 \otimes \cdots \otimes W_N)$ by
\begin{equation}
    F(\vec{x}):=\sum_{\rho\in S_N}\sum_{(k_1,\cdots,k_N)\in I(\rho)}\prod_{i=1}^N E_{\rho(i)}^{(k_i,k_i)} R_{id}^{\rho}(\vec{x}),
\end{equation}
\begin{equation}
    F^{*}(\vec{x}):=\sum_{\rho\in S_N}\sum_{(k_1,\cdots,k_N)\in I'(\rho)}R_{\rho}^{id}(\vec{x})\prod_{i=1}^N E_{\rho(i)}^{(k_i,k_i)}.
\end{equation}

Again, consider the case where $N=2$. We have
\begin{equation*}
    F(x_1,x_2)=\sum_{(k_1,k_2)\in I(id)}E_1^{(k_1,k_1)}E_2^{(k_2,k_2)}+E_2^{(1,1)}E_1^{(0,0)}R_{id}^{(12)}(x_1,x_2).
\end{equation*}
We order the basis vectors of $W_1\otimes W_2$ as $|0,0\rangle,|0,1\rangle,|1,0\rangle,|1,1\rangle$. Note that
\begin{equation*}
    R_{id}^{(12)}(x_1,x_2)=\begin{bmatrix}
    a_1(x_2,x_1) & & & \\
    & b_2(x_2,x_1) & c_1(x_2,x_1) &\\
    & c_2(x_2,x_1) &  b_1(x_2,x_1) &\\
    &&& a_2(x_2,x_1)
    \end{bmatrix}.
\end{equation*}
Hence we have
\begin{equation}\label{e1}
   F(x_1,x_2)= \begin{bmatrix}
    1 & & & \\
    & b_2(x_2,x_1) & c_1(x_2,x_1) &\\
    & 0 &  1 &\\
    &&&1
    \end{bmatrix}.
\end{equation}
Similarly, 
\begin{equation*}
    F^{*}(x_1,x_2)=E_1^{(0,0)}E_2^{(1,1)}+\sum_{(k_1,k_2)\in I'((12))}R_{(12)}^{id}(x_1,x_2)E_2^{(k_1,k_1)}E_1^{(k_2,k_2)},
\end{equation*}
which leads to
\begin{equation}\label{e2}
    F^{*}(x_1,x_2)=\begin{bmatrix}
    a_1(x_1,x_2) & &&\\
    & 1 & c_2(x_1,x_2) \\
    & 0 & b_2(x_1,x_2) \\
    &&&a_2(x_1,x_2)
    \end{bmatrix}.
\end{equation}
Using the relations
\begin{equation*}
    b_2(x_2,x_1)c_2(x_1,x_2)+c_1(x_2,x_1)b_2(x_1,x_2)=0,\quad  a_1(x_1,x_2)=1,
\end{equation*}
we have 
\begin{equation}\label{Eq}
    F(x_1,x_2) F^{*}(x_1,x_2)=\begin{bmatrix}
    1 &&&\\
    & b_2(x_2,x_1) &&\\
    && b_2(x_1,x_2)&\\
    &&&a_2(x_1,x_2)
    \end{bmatrix}.
\end{equation}
More generally, for any $\sigma\in S_N$, we can define $F_{\sigma(1)\cdots\sigma(N)}(x_1,\cdots,x_N)$ as follows. We define the permutation operator $P_{1\cdots N}^{\sigma}$ by
\begin{equation*}
    P_{1\cdots N}^{\sigma}|i_1,\cdots,i_N\rangle=|i_{\sigma^{-1}(1)},\cdots, i_{\sigma^{-1}(N)}\rangle.
\end{equation*}
Now we define
\begin{equation}
    F_{\sigma(1)\cdots\sigma(N)}(x_1,\cdots,x_N)=P_{1\cdots N}^{\sigma} F_{1\cdots N}(x_1,\cdots,x_N) P_{1\cdots N}^{\sigma^{-1}}.
\end{equation}
When $N=2$, we have
\begin{equation}\label{e3}
    F_{21}(x_2,x_1)=\begin{bmatrix}
    1 &&&\\
    &1&0&\\
    &c_1(x_1,x_2)&b_2(x_1,x_2)&\\
    &&&1
    \end{bmatrix}.
\end{equation}

Hereafter, we may omit the argument $\vec{x}$ from the $F$-matrix when it is clear from the context.

\subsection{Basic properties of the $F$-matrix}\label{Sect.3.5}

In this subsection, we derive some basic properties of the $F$-matrix introduced in Section \ref{Sect.3.4}. 

The following proposition gives the inverse $F^{-1}$ of $F$ in terms of $F^{*}$.

\begin{proposition}\label{P1}
$\Delta:=F F^{*}$ is a diagonal matrix. The diagonal entries of $\Delta$ are given by
\begin{eqnarray*}
 (\Delta)_{i_1\cdots i_N}^{i_1\cdots i_N}=\prod_{(a,b):i_a=1,i_b=0} b_2(x_a,x_b)\prod_{(a,b):a<b,i_a=1,i_b=1}a_2(x_a,x_b)
\end{eqnarray*}
for every $(i_1,\cdots,i_N)\in\{0,1\}^N$.
\end{proposition}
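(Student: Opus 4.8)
The plan is to reduce the whole statement to a single matrix element of a permutation graph, and then evaluate that element. The first step is to observe that $F$, $F^{*}$, and hence $\Delta$ all commute with the number operator $\sum_{a}E_{a}^{(1,1)}$: the diagonal factors $\prod_{i}E_{\rho(i)}^{(k_i,k_i)}$ obviously preserve the number of $1$-spins, and one reads off from Figure~\ref{Figure2.3} that every admissible $\Gamma\Gamma$ pattern has the same number of $1$'s on the two incoming edges as on the two outgoing edges, so each $R_{p,q}$, and therefore each $R_{\rho_1}^{\rho_2}(\vec x)$, conserves the $1$-spin count. Hence it suffices to work inside a fixed sector with $m$ ones, whose basis vectors $|j\rangle$ are indexed by their support $S_j\subseteq\{1,\dots,N\}$, $|S_j|=m$; in particular $\Delta$ is block diagonal in $m$ and it is enough to show each block is diagonal with the stated entries.

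Next I would identify the terms of the defining sums that actually contribute in a given sector. A sequence in $I(\rho)$ with exactly $m$ ones must be $(1^{m},0^{N-m})$, and the descent condition in \eqref{I} then forces $\rho$ to be increasing on $\{1,\dots,m\}$ and on $\{m+1,\dots,N\}$; the projector $\prod_{i}E_{\rho(i)}^{(k_i,k_i)}$ forces the output vector to have support $\{\rho(1),\dots,\rho(m)\}$. So for each $j'$ there is a \emph{unique} Grassmannian permutation $\rho_{j'}$ (increasing on each of the two blocks, with first block mapping onto $S_{j'}$) contributing to the $j'$-row of $F$, whence $(F)_{j''}^{\;j'}=\bigl(R_{id}^{\rho_{j''}}(\vec x)\bigr)_{j''}^{\;j'}$. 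The mirror analysis of \eqref{I'} produces, for each $j$, a unique permutation $\sigma_{j}$ (decreasing on $\{1,\dots,N-m\}$ and on $\{N-m+1,\dots,N\}$, with last block mapping onto $S_{j}$) contributing to the $j$-column of $F^{*}$, so $(F^{*})_{j'}^{\;j}=\bigl(R_{\sigma_{j}}^{id}(\vec x)\bigr)_{j'}^{\;j}$, and a direct check gives $\sigma_{j}=\rho_{j}\,w_{0}$ with $w_{0}$ the longest element of $S_{N}$. Using the composition law $R_{\sigma}^{\pi}R_{\tau}^{\sigma}=R_{\tau}^{\pi}$ (which is immediate from the recursive definition of the permutation graph and the well-definedness guaranteed by Propositions~\ref{YBE2}--\ref{Unit}), this yields
\begin{equation*}
(\Delta)_{j''}^{\;j}=\sum_{j'}\bigl(R_{id}^{\rho_{j''}}(\vec x)\bigr)_{j''}^{\;j'}\bigl(R_{\sigma_{j}}^{id}(\vec x)\bigr)_{j'}^{\;j}=\bigl(R_{id}^{\rho_{j''}}(\vec x)\,R_{\sigma_{j}}^{id}(\vec x)\bigr)_{j''}^{\;j}=\bigl(R_{\sigma_{j}}^{\rho_{j''}}(\vec x)\bigr)_{j''}^{\;j},
\end{equation*}
and on the diagonal $(\Delta)_{j}^{\;j}=\bigl(R_{\rho_{j}w_{0}}^{\rho_{j}}(\vec x)\bigr)_{j}^{\;j}$.

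It then remains to prove two things. First, the diagonal value: $R_{\rho_{j}w_{0}}^{\rho_{j}}(\vec x)$ is the ``total reversal'' $R$-string built on the strand order $\rho_{j}$, and evaluated between the configuration $|j\rangle$ and itself it factors as a product of one elementary Boltzmann weight for each pair of strands, the pair $\{\rho_{j}(s),\rho_{j}(t)\}$ with $s\le m<t$ contributing a $b_{2}$-weight and the pair with $s<t\le m$ contributing an $a_{2}$-weight; since $\rho_{j}$ lists $S_{j}$ then $S_{j}^{c}$ in increasing order, this is exactly $\prod_{(a,b):\,i_a=1,i_b=0}b_2(x_a,x_b)\prod_{(a,b):\,a<b,\,i_a=1,i_b=1}a_2(x_a,x_b)$. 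I would establish this factorization by induction on $N$, peeling off one crossing at a time from a reduced word for $w_{0}$ and reducing to two-site computations of the type already carried out for \eqref{Eq}. Second, and this is the real obstacle, one must show $(\Delta)_{j''}^{\;j}=0$ for $j''\neq j$ in the same sector. This does \emph{not} follow from triangularity: one can check that both $F$ and $F^{*}$ are triangular with respect to the Bruhat order on supports, but in the \emph{same} direction, so their product is a priori only triangular, not diagonal. The off-diagonal vanishing is a genuine identity of the free fermionic weights---its $N=2$ avatar being $b_2(x_2,x_1)c_2(x_1,x_2)+c_1(x_2,x_1)b_2(x_1,x_2)=0$, which follows from $a_1a_2+b_1b_2-c_1c_2=0$---and I would prove the general case by induction on $N$ (or on $\ell(\sigma_{j})-\ell(\rho_{j''})$): use the $RRR$ relation of Proposition~\ref{YBE2} to slide crossings in $R_{\sigma_{j}}^{\rho_{j''}}(\vec x)$ past one another until a two-site cancellation of the above form applies and the strand carrying the ``misplaced'' particle can be eliminated. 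Carrying out these cancellations coherently over all off-diagonal entries is the technical heart of the argument.
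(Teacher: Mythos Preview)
Your reduction to the single matrix element $(\Delta)_{j''}^{\;j}=(R_{\sigma_j}^{\rho_{j''}})_{j''}^{\;j}$ is correct and is exactly what the paper establishes as Proposition~\ref{P2}; your computation of the diagonal value by reading off the unique admissible state of $R_{\rho_j w_0}^{\rho_j}$ also matches the paper's argument line for line (the paper does it by inspecting the picture rather than by induction, but the content is the same).

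The gap is in the off-diagonal part, and it is not where you think. The $N=2$ identity you quote,
\[
b_2(x_2,x_1)c_2(x_1,x_2)+c_1(x_2,x_1)b_2(x_1,x_2)=0,
\]
is \emph{not} the free-fermionic relation $a_1a_2+b_1b_2=c_1c_2$; it mixes weights at $(x_1,x_2)$ and at $(x_2,x_1)$, and is precisely an off-diagonal entry of the unitarity relation (Proposition~\ref{Unit}). The diagonality of $FF^{*}$ holds for any $R$-matrix satisfying YBE and unitarity, not just free-fermionic ones; so your plan to ``use the free-fermionic cancellation inductively'' is aimed at the wrong mechanism and, as you admit, is not actually carried out.

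The paper sidesteps the issue entirely with one combinatorial rewriting (attributed to \cite{ABFR}):
\[
FF^{*}=\sum_{\rho\in S_N}\sum_{(k_1,\dots,k_N)\in I(\rho)}\Bigl(\prod_{i}E_{\rho(i)}^{(k_i,k_i)}\Bigr)\,R_{\rho\sigma_0}^{\rho}(\vec x)\,\Bigl(\prod_{i}E_{\rho(i)}^{(k_i,k_i)}\Bigr),
\]
with $\sigma_0$ the longest element. Because the \emph{same} projector appears on both sides of each summand, this is manifestly diagonal; nothing further is needed. The identity is obtained by regrouping the double sum defining $FF^{*}$: rather than fixing the row permutation from $F$ and the column permutation from $F^{*}$ separately (as you do), one sums first over the intermediate permutations and uses the composition law together with the fact that $\bigsqcup_{\rho}I'(\rho)$ tiles $\{0,1\}^N$ into a single copy of each basis vector. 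That reorganization is the missing idea in your proof; once you have it, your step~5 disappears and the argument is complete.
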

\begin{remark}
The special case where $N=2$ is computed in (\ref{Eq}).
\end{remark}
\begin{remark}
The proposition implies that $F^{-1}=F^{*}\Delta^{-1}$ if $\Delta$ is invertible.
\end{remark}
\begin{proof}
Let $\sigma_0\in S_N$ be defined by $\sigma_0(i)=N+1-i$ for every $1\leq i\leq N$.
By elementary computations (similar to those in \cite[Proposition 3.2]{ABFR}), we obtain that
\begin{equation}\label{E1}
 \Delta=F F^{*}=\sum_{\rho\in S_N}\sum_{(k_1,\cdots,k_N)\in I(\rho)}\prod_{i=1}^N E_{\rho(i)}^{(k_i,k_i)} R_{\rho\sigma_0}^{\rho}(\vec{x})\prod_{i=1}^N E_{\rho(i)}^{(k_i,k_i)}.
\end{equation}
This shows that $\Delta$ is a diagonal matrix.

Now we compute the component $(\Delta)_{i_1\cdots i_N}^{i_1\cdots i_N}$ for every $(i_1,\cdots,i_N)\in\{0,1\}^N$. For any $\rho\in S_N$, the only element $(k_1,\cdots,k_N)\in I(\rho)$ in the sum of the right hand side of (\ref{E1}) that contributes to $(\Delta)_{i_1\cdots i_N}^{i_1\cdots i_N}$ is determined by $k_t=i_{\rho(t)}$ for every $1\leq t\leq N$. From the definition of $I(\rho)$ in (\ref{I}), in order for this term to be non-vanishing, we mush have 
\begin{eqnarray}\label{E2}
&& 0\leq i_{\rho(N)}\leq \cdots \leq i_{\rho(1)}\leq 1,   \nonumber   \\ 
&&    i_{\rho(t)}=i_{\rho(t+1)}\text{ implies }\rho(t)<\rho(t+1), \text{ for every }1\leq t\leq N-1.
\end{eqnarray}
There is a unique permutation, denoted by $\rho(i_1,\cdots,i_N)\in S_N$, that satisfies the condition (\ref{E2}). 

Therefore we conclude that
\begin{equation}\label{E3}
    (\Delta)_{i_1\cdots i_N}^{i_1\cdots i_N}=(R_{\rho(i_1,\cdots,i_N)\sigma_0}^{\rho(i_1\cdots i_N)}(\vec{x}))_{i_1\cdots i_N}^{i_1\cdots i_N}.
\end{equation}
To compute the right hand side of (\ref{E3}), we note that there is only one admissible state for the corresponding permutation graph, as indicated below (in which $\rho=\rho(i_1,\cdots,i_N)$).

\begin{equation}
\hfill
\begin{tikzpicture}[baseline=(current bounding box.center)]
  \draw (0,3) to [out=0, in=120] (1.5,1.5) to [out=-60, in=180] (5,0);
  \draw (0,0) to [out=0,in=-120] (3.5,1.5) to [out=60, in=180] (5,3);
  \draw (0,1) to [out=0,in=180] (5,2);
  \draw (0,2) to [out=0,in=180] (5,1);
  \node at (0,0.5) [anchor=east] {$\cdots$};
  \node at (0,0) [anchor=east] {$i_{\rho(1)}=1$};
  \node at (0,1) [anchor=east] {$i_{\rho(s)}=1$};
  \node at (0,2) [anchor=east] {$i_{\rho(s+1)}=0$};
    \node at (0,2.5) [anchor=east] {$\cdots$};
  \node at (0,3) [anchor=east] {$i_{\rho(N)}=0$};
  \node at (5,0) [anchor=west] {$i_{\rho(N)}=0$};
  \node at (5,0.5) [anchor=west] {$\cdots$};
  \node at (5,1) [anchor=west] {$i_{\rho(s+1)}=0$};
  \node at (5,2.5) [anchor=west] {$\cdots$};
  \node at (5,2) [anchor=west] {$i_{\rho(s)}=1$};
  \node at (5,3) [anchor=west] {$i_{\rho(1)}=1$};
  \node at (2,1.6) [anchor=south] {$0$};
  \node at (3,1.6) [anchor=south] {$1$};
  \node at (1.5,1.8) [anchor=north east] {$0$};
  \node at (3.5,1.8) [anchor=north west] {$1$};
  \node at (1.5,1.8) [anchor=north east] {$0$};
 \node at (2.2,1.4) [anchor=north] {$1$};
 \node at (2.8,1.4) [anchor=north ] {$0$};
  \node at (1.9,0.9) [anchor=north] {$0$};
 \node at (3.1,0.9) [anchor=north ] {$1$};
\end{tikzpicture}\quad\quad
\end{equation}
The Boltzmann weight of the unique admissible state is (note that there are only crossings of $a_1,a_2$, or $b_2$ patterns as can be seen from the above figure)
\begin{equation}
    \prod_{(a,b): i_a=1,i_b=0}b_2(x_a,x_b)\prod_{(a,b):a<b,i_a=1,i_b=1}a_2(x_a,x_b).
\end{equation}
Hence for every $(i_1,\cdots,i_N)\in\{0,1\}^N$, we have
\begin{eqnarray*}
 (\Delta)_{i_1\cdots i_N}^{i_1\cdots i_N}=\prod_{(a,b):i_a=1,i_b=0} b_2(x_a,x_b)\prod_{(a,b):a<b,i_a=1,i_b=1}a_2(x_a,x_b).
\end{eqnarray*}

\end{proof}

Using similar arguments, we obtain the following proposition on the components of $F$ and $F^{*}$.

\begin{proposition}\label{P2}
For any given $(i_1,\cdots,i_N)\in\{0,1\}^N$, let $\rho=\rho(i_1,\cdots,i_N)\in S_N$ be the unique permutation determined by the following condition
\begin{eqnarray*}
  && 0\leq i_{\rho(N)}\leq \cdots \leq i_{\rho(1)} \leq 1,\\
  && i_{\rho(t)}=i_{\rho(t+1)}\text{ implies }\rho(t)<\rho(t+1), \text{ for every }  1\leq t\leq N-1.
\end{eqnarray*}
Similarly, for any given $(j_1,\cdots,j_N)\in \{0,1\}^N$, let $\rho^{*}=\rho^{*}(j_1,\cdots,j_N)\in S_N$ be the unique permutation determined by the following condition
\begin{eqnarray*}
&& 0\leq j_{\rho^{*}(1)}\leq \cdots \leq j_{\rho^{*}(N)} \leq 1,\\
&& j_{\rho^{*}(t)}=j_{\rho^{*}(t+1)}\text{ implies }\rho^{*}(t)>\rho^{*}(t+1), \text{ for every }1\leq t\leq N-1.
\end{eqnarray*}
Then for any $(i_1,\cdots,i_N),(j_1,\cdots,j_N)\in\{0,1\}^N$, the components of $F$, $F^{*}$ are given by
\begin{equation}
    (F)_{i_1\cdots i_N}^{j_1\cdots j_N}=(R_{id}^{\rho})_{i_1\cdots i_N}^{j_1 \cdots j_N},
\end{equation}
\begin{equation}
    (F^{*})_{i_1 \cdots i_N}^{j_1 \cdots j_N}=(R_{\rho^{*}}^{id})_{i_1\cdots i_N}^{j_1 \cdots j_N}.
\end{equation}
\end{proposition}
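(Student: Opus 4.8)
The plan is to mirror the structure of the proof of Proposition \ref{P1}, extracting from it the component-level statement rather than just the diagonal conclusion. Recall that Proposition \ref{P1} was proved by showing (equation (\ref{E1})) that
\[
FF^{*}=\sum_{\rho\in S_N}\sum_{(k_1,\cdots,k_N)\in I(\rho)}\prod_{i=1}^N E_{\rho(i)}^{(k_i,k_i)} R_{\rho\sigma_0}^{\rho}(\vec{x})\prod_{i=1}^N E_{\rho(i)}^{(k_i,k_i)},
\]
and this identity came from expanding $F=\sum_{\rho}\sum_{k\in I(\rho)}\prod_i E_{\rho(i)}^{(k_i,k_i)}R_{id}^{\rho}$ and $F^{*}=\sum_{\rho'}\sum_{k'\in I'(\rho')}R_{\rho'}^{id}\prod_i E_{\rho'(i)}^{(k'_i,k'_i)}$, using $R_{id}^{\rho}R_{\rho'}^{id}=R_{\rho'\sigma_0?}$... actually the cross terms collapse because the projectors force $\rho'$ to a specific relationship with $\rho$. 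The point I want to isolate is this: the diagonal projector $\prod_i E_{\rho(i)}^{(k_i,k_i)}$ sitting to the \emph{left} of $R_{id}^{\rho}$ in $F$ picks out, for a given target index $(i_1,\dots,i_N)$, at most one permutation $\rho$ and one tuple $k\in I(\rho)$ — namely $k_t=i_{\rho(t)}$, and the constraint $k\in I(\rho)$ forces $\rho=\rho(i_1,\dots,i_N)$ exactly as characterized in the statement. Dually, the projector to the \emph{right} of $R_{\rho'}^{id}$ in $F^{*}$ picks out, for a given source index $(j_1,\dots,j_N)$, at most one $\rho'$ and one tuple, with the constraint $k'\in I'(\rho')$ forcing $\rho'=\rho^{*}(j_1,\dots,j_N)$.

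Concretely, first I would compute $(F)_{i_1\cdots i_N}^{j_1\cdots j_N}$ directly from the definition of $F$. Applying $F$ to $|j_1,\dots,j_N\rangle$: for each $\rho$ and each $k\in I(\rho)$, the operator $R_{id}^{\rho}(\vec x)$ acts first, producing a linear combination of basis vectors; then $\prod_i E_{\rho(i)}^{(k_i,k_i)}$ is the diagonal projector onto the basis vector $|m_1,\dots,m_N\rangle$ with $m_{\rho(t)}=k_t$ for all $t$. Hence the $|i_1,\dots,i_N\rangle$-coefficient is nonzero only when $i_{\rho(t)}=k_t$ for every $t$; substituting this back, we need $k=(i_{\rho(1)},\dots,i_{\rho(N)})\in I(\rho)$. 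Unwinding the definition (\ref{I}) of $I(\rho)$ with $k_t=i_{\rho(t)}$ gives precisely the two conditions: $0\le i_{\rho(N)}\le\cdots\le i_{\rho(1)}\le 1$ (the monotonicity clause), and — here is the slightly delicate point — the contrapositive manipulation. The clause ``$\rho(t)>\rho(t+1)\Rightarrow k_t>k_{t+1}$'' with $k_t=i_{\rho(t)}$ reads ``$\rho(t)>\rho(t+1)\Rightarrow i_{\rho(t)}>i_{\rho(t+1)}$''; combined with the already-forced weak decrease $i_{\rho(t)}\ge i_{\rho(t+1)}$, this is equivalent to ``$i_{\rho(t)}=i_{\rho(t+1)}\Rightarrow \rho(t)<\rho(t+1)$'', which is exactly the defining condition of $\rho(i_1,\dots,i_N)$ in the statement. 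So there is a unique such $\rho$, namely $\rho=\rho(i_1,\dots,i_N)$, and for it $k$ is uniquely determined, whence $(F)_{i_1\cdots i_N}^{j_1\cdots j_N}=(R_{id}^{\rho})_{i_1\cdots i_N}^{j_1\cdots j_N}$. The computation for $F^{*}$ is the transpose/dual of this: $F^{*}|j_1,\dots,j_N\rangle$ has its projector $\prod_i E_{\rho(i)}^{(k_i,k_i)}$ acting \emph{first}, so it is nonzero only when $j_{\rho(t)}=k_t$ for all $t$, then $R_{\rho}^{id}$ acts; the constraint $k\in I'(\rho)$ with $k_t=j_{\rho(t)}$ unwinds via (\ref{I'}) to the two conditions defining $\rho^{*}(j_1,\dots,j_N)$, giving $(F^{*})_{i_1\cdots i_N}^{j_1\cdots j_N}=(R_{\rho^{*}}^{id})_{i_1\cdots i_N}^{j_1\cdots j_N}$.

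The main obstacle — really the only place requiring care rather than bookkeeping — is verifying the logical equivalence between the clause in the definition of $I(\rho)$ (resp.\ $I'(\rho)$) specialized at $k_t=i_{\rho(t)}$ and the ``ties broken by position'' condition characterizing $\rho(i_1,\dots,i_N)$ (resp.\ $\rho^{*}$), together with checking that such a $\rho$ exists and is unique. Existence and uniqueness follow because specifying $\rho$ amounts to choosing a total order on $\{1,\dots,N\}$ refining the preorder ``$i_{\rho(t)}$ weakly decreasing, ties broken by increasing index'' — this is a well-defined linear extension (indeed the unique one making the stated strict inequalities hold), so $\rho$ is forced. I would state this as a short lemma or simply remark that it is the same uniqueness argument already used for $\rho(i_1,\dots,i_N)$ in the proof of Proposition \ref{P1} (around equation (\ref{E2})). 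Everything else is a direct expansion of the definitions of $F$ and $F^{*}$ and the action of diagonal projectors, so the proof can be kept to a few lines by pointing back to the proof of Proposition \ref{P1} for the shared combinatorial core.
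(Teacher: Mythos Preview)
Your proposal is correct and is exactly the argument the paper has in mind: the paper's own proof of Proposition~\ref{P2} consists solely of the sentence ``Using similar arguments, we obtain the following proposition,'' pointing back to the proof of Proposition~\ref{P1}. You have spelled out precisely those similar arguments---identifying, via the diagonal projectors, the unique $\rho$ (respectively $\rho^{*}$) surviving in the defining sum for $F$ (respectively $F^{*}$), which is the same uniqueness step as around equation~(\ref{E2}).
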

\begin{remark}
For the special case where $N=2$, this can be directly checked from (\ref{e1}) and (\ref{e2}).
\end{remark}

\section{Ice model related to Tokuyama's formula}\label{Sect.4}

In this section, based on the concepts introduced in Section \ref{Sect.3}, we apply the method outlined in Section \ref{Sect.1.2} to give a new derivation of the partition function of the lattice model in \cite{BBF}. As discussed in the Introduction, computing the partition function of this lattice model leads to an alternative proof of Tokuyama's formula \cite{Tok}. In Section \ref{Sect.4.1}, we review the setups of the model in \cite{BBF}. Then we present the computation of its partition function in Section \ref{Sect.4.2}.

\subsection{The lattice model}\label{Sect.4.1}

We review the lattice model in \cite{BBF} as follows. Let $\lambda=(\lambda_1,\cdots,\lambda_N)$ be a given partition (meaning that $\lambda_1\geq \cdots\geq\lambda_N\geq 0$), and $\rho=(N-1,\cdots,0)$. Consider a rectangular lattice with $N$ rows and $\lambda_1+N$ columns. The columns are labeled $0,1,\cdots,\lambda_1+N-1$ from right to left, and the rows are labeled $1,2,\cdots,N$ from bottom to top. Note that our notations differ from that of \cite{BBF} in that the ordering of the rows is reversed. The vertices are all $\Gamma$ ice, and the spectral parameter of the vertices in the $i$th row is given by $z_i$ for every $1\leq i\leq N$.

The boundary conditions are specified as follows. On the left and bottom boundaries we assign $0$ ($+$ spin). On the right boundary we assign $1$ ($-$ spin). On the top boundary, we assign $1$ to every column labeled $\lambda_i+N-i$ for $1\leq i\leq N$, and $0$ to the rest of the columns.

Let $z=(z_1,\cdots,z_N)$. We denote by $Z(\mathcal{S}_{\lambda,z})$ the partition function of the above lattice model.

As an illustration, when $N=3$ and $\lambda=(3,1,0)$, the model configuration is given below:

\begin{equation}
\hfill
\begin{tikzpicture}[baseline=(current bounding box.center)]
\draw (0,1)--(7,1);
\draw (0,2)--(7,2);
\draw (0,3)--(7,3);
\draw (1,0.5)--(1,3.5);
\draw (2,0.5)--(2,3.5);
\draw (3,0.5)--(3,3.5);
\draw (4,0.5)--(4,3.5);
\draw (5,0.5)--(5,3.5);
\draw (6,0.5)--(6,3.5);
\filldraw[black] (1,1) circle (1pt);
\filldraw[black] (2,1) circle (1pt);
\filldraw[black] (3,1) circle (1pt);
\filldraw[black] (4,1) circle (1pt);
\filldraw[black] (5,1) circle (1pt);
\filldraw[black] (6,1) circle (1pt);
\filldraw[black] (3,2) circle (1pt);
\filldraw[black] (2,2) circle (1pt);
\filldraw[black] (1,2) circle (1pt);
\filldraw[black] (4,2) circle (1pt);
\filldraw[black] (5,2) circle (1pt);
\filldraw[black] (6,2) circle (1pt);
\filldraw[black] (1,3) circle (1pt);
\filldraw[black] (2,3) circle (1pt);
\filldraw[black] (3,3) circle (1pt);
\filldraw[black] (4,3) circle (1pt);
\filldraw[black] (5,3) circle (1pt);
\filldraw[black] (6,3) circle (1pt);
\node at (0,1) [anchor=east] {$0$};
\node at (0,2) [anchor=east] {$0$};
\node at (0,3) [anchor=east] {$0$};
\node at (7,1) [anchor=west] {$1$};
\node at (7,2) [anchor=west] {$1$};
\node at (7,3) [anchor=west] {$1$};
\node at (-0.5,1) [anchor=east] {$1$};
\node at (-0.5,2) [anchor=east] {$2$};
\node at (-0.5,3) [anchor=east] {$3$};
\node at (-1,1) [anchor=east] {row};
\node at (0.5,1) [anchor=south] {$\Gamma$};
\node at (0.5,2) [anchor=south] {$\Gamma$};
\node at (0.5,3) [anchor=south] {$\Gamma$};
\node at (6.5,1) [anchor=south] {$z_1$};
\node at (6.5,2) [anchor=south] {$z_2$};
\node at (6.5,3) [anchor=south] {$z_3$};
\node at (1,3.5) [anchor=south] {$1$};
\node at (2,3.5) [anchor=south] {$0$};
\node at (3,3.5) [anchor=south] {$0$};
\node at (4,3.5) [anchor=south] {$1$};
\node at (5,3.5) [anchor=south] {$0$};
\node at (6,3.5) [anchor=south] {$1$};
\node at (1,4) [anchor=south] {$5$};
\node at (2,4) [anchor=south] {$4$};
\node at (3,4) [anchor=south] {$3$};
\node at (4,4) [anchor=south] {$2$};
\node at (5,4) [anchor=south] {$1$};
\node at (6,4) [anchor=south] {$0$};
\node at (0,4) [anchor=south] {column};
\node at (1,0.5) [anchor=north] {$0$};
\node at (2,0.5) [anchor=north] {$0$};
\node at (3,0.5) [anchor=north] {$0$};
\node at (4,0.5) [anchor=north] {$0$};
\node at (5,0.5) [anchor=north] {$0$};
\node at (6,0.5) [anchor=north] {$0$};
\end{tikzpicture}
\end{equation}

\subsection{Computation of the partition function}\label{Sect.4.2}

In this subsection, we use the method outlined in Section \ref{Sect.1.2} and the concepts introduced in Section \ref{Sect.3} to provide a new derivation of the partition function of the lattice model as reviewed in Section \ref{Sect.4.1}. The partition function was originally derived in \cite{BBF} using a different approach, which is based on combinatorics of Gelfand-Testlin patterns. The main result is the following theorem.

\begin{theorem}\label{Theorem1}
\begin{equation}
Z(\mathcal{S}_{\lambda,z})=\prod_{1\leq i< j\leq N}(z_j-vz_i)s_{\lambda}(z_1,\cdots,z_N),
\end{equation}
where $s_{\lambda}(z_1,\cdots,z_N)$ is the Schur polynomial.
\end{theorem}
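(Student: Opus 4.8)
The plan is to realize $Z(\mathcal S_{\lambda,z})$ as a single matrix element of a product of column operators, to conjugate that product by the $F$-matrix, and then to read the answer off the explicit form of the conjugated column operators. Concretely, reading the lattice of Section~\ref{Sect.4.1} column by column from right to left and gluing neighbouring columns along their common horizontal edges, the definition of the column operators gives
\[
Z(\mathcal S_{\lambda,z})=\langle 0,\dots,0|\,S^{[\alpha_{\lambda_1+N-1}]}(\vec z)\,S^{[\alpha_{\lambda_1+N-2}]}(\vec z)\cdots S^{[\alpha_1]}(\vec z)\,S^{[\alpha_0]}(\vec z)\,|1,\dots,1\rangle,
\]
where $\vec z=(z_1,\dots,z_N)$, the covector $\langle 0,\dots,0|$ encodes the left boundary, the vector $|1,\dots,1\rangle$ the right boundary, and $\alpha_k=1$ exactly when $k=\lambda_i+N-i$ for some $1\le i\le N$ (so precisely $N$ of the $\lambda_1+N$ factors carry $\alpha_k=1$, the integers $\lambda_i+N-i$ being distinct). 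By the conservation law for $\Gamma$ ice, $S^{[1]}$ lowers the number of $1$'s by one while $S^{[0]}$ preserves it, consistent with passing from $|1,\dots,1\rangle$ to $\langle 0,\dots,0|$ through exactly $N$ factors $S^{[1]}$.

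Next I would conjugate by $F$. Take $\vec z$ generic, so that $\Delta=FF^{*}$ of Proposition~\ref{P1} is invertible (its diagonal entries are products of the nonzero weights $b_2(z_a,z_b)$ and $a_2(z_a,z_b)$) and $F^{-1}=F^{*}\Delta^{-1}$; since both sides of the claimed identity are polynomials in $\vec z$, it suffices to treat generic $\vec z$. Insert $F^{-1}F=\mathrm{Id}$ between each pair of consecutive column operators and at the two ends. Because all $a_1$-weights of $\Gamma\Gamma$ ice equal $1$, Proposition~\ref{P2} together with particle-number conservation yields $F|1,\dots,1\rangle=|1,\dots,1\rangle$ (only $\rho=\mathrm{id}$ contributes) and $\langle 0,\dots,0|F^{-1}=\langle 0,\dots,0|$ (only $\rho=\sigma_0$, the longest element, contributes), both with coefficient $1$. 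Hence
\[
Z(\mathcal S_{\lambda,z})=\langle 0,\dots,0|\,\prod_{k}\bigl(F\,S^{[\alpha_k]}(\vec z)\,F^{-1}\bigr)\,|1,\dots,1\rangle,
\]
the product taken in the order of the first display.

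Now I would invoke Proposition~\ref{P3}, the explicit form of the conjugated column operators $\widetilde S^{[\alpha]}(\vec z):=F\,S^{[\alpha]}(\vec z)\,F^{-1}$: $\widetilde S^{[0]}$ acts in a factorised, particle-number-preserving fashion, while $\widetilde S^{[1]}$ is a sum over the row $a\in\{1,\dots,N\}$ whose $1$ is removed, each summand a simple tensor of elementary matrices with an explicit rational coefficient. Substituting these and expanding the matrix element, the only surviving terms are those in which the $N$ factors $\widetilde S^{[1]}$ remove pairwise distinct rows; thus the surviving terms are indexed by the bijections between $\{\lambda_i+N-i:1\le i\le N\}$ (the positions of the $\alpha_k=1$ columns) and $\{1,\dots,N\}$ (the rows), a row whose $1$ survives a block of consecutive $\alpha_k=0$ columns before being removed accruing a power of $z$ for that row equal to the length of the block — this is how $\lambda$ enters. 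Each surviving term is then an explicit product of the rational coefficients from Proposition~\ref{P3} and powers $z_i^{\lambda_j+N-j}$.

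Finally I would collect the sum. The $v$-dependent factors should assemble into $\prod_{1\le i<j\le N}(z_j-vz_i)$, and the remaining alternating sum over the bijections — after cancellation of the $1/(z_a-z_b)$ denominators introduced by $\Delta^{-1}$ — into the bialternant $\det\bigl(z_i^{\lambda_j+N-j}\bigr)_{1\le i,j\le N}\big/\prod_{1\le i<j\le N}(z_i-z_j)=s_{\lambda}(z_1,\dots,z_N)$, giving the theorem for generic $\vec z$ and hence, by polynomiality, in general. I expect this last step to be the main obstacle: extracting from Proposition~\ref{P3} the precise rational coefficients of $\widetilde S^{[0]}$ and $\widetilde S^{[1]}$, and then checking that the alternating sum over row-assignments genuinely collapses — the ``non-straight'' transitions cancelling against the $\Delta^{-1}$ denominators — to a Vandermonde-normalised determinant, with every occurrence of $v$ collecting exactly into $\prod_{i<j}(z_j-vz_i)$. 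This bookkeeping, rather than any single conceptual step, is where the real work lies; the reduction to a matrix element and its conjugation by $F$ are essentially formal consequences of the definitions and of Propositions~\ref{P1}--\ref{P2}.
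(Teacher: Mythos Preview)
Your approach is essentially identical to the paper's: write the partition function as a product of column operators sandwiched between $\langle 0,\dots,0|$ and $|1,\dots,1\rangle$, conjugate by $F$, observe that the boundary vectors are fixed, insert the explicit forms from Proposition~\ref{P3}, and sum over $\sigma\in S_N$. The only difference is that you stop short of the final algebraic step; in the paper this step is in fact brief, resting on the single identity
\[
\prod_{\substack{1\le i<j\le N\\ \sigma(i)>\sigma(j)}}\frac{z_{\sigma(i)}-vz_{\sigma(j)}}{z_{\sigma(j)}-vz_{\sigma(i)}}\,\prod_{1\le i<j\le N}(z_{\sigma(j)}-vz_{\sigma(i)})=\prod_{1\le i<j\le N}(z_j-vz_i),
\]
which holds for every $\sigma\in S_N$ and makes the $v$-dependent factors drop out uniformly, leaving exactly the bialternant sum for $s_\lambda$.
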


\paragraph{We give a brief outline of the proof as follows.} Instead of writing the partition function in terms of the product of row transfer matrices as usual, we write it in terms of \underline{column operators} as introduced in Section \ref{Sect.3.1}. See (\ref{E4}) below for the concrete expression. Then, using the $F$-matrix as introduced in Section \ref{Sect.3.4} (which is based on the permutation graph introduced in Section \ref{Sect.3.3}), we conjugate each column operator by the $F$-matrix, as in (\ref{Par}) below. Thanks to the conjugation procedure, the components of the \underline{conjugated column operators} have simple explicit forms as given in Proposition \ref{P3}. The computation of these components are based on basic properties of the $F$-matrix as established in Section \ref{Sect.3.5}, as well as the Yang-Baxter equations and properties of permutation graphs. Finally, based on these components, we can write the partition function as a sum over the symmetric group $S_n$, which leads to the proof of Theorem \ref{Theorem1}.

For all computations below, we assume implicitly that all quantities that appear in the denominator are non-zero, since otherwise we can vary $v$ and $z_1,\cdots,z_N$ and obtain the conclusion by continuity. 

Let $\vec{x}=(z_1,\cdots,z_N)$. Below we omit the argument $\vec{x}$ from the notation $S^{[\alpha]}(\vec{x})$.

To compute the partition function $Z(\mathcal{S}_{\lambda,z})$, we first note that it can be written in terms of column operators
\begin{equation}\label{E4}
  Z(\mathcal{S}_{\lambda,z})=\langle 0,\cdots,0|S^{[m_{\lambda_1+N-1}]}\cdots S^{[m_0]}|1,\cdots,1\rangle,
\end{equation}
where for every $0\leq j\leq \lambda_1+N-1$, $m_j=1$ if $j\in\{\lambda_i+N-i:i\in\{1,\cdots,N\}\}$ and $m_j=0$ otherwise. Here, $\langle 0,\cdots,0|$ and $|1,\cdots,1\rangle$ are determined from the left and right boundary conditions. 

Note that by Proposition \ref{P1}, we have
\begin{equation}
    F^{-1}=F^{*}\Delta^{-1}.
\end{equation}
Hence we can write (\ref{E4}) in the following form by conjugating each column operator by the $F$-matrix:
\begin{equation}\label{Par}
    Z(\mathcal{S}_{\lambda,z})=(\langle 0,\cdots,0| F^{-1}) (F S^{[m_{\lambda_1+N-1}]}F^{-1}) \cdots (F S^{[m_0]}F^{-1}) (F|1,\cdots,1\rangle).
\end{equation}
Therefore, it suffices to compute $\langle 0,\cdots,0| F^{-1}$, $F|1,\cdots,1\rangle$ and the components of the conjugated column operators $F S^{[m_{j}]}F^{-1}$ for $0\leq j\leq \lambda_1+N-1$. The motivation for introducing these conjugations is that the components of the conjugated operators can be explicitly computed. 

The following proposition explicitly computes $\langle 0,\cdots,0| F^{-1}$, $F|1,\cdots,1\rangle$, and the components of $F S^{[m]}F^{-1}$ for $m\in\{0,1\}$.

\begin{proposition}\label{P3}
We have
\begin{equation}
    \langle 0,\cdots 0 | F^{-1}=\langle 0,\cdots,0|,
\end{equation}
\begin{equation}
 F|1,\cdots 1 \rangle=|1,\cdots,1\rangle.
\end{equation}
For any $(i_1,\cdots,i_N),(j_1,\cdots,j_N)\in \{0,1\}^N$, we have
\begin{equation}
    (F S^{[0]} F^{-1})_{i_1\cdots i_N}^{j_1 \cdots j_N}=\prod_{t=1}^N \mathbbm{1}_{i_t=j_t}\prod_{t: i_t=1}z_t,
\end{equation}
\begin{eqnarray}
    (F S^{[1]} F^{-1})_{i_1\cdots i_N}^{j_1\cdots j_N}&=&\sum_{m=1}^N \mathbbm{1}_{i_m=0,j_m=1}\prod_{t:1\leq t\leq N, t\neq m}  \mathbbm{1}_{i_t=j_t}\prod_{t: i_t=1,j_t=1}z_t\nonumber\\
    &&\times  \prod_{t:t>m,i_t=1,j_t=1}\frac{z_t-v z_m}{z_m-vz_t}\prod_{t:i_t=0,j_t=0}\frac{z_t-v z_m}{z_m-z_t}.
\end{eqnarray}
Equivalently, we have
\begin{equation}
    F S^{[0]} F^{-1}=\bigotimes_{t\in\{1,\cdots,N\}}\begin{pmatrix}
    1 & 0\\
    0 & z_t
    \end{pmatrix}_t,
\end{equation}
\begin{eqnarray}
    F S^{[1]} F^{-1}=\sum_{m=1}^N \bigotimes_{t\in \{1,\cdots,m-1\}}
    \begin{pmatrix}
    \frac{z_t-vz_m}{z_m-z_t} & 0\\
    0 &z_t
    \end{pmatrix}_t
    \bigotimes \begin{pmatrix}
    0 & 1\\
    0 & 0
    \end{pmatrix}_m
    \bigotimes_{t\in\{m+1,\cdots,N\}}
    \begin{pmatrix}
    \frac{z_t-vz_m}{z_m-z_t} & 0\\
    0 &\frac{z_t(z_t-vz_m)}{z_m-vz_t}
    \end{pmatrix}_t,
\end{eqnarray}
where the basis vectors of each $W_t$ for $t\in\{1,\cdots,N\}$ are ordered as $|0\rangle$, $|1\rangle$.
\end{proposition}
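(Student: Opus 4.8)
# Proof Proposal for Proposition \ref{P3}

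The plan is to compute each of the four objects by combining the explicit descriptions of $F$ and $F^{*}$ from Proposition \ref{P2}, the formula $F^{-1}=F^{*}\Delta^{-1}$ from Proposition \ref{P1}, and properties of the permutation graph and the $RTT$ relation. I will treat the boundary vectors first, since they are the easiest, and then the two conjugated column operators.

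\emph{Step 1: The boundary vectors.} For $F|1,\dots,1\rangle$, note that in the definition $F=\sum_{\rho}\sum_{(k_i)\in I(\rho)}\prod_i E_{\rho(i)}^{(k_i,k_i)} R_{id}^{\rho}(\vec{x})$, the only term whose diagonal projectors $\prod_i E_{\rho(i)}^{(k_i,k_i)}$ do not kill $|1,\dots,1\rangle$ on the right is the one with all $k_i=1$; by the definition of $I(\rho)$ in (\ref{I}) this forces $\rho=id$ (the all-ones tuple is in $I(\rho)$ only when no descent of $\rho$ is required, but actually $(1,\dots,1)\in I(\rho)$ for every $\rho$ since $k_t>k_{t+1}$ can never be required when all $k_i$ are equal — wait, the condition in (\ref{I}) is ``if $\rho(t)>\rho(t+1)$ then $k_t>k_{t+1}$'', which fails for the all-ones tuple unless $\rho=id$). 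Hence only $\rho=id$ contributes, $R_{id}^{id}=1$, and $F|1,\dots,1\rangle=|1,\dots,1\rangle$. Dually, using $\langle 0,\dots,0|F^{-1}=\langle 0,\dots,0|F^{*}\Delta^{-1}$ and the explicit form of $F^{*}$ in terms of $\prod_i E_{\rho^{*}(i)}^{(k_i,k_i)}$ acting on the left, only the all-zeros tuple survives, forcing $\rho^{*}=id$ (by the analogous descent condition in (\ref{I'})), so $\langle 0,\dots,0|F^{*}=\langle 0,\dots,0|$; since $(\Delta)_{0\cdots 0}^{0\cdots 0}=1$ (the products in Proposition \ref{P1} are empty), we get $\langle 0,\dots,0|F^{-1}=\langle 0,\dots,0|$.

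\emph{Step 2: The $S^{[0]}$ conjugation.} The column operator $S^{[0]}$ has top and bottom boundary both $0$, so its only admissible states have each vertex in an $a_1$ or $a_2$ pattern (conservation forces $i_t=j_t$ and the vertical line carries $0$ throughout if $i_t=0$, hence picks up the $a_2$-weight $z_t$ exactly when $i_t=1$); thus $S^{[0]}=\bigotimes_t \mathrm{diag}(1,z_t)_t$ already, a diagonal operator. Since $F$, $F^{*}$, $\Delta$ are all block-respecting and $S^{[0]}$ is diagonal, I expect $F S^{[0]} F^{-1}=S^{[0]}$ directly. The clean way is: $\Delta^{-1}$ and $S^{[0]}$ are both diagonal hence commute, $F^{*}$ followed by $F$ reconstitutes $\Delta$ up to the reordering implicit in Proposition \ref{P2}; more carefully, I would verify $F S^{[0]} = S^{[0]} F$ componentwise using $(F)_{i_1\cdots i_N}^{j_1\cdots j_N}=(R_{id}^{\rho(i_1,\dots,i_N)})_{i_1\cdots i_N}^{j_1\cdots j_N}$ together with the fact that a permutation graph built from $\Gamma\Gamma$ $R$-vertices, sandwiched against diagonal weight operators, satisfies the required intertwining — equivalently invoke the $RTT$ relation (Proposition \ref{YBE1}) to slide the column of $\Gamma$ ice past the braiding that defines $R_{id}^{\rho}$. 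Then $F S^{[0]} F^{-1}=S^{[0]} F F^{-1}=S^{[0]}$.

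\emph{Step 3: The $S^{[1]}$ conjugation — the main obstacle.} This is where the real work lies. Here $S^{[1]}$ has top boundary $1$ and bottom $0$, so exactly one row, say row $m$, carries a $c$-pattern: below row $m$ the vertical edge is $0$, above it is $1$. Writing $F S^{[1]} F^{-1}$ componentwise as $\sum (F)_{i_1\cdots i_N}^{a_1\cdots a_N}(S^{[1]})_{a_1\cdots a_N}^{b_1\cdots b_N}(F^{-1})_{b_1\cdots b_N}^{j_1\cdots j_N}$ and substituting the Proposition \ref{P2}/Proposition \ref{P1} formulas turns this into a sum over the location $m$ of the $c$-vertex and over intermediate permutations $\rho,\rho^{*}$. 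The strategy is to recognize, for fixed $m$, the middle object as a column operator for a shortened lattice with a permutation applied, and to use the $RTT$ relation repeatedly to move the $F$-braiding through the column one vertex at a time; each passage converts a $\Gamma\Gamma$ $R$-weight times a $\Gamma$-ice weight into the product form $\frac{z_t-vz_m}{z_m-z_t}$ (for $t<m$, where the vertical edge is still $0$ and only $a_1/b$-type $\Gamma\Gamma$ patterns occur against $a_1$-ice) versus $\frac{z_t-vz_m}{z_m-vz_t}$ or $z_t$ (for $t>m$, where the vertical edge is $1$). The $c_1$-weight $z_m(1-v)$ of the pivot vertex, divided by the relevant entry of $\Delta$, should produce the bare $1$ in the $\begin{pmatrix}0&1\\0&0\end{pmatrix}_m$ block after the deformation-parameter factors are absorbed into the neighboring rows. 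The bookkeeping of which $\Gamma\Gamma$ patterns ($a_1,a_2,b_1,b_2,c_1,c_2$) appear at each crossing, and checking that the telescoping of ratios from $\Delta^{-1}$ against the permutation-graph weights collapses to exactly the stated products, is the delicate part; I would organize it by first doing $N=2$ and $N=3$ explicitly (the $N=2$ case is already visible from (\ref{e1}), (\ref{e2}), (\ref{e3})) to fix the pattern, then induct on $N$ by peeling off row $N$ using the recursive definition $R_{\rho_1\circ s_i}^{\rho_2}=R_{\rho_1}^{\rho_2}R_{\rho_1\circ s_i}^{\rho_1}$ and Proposition \ref{YBE1}.
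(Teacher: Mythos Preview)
Your Step 1 is essentially fine, though two small slips: in $F^{*}$ the projectors $\prod_i E_{\rho(i)}^{(k_i,k_i)}$ act on the \emph{right}, not the left, so you first need $\langle 0,\dots,0|R_{\rho}^{id}=\langle 0,\dots,0|$ (all crossings are $a_1$ patterns with weight $1$); and the all-zeros tuple lies in $I'(\rho)$ only for $\rho=\sigma_0$, not $\rho=id$. Neither affects the conclusion.

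Step 2 contains a genuine error: $S^{[0]}$ is \emph{not} diagonal. Take $N=2$, $(i_1,i_2)=(0,1)$, $(j_1,j_2)=(1,0)$. Then vertex $1$ is a $c_2$ pattern (left $+$, bottom $+$, right $-$, top $-$) and vertex $2$ is a $c_1$ pattern (left $-$, bottom $-$, right $+$, top $+$), so $(S^{[0]})_{01}^{10}=c_2(z_1)c_1(z_2)=z_2(1-v)\neq 0$. Spin conservation only gives $\sum_t i_t=\sum_t j_t$, not $i_t=j_t$ individually. So the shortcut ``$FS^{[0]}F^{-1}=S^{[0]}$'' fails, and you need a real argument here, the same one needed for $\alpha=1$.

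The paper treats $\alpha=0$ and $\alpha=1$ uniformly and avoids any induction on $N$. Using Proposition~\ref{P2} and $F^{-1}=F^{*}\Delta^{-1}$ it writes
\[
(FS^{[\alpha]}F^{-1})_{i_1\cdots i_N}^{j_1\cdots j_N}
=(R_{id}^{\rho}\,S^{[\alpha]}\,R_{\rho^{*}}^{id})_{i_1\cdots i_N}^{j_1\cdots j_N}\,(\Delta^{-1})_{j_1\cdots j_N}^{j_1\cdots j_N},
\]
with $\rho,\rho^{*}$ determined by $(i_\bullet),(j_\bullet)$. Then a single application of the $RTT$ relation pushes the left braiding through the column to give $(S^{[\alpha]}_{\rho}((z_{\rho(1)},\dots,z_{\rho(N)}))\,R_{\rho^{*}}^{\rho})_{i_1\cdots i_N}^{j_1\cdots j_N}$. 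The point is that the boundary labels on this new configuration are now \emph{sorted} (all $1$'s below all $0$'s on the left, all $0$'s below all $1$'s on the right), which forces almost every internal edge. For $\alpha=0$ the configuration has a unique admissible state with only $a_1,a_2,b_2$ crossings; for $\alpha=1$ one first strips off the lines with $i_a=j_a=1$ (their spins are forced), then a second $RTT$ push moves the remaining braiding to the \emph{left} of the column, after which the unique state has a single $c_2$ vertex of weight $1$ (not $c_1$) and trivial crossings. In both cases the explicit weight divided by $(\Delta^{-1})_{j\cdots}^{j\cdots}$ collapses to the stated formula. Your inductive sketch in Step 3 might be made to work, but the paper's direct state-counting after sorting is considerably cleaner and is what actually rescues Step 2 as well.
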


\begin{proof}
We start with the computation of $\langle 0,\cdots,0 | F^{-1}$. For any $(j_1,\cdots,j_N)\in\{0,1\}^N$, as $F^{-1}=F^{*}\Delta^{-1}$ and $\Delta$ is a diagonal matrix, we have
\begin{eqnarray*}
(\langle 0,\cdots ,0| F^{-1} )^{j_1\cdots j_N}&=& \langle 0,\cdots ,0| F^{*}\Delta^{-1}  |j_1,\cdots,j_N\rangle\\
&=& (F^{*})_{0\cdots 0}^{j_1\cdots j_N}(\Delta^{-1})_{j_1\cdots j_N}^{j_1\cdots j_N}.
\end{eqnarray*}
By Proposition \ref{P1},
\begin{eqnarray}\label{Delt}
    (\Delta^{-1})_{j_1\cdots j_N}^{j_1\cdots j_N} &=& \prod_{(a,b):j_a=1,j_b=0} b_2(z_a,z_b)^{-1} \prod_{(a,b):a<b,j_a=1,j_b=1}a_2(z_a,z_b)^{-1}  \nonumber\\
    &=& \prod_{(a,b):j_a=1, j_b=0}\frac{z_b-v z_a}{z_a-z_b} \prod_{(a,b): a<b,j_a=1,j_b=1}\frac{z_b-vz_a}{z_a-vz_b}.
\end{eqnarray}
By Proposition \ref{P2}, 
\begin{equation}
    (F^{*})_{0\cdots 0}^{j_1\cdots j_N}=(R_{\rho^{*}}^{id})_{0\cdots 0}^{j_1\cdots j_N},
\end{equation}
where $\rho^{*}\in S_N$ is the unique permutation determined by
\begin{eqnarray}\label{Con2}
   && 0\leq j_{\rho^{*}(1)}\leq\cdots \leq j_{\rho^{*}(N)}\leq 1,\nonumber\\
   && j_{\rho^{*}(t)}=j_{\rho^{*}(t+1)}\text{ implies }\rho^{*}(t)>\rho^{*}(t+1), \text{ for every }1\leq t\leq N-1.
\end{eqnarray}
By the definition of permutation graph and the Boltzmann weights of the $R$-vertices, in order for $(F^{*})_{0\cdots 0}^{j_1\cdots j_N}$ to be non-vanishing, we necessarily have
\begin{equation}
    j_1=\cdots=j_N=0.
\end{equation}
In this case, $(F^{*})_{0\cdots 0}^{0\cdots 0}=1$, $(\Delta^{-1})_{0\cdots 0}^{0\cdots 0}=1$. Therefore we conclude that
\begin{equation}
    \langle 0,\cdots,0|F^{-1}=\langle 0,\cdots,0|.
\end{equation}

Now we compute $F|1,\cdots,1\rangle$. For any $(i_1,\cdots,i_N)\in\{0,1\}^N$, we have by Proposition \ref{P2},
\begin{eqnarray*}
(F|1,\cdots,1\rangle))_{i_1\cdots i_N}= (F)_{i_1\cdots i_N}^{1\cdots 1}
=(R_{id}^{\rho})_{i_1\cdots i_N}^{1\cdots 1},
\end{eqnarray*}
where $\rho\in S_N$ is the unique permutation that satisfies the condition 
\begin{eqnarray}\label{Con1}
  && 0\leq i_{\rho(N)}\leq \cdots \leq i_{\rho(1)} \leq 1,\nonumber\\
  && i_{\rho(t)}=i_{\rho(t+1)}\text{ implies }\rho(t)<\rho(t+1), \text{ for every }  1\leq t\leq N-1.
\end{eqnarray}
In order for $(F)_{i_1\cdots i_N}^{1\cdots 1}$ to be non-vanishing, necessarily
\begin{equation}
    i_1=\cdots=i_N=1.
\end{equation}
In this case, $\rho=id$, and $(F)_{1\cdots 1}^{1\cdots 1}=1$. Therefore we conclude that
\begin{equation}
    F|1,\cdots 1 \rangle=|1,\cdots,1\rangle.
\end{equation}

Finally, we compute the components of $F S^{[\alpha]} F^{-1}$ for $\alpha\in \{0,1\}$. For any $(i_1,\cdots,i_N),(j_1,\cdots,j_N)\in\{0,1\}^N$, by Proposition \ref{P2}, we have
\begin{eqnarray*}
(F S^{[\alpha]} F^{-1})_{i_1\cdots i_N}^{j_1\cdots j_N} &=&  (F S^{[\alpha]} F^{*})_{i_1\cdots i_N}^{j_1\cdots j_N} (\Delta^{-1})_{j_1\cdots j_N}^{j_1\cdots j_N}\\
&=& (R_{id}^{\rho} S^{[\alpha]} R_{\rho^{*}}^{id})_{i_1\cdots i_N}^{j_1\cdots j_N} (\Delta^{-1})_{j_1\cdots j_N}^{j_1 \cdots j_N},
\end{eqnarray*}
where $\rho,\rho^{*}$ are determined by the conditions (\ref{Con1}) and (\ref{Con2}), respectively.

We note that $(\Delta^{-1})_{j_1\cdots j_N}^{j_1 \cdots j_N}$ is already computed in (\ref{Delt}). Now we compute the component $(R_{id}^{\rho} S^{[\alpha]} R_{\rho^{*}}^{id})_{i_1\cdots i_N}^{j_1\cdots j_N}$. By the Yang-Baxter equations (the ``$RTT$'' version, see Proposition \ref{YBE1}), we can sequentially push the $R$-braids to the right, which leads to
\begin{eqnarray}\label{P}
   (R_{id}^{\rho} S^{[\alpha]} R_{\rho^{*}}^{id})_{i_1\cdots i_N}^{j_1\cdots j_N}&=& (S^{[\alpha]}_{\rho}((z_{\rho(1)},\cdots,z_{\rho(N)})) R_{id}^{\rho} R_{\rho^{*}}^{id})_{i_{1}\cdots i_{N}}^{j_1\cdots j_N}\nonumber\\
   &=& (S^{[\alpha]}_{\rho}((z_{\rho(1)},\cdots,z_{\rho(N)})) R_{\rho^{*}}^{\rho})_{i_{1}\cdots i_{N}}^{j_1\cdots j_N}.
\end{eqnarray}
An illustration of the permutation graph corresponding to the right hand side of (\ref{P}) is given in the following figure.

\begin{equation}
\hfill
\begin{tikzpicture}[baseline=(current bounding box.center)]
  \draw (-0.5,0) to (0.5,0);
  \draw (-0.5,1) to (0.5,1);
  \draw (-0.5,2) to (0.5,2);
  \draw (-0.5,3) to (0.5,3);
  \draw (0,-0.5) to (0,3.5);
  \node at (0,3.5) [anchor=south] {$\alpha$};
  \node at (0,-0.5) [anchor=north] {$0$};
  \draw (0.5,3) to [out=0, in=180] (5,0);
  \draw (0.5,0) to [out=0, in=180] (5,1);
  \draw (0.5,1) to [out=0,in=180] (5,3);
  \draw (0.5,2) to [out=0,in=180] (5,2);
  \node at (-0.5,1.5) [anchor=east] {$\cdots$};
  \node at (-0.5,0) [anchor=east] {$i_{\rho(1)}$};
  \node at (-0.5,1) [anchor=east] {$i_{\rho(2)}$};
  \node at (-0.5,2) [anchor=east] {$i_{\rho(N-1)}$};
  \node at (-0.5,3) [anchor=east] {$i_{\rho(N)}$};
  \node at (5,0) [anchor=west] {$j_{\rho^{*}(1)}$};
  \node at (5,1.5) [anchor=west] {$\cdots$};
  \node at (5,1) [anchor=west] {$j_{\rho^{*}(2)}$};
  \node at (5,2) [anchor=west] {$j_{\rho^{*}(N-1)}$};
  \node at (5,3) [anchor=west] {$j_{\rho^{*}(N)}$};
  \filldraw[black] (0.5,0) circle (1pt);
  \filldraw[black] (0.5,1) circle (1pt);
  \filldraw[black] (0.5,2) circle (1pt);
  \filldraw[black] (0.5,3) circle (1pt);
  \filldraw[black] (0,0) circle (1pt);
  \filldraw[black] (0,1) circle (1pt);
  \filldraw[black] (0,2) circle (1pt);
  \filldraw[black] (0,3) circle (1pt);
\end{tikzpicture}\quad\quad
\end{equation}

Now we compute the components $(S_{\rho}^{[\alpha]}((z_{\rho(1)},\cdots,z_{\rho(N)})) R_{\rho^{*}}^{\rho})_{i_{1}\cdots i_{N}}^{j_1\cdots j_N}$. We consider the two cases $\alpha=0$ and $\alpha=1$ separately.

\textbf{First consider the case where $\alpha=0$.} By the definitions of $\rho, \rho^{*}$ and spin conservation, there exists $0\leq s \leq N$, such that
\begin{eqnarray*}
  && i_{\rho(1)}=\cdots=i_{\rho(s)}=1,\quad i_{\rho(s+1)}=\cdots=i_{\rho(N)}=0,\\
  && j_{\rho^{*}(1)}=\cdots=j_{\rho^{*}(N-s)}=0,\quad  j_{\rho^{*}(N-s+1)}=\cdots=j_{\rho^{*}(N)}=1.
\end{eqnarray*}
Moreover, by the conditions (\ref{Con1}) and (\ref{Con2}), we can deduce that
\begin{eqnarray*}
    && \rho(1)<\cdots<\rho(s),\quad \rho(s+1)<\cdots<\rho(N),\\
    && \rho^{*}(1)>\cdots>\rho^{*}(N-s),\quad \rho^{*}(N-s+1)>\cdots>\rho^{*}(N).
\end{eqnarray*}
By spin conservation, we can further deduce that in order for $(S_{\rho}^{[0]}((z_{\rho(1)},\cdots,z_{\rho(N)})) R_{\rho^{*}}^{\rho})_{i_{1}\cdots i_{N}}^{j_1\cdots j_N}$ to be non-vanishing, we necessarily have $j_a=i_a$ for any $1\leq a\leq N$ (which we assume in the following).

It can be checked that there is a unique admissible state of the lattice model corresponding to $(S_{\rho}^{[0]}((z_{\rho(1)},\cdots,z_{\rho(N)})) R_{\rho^{*}}^{\rho})_{i_{1}\cdots i_{N}}^{j_1\cdots j_N}$, as indicated below. 
\begin{equation}
\hfill
\begin{tikzpicture}[baseline=(current bounding box.center)]
  \draw (-0.5,0) to (0.5,0);
  \draw (-0.5,1) to (0.5,1);
  \draw (-0.5,3) to (0.5,3);
  \draw (-0.5,4) to (0.5,4);
  \draw (0,-0.5) to (0,4.5);
  \node at (0,4.5) [anchor=south] {$0$};
  \node at (0,-0.5) [anchor=north] {$0$};
  \draw (0.5,0) to [out=0, in=-120] (1.5,1) to [out=60, in=180] (7,4);
  \draw (0.5,1) to [out=0,in=180] (7,3);
   \draw (0.5,4) to [out=0, in=120] (1.5,3) to [out=-60, in=180] (7,0);
  \draw (0.5,3) to [out=0,in=180] (7,1);
  \node at (-0.5,0.5) [anchor=east] {$\cdots$};
  \node at (-0.5,0) [anchor=east] {$i_{\rho(1)}=1$};
  \node at (-0.5,1) [anchor=east] {$i_{\rho(s)}=1$};
  \node at (-0.5,3.5) [anchor=east] {$\cdots$};
  \node at (-0.5,3) [anchor=east] {$i_{\rho(s+1)}=0$};
  \node at (-0.5,4) [anchor=east] {$i_{\rho(N)}=0$};
  \node at (7,0) [anchor=west] {$j_{\rho^{*}(1)}=0$};
  \node at (7,0.5) [anchor=west] {$\cdots$};
  \node at (7,1) [anchor=west] {$j_{\rho^{*}(N-s)}=0$};
  \node at (7,3) [anchor=west] {$j_{\rho^{*}(N-s+1)}=1$};
  \node at (7,4) [anchor=west] {$j_{\rho^{*}(N)}=1$};
   \node at (7,3.5) [anchor=west] {$\cdots$};
  \filldraw[black] (0.5,0) circle (1pt);
  \filldraw[black] (0.5,1) circle (1pt);
  \filldraw[black] (0.5,3) circle (1pt);
  \filldraw[black] (0.5,4) circle (1pt);
  \filldraw[black] (0,0) circle (1pt);
  \filldraw[black] (0,1) circle (1pt);
  \filldraw[black] (0,3) circle (1pt);
  \filldraw[black] (0,4) circle (1pt);
  
  \node at (0.5,0) [anchor=south] {$1$};
  \node at (0.5,1) [anchor=south] {$1$};
  \node at (0.5,4) [anchor=south] {$0$};
  \node at (0.5,3) [anchor=south] {$0$};
  \node at (1.7,1.4) [anchor=south] {$1$};
  \node at (2,1.3) [anchor=north] {$1$};
  \node at (1.7,2.6) [anchor=north] {$0$};
  \node at (2,2.7) [anchor=south] {$0$};
  \node at (2.3,1.4) [anchor=south] {$0$};
  \node at (2.3,2.6) [anchor=north] {$1$};
  \node at (3.3,1.4) [anchor=south] {$1$};
  \node at (3.3,2.6) [anchor=north] {$0$};
   \node at (0,3.5) [anchor=east] {$0$};
  \node at (0,2) [anchor=east] {$0$};
  \node at (0,0.5) [anchor=east] {$0$};
\end{tikzpicture}\quad\quad
\end{equation}
The Boltzmann weight of this unique state is (note that there are only crossings of $a_1$, $a_2$, or $b_2$ pattern in the permutation graph part as in the above figure)
\begin{eqnarray*}
   \prod_{t=1}^N \mathbbm{1}_{i_t=j_t}\prod_{a:i_a=1}b_2(z_a)\prod_{(a,b):a<b,i_a=1,i_b=1}a_2(z_a,z_b)\prod_{(a,b):i_a=1,i_b=0}b_2(z_a,z_b).
\end{eqnarray*}
Hence
\begin{eqnarray*}
    &&(S_{\rho}^{[0]}((z_{\rho(1)},\cdots,z_{\rho(N)})) R_{\rho^{*}}^{\rho})_{i_{1}\cdots i_{N}}^{j_1\cdots j_N}\\
    &=&\prod_{t=1}^N \mathbbm{1}_{i_t=j_t}\prod_{a:i_a=1}b_2(z_a)\prod_{(a,b):a<b,i_a=1,i_b=1}a_2(z_a,z_b)\prod_{(a,b):i_a=1,i_b=0}b_2(z_a,z_b).
\end{eqnarray*}
Therefore, we conclude that 
\begin{equation}
    (F S^{[0]}  F^{-1})_{i_1\cdots i_N}^{j_1 \cdots j_N}=\prod_{t=1}^N \mathbbm{1}_{i_t=j_t}\prod_{a: i_a=1}z_a.
\end{equation}

\textbf{Now we consider the case where $\alpha=1$.} Again, by the definitions of $\rho, \rho^{*}$ and spin conservation, there exists $0\leq s\leq N-1$, such that
\begin{eqnarray*}
  && i_{\rho(1)}=\cdots=i_{\rho(s)}=1,\quad i_{\rho(s+1)}=\cdots=i_{\rho(N)}=0,\\
  && j_{\rho^{*}(1)}=\cdots=j_{\rho^{*}(N-s-1)}=0,\quad  j_{\rho^{*}(N-s)}=\cdots=j_{\rho^{*}(N)}=1.
\end{eqnarray*}
Moreover, by the conditions (\ref{Con1}) and (\ref{Con2}),
\begin{eqnarray*}
    && \rho(1)<\cdots<\rho(s),\quad \rho(s+1)<\cdots<\rho(N),\\
    && \rho^{*}(1)>\cdots>\rho^{*}(N-s-1),\quad \rho^{*}(N-s)>\cdots>\rho^{*}(N).
\end{eqnarray*}

By spin conservation, we can deduce that in order for $(S_{\rho}^{[1]}((z_{\rho(1)},\cdots,z_{\rho(N)})) R_{\rho^{*}}^{\rho})_{i_{1}\cdots i_{N}}^{j_1\cdots j_N}$ to be non-vanishing, there is a unique integer $1\leq m\leq N$, such that $i_m=0$, $j_m=1$, and $j_a=i_a$ for any $a\neq m$ (which we assume in the following).

For any admissible state corresponding to $(S_{\rho}^{[1]}((z_{\rho(1)},\cdots,z_{\rho(N)})) R_{\rho^{*}}^{\rho})_{i_{1}\cdots i_{N}}^{j_1\cdots j_N}$, the spin of some edges has to take a fixed value, as indicated below.
\begin{equation}
\hfill
\begin{tikzpicture}[baseline=(current bounding box.center)]
  \draw (-0.5,0) to (0.5,0);
  \draw (-0.5,1) to (0.5,1);
  \draw (-0.5,3) to (0.5,3);
  \draw (-0.5,4) to (0.5,4);
  \draw (-0.5,5) to (0.5,5);
  \draw (0,-0.5) to (0,5.5);
  \node at (0,5.5) [anchor=south] {$1$};
  \node at (0,-0.5) [anchor=north] {$0$};
  \draw (0.5,0) to [out=0,in=-120] (3.5,3.5) to [out=60,in=180] (7,5);
  \draw (0.5,1) to [out=0,in=180] (7,3);
   \draw (0.5,4) to (7,4);
   \draw (0.5,5) to [out=0,in=180] (7,0);
  \draw (0.5,3) to [out=0,in=150] (3,2) to [out=-30,in=180]  (7,1);
  \node at (-0.5,0.5) [anchor=east] {$\cdots$};
  \node at (-0.5,0) [anchor=east] {$i_{\rho(1)}=1$};
  \node at (-0.5,1) [anchor=east] {$i_{\rho(s)}=1$};
  \node at (-0.5,3.5) [anchor=east] {$\cdots$};
  \node at (-0.5,3) [anchor=east] {$i_{\rho(s+1)}=0$};
  \node at (-0.5,4.5) [anchor=east] {$\cdots$};
  \node at (-0.5,5) [anchor=east] {$i_{\rho(N)}=0$};
    \node at (-0.5,4) [anchor=east] {$i_{m}=0$};
  \node at (7,0) [anchor=west] {$j_{\rho^{*}(1)}=0$};
  \node at (7,0.5) [anchor=west] {$\cdots$};
  \node at (7,1) [anchor=west] {$j_{\rho^{*}(N-s-1)}=0$};
  \node at (7,3) [anchor=west] {$j_{\rho^{*}(N-s)}=1$};
  \node at (7,4) [anchor=west] {$j_m=1$};
   \node at (7,5) [anchor=west] {$j_{\rho^{*}(N)}=1$};
  \node at (7,3.5) [anchor=west] {$\cdots$};
   \node at (7,4.5) [anchor=west] {$\cdots$};
  \filldraw[black] (0.5,0) circle (1pt);
  \filldraw[black] (0.5,1) circle (1pt);
  \filldraw[black] (0.5,3) circle (1pt);
  \filldraw[black] (0.5,4) circle (1pt);
  \filldraw[black] (0,0) circle (1pt);
  \filldraw[black] (0,1) circle (1pt);
  \filldraw[black] (0,3) circle (1pt);
  \filldraw[black] (0,4) circle (1pt);
  \filldraw[black] (0,5) circle (1pt);
  \filldraw[black] (0.5,5) circle (1pt);
  \node at (0,0.5) [anchor=east] {$0$};
  \node at (0,2) [anchor=east] {$0$};
  \node at (0.5,0) [anchor=south] {$1$};
  \node at (0.5,1) [anchor=south] {$1$};
  \node at (2.9,1.6) [anchor=north] {$1$};
  \node at (2.8,1.6) [anchor=south east] {$1$};
  \node at (3.9,1.6) [anchor=north] {$0$};
  \node at (4.2,1.5) [anchor=south west] {$0$};
  \node at (3.2,4) [anchor=south] {$1$};
  \node at (3.4,3.8) [anchor=north west] {$1$};
  \node at (1.8,2.6) [anchor=south] {$0$};
  \node at (3.1,3.7) [anchor=north east] {$0$};
  \node at (3.2,2.5) [anchor=south east] {$1$};
  \node at (3.5,2.5) [anchor=south west] {$0$};
  \node at (2.9,1.8) [anchor=south west] {$0$};
  \node at (3.9,1.8) [anchor=south east] {$1$};
\end{tikzpicture}\quad\quad
\end{equation}

Therefore, we can remove the lines corresponding to $1\leq a\leq N$ such that $i_a=j_a=1$ (together with the corresponding part of the column configuration) up to a factor of
\begin{eqnarray}\label{Q}
    &&\prod_{a:i_a=1,j_a=1}b_2(z_a)\prod_{(a,b):a<b,i_a=1,j_a=1,i_b=1,j_b=1}a_2(z_a,z_b)\nonumber\\
    &\times& \prod_{a: a<m,i_a=1,j_a=1}a_2(z_a,z_m)\prod_{(a,b): i_a=1,j_a=1,i_b=0,j_b=0}b_2(z_a,z_b).
\end{eqnarray}

That is, $(S_{\rho}^{[1]}((z_{\rho(1)},\cdots,z_{\rho(N)})) R_{\rho^{*}}^{\rho})_{i_{1}\cdots i_{N}}^{j_1\cdots j_N}$ is equal to the factor (\ref{Q}) times the partition function of the following configuration
\begin{equation}
\hfill
\begin{tikzpicture}[baseline=(current bounding box.center)]
  \draw (-0.5,0) to (0.5,0);
  \draw (-0.5,1) to (0.5,1);
  \draw (-0.5,2) to (0.5,2);
  \node at (0,2.5) [anchor=south] {$1$};
  \draw (0,-0.5) to (0,2.5);
  \node at (0,-0.5) [anchor=north] {$0$};
  \draw (0.5,0) to [out=0,in=180] (5,1);
  \draw (0.5,1) to [out=0,in=180] (5,2);
  \draw (0.5,2) to [out=0,in=180] (5,0);
  \node at (-0.5,0.5) [anchor=east] {$\cdots$};
  \node at (-0.5,1) [anchor=east] {$i_m=0$};
  \node at (-0.5,0) [anchor=east] {$i_{\rho(s+1)}=0$};
  \node at (-0.5,1.5) [anchor=east] {$\cdots$};
  \node at (-0.5,2) [anchor=east] {$i_{\rho(N)}=0$};
  \node at (5,0) [anchor=west] {$j_{\rho^{*}(1)}=0$};
  \node at (5,0.5) [anchor=west] {$\cdots$};
  \node at (5,1) [anchor=west] {$j_{\rho^{*}(N-s-1)}=0$};
   \node at (5,2) [anchor=west] {$j_m=1$};
     \node at (5,1.5) [anchor=west] {$\cdots$};
  \filldraw[black] (0.5,0) circle (1pt);
  \filldraw[black] (0.5,1) circle (1pt);
  \filldraw[black] (0.5,2) circle (1pt);
  \filldraw[black] (0,0) circle (1pt);
  \filldraw[black] (0,1) circle (1pt);
  \filldraw[black] (0,2) circle (1pt);
\end{tikzpicture}\quad\quad
\end{equation}
By the Yang-Baxter equations (the ``$RTT$'' version, see Proposition \ref{YBE1}), by sequentially pushing the $R$-braids to the left, we derive that the partition function of the above configuration is equal to the partition function of the following configuration
\begin{equation}
\hfill
\begin{tikzpicture}[baseline=(current bounding box.center)]
  \draw (5,0) to (6,0);
  \draw (5,1) to (6,1);
  \draw (5,2) to (6,2);
  \node at (5.5,2.5) [anchor=south] {$1$};
  \draw (5.5,-0.5) to (5.5,2.5);
  \node at (5.5,-0.5) [anchor=north] {$0$};
  \draw (0.5,0) to [out=0,in=180] (5,1);
  \draw (0.5,1) to [out=0,in=180] (5,2);
  \draw (0.5,2) to [out=0,in=180] (5,0);
  \node at (0.5,0.5) [anchor=east] {$\cdots$};
  \node at (0.5,1) [anchor=east] {$i_m=0$};
  \node at (0.5,0) [anchor=east] {$i_{\rho(s+1)}=0$};
  \node at (0.5,1.5) [anchor=east] {$\cdots$};
  \node at (0.5,2) [anchor=east] {$i_{\rho(N)}=0$};
  \node at (6,0) [anchor=west] {$j_{\rho^{*}(1)}=0$};
  \node at (6,0.5) [anchor=west] {$\cdots$};
  \node at (6,1) [anchor=west] {$j_{\rho^{*}(N-s-1)}=0$};
   \node at (6,2) [anchor=west] {$j_m=1$};
     \node at (6,1.5) [anchor=west] {$\cdots$};
  \filldraw[black] (5.5,0) circle (1pt);
  \filldraw[black] (5.5,1) circle (1pt);
  \filldraw[black] (5.5,2) circle (1pt);
  \filldraw[black] (5,0) circle (1pt);
  \filldraw[black] (5,1) circle (1pt);
  \filldraw[black] (5,2) circle (1pt);
\end{tikzpicture}\quad\quad
\end{equation}
Note that there is only one admissible state of this configuration, as indicated below
\begin{equation}
\hfill
\begin{tikzpicture}[baseline=(current bounding box.center)]
  \draw (5,0) to (6,0);
  \draw (5,1) to (6,1);
  \draw (5,2) to (6,2);
  \node at (5.5,2.5) [anchor=south] {$1$};
  \draw (5.5,-0.5) to (5.5,2.5);
  \node at (5.5,-0.5) [anchor=north] {$0$};
  \draw (0.5,0) to [out=0,in=180] (5,1);
  \draw (0.5,1) to [out=0,in=180] (5,2);
  \draw (0.5,2) to [out=0,in=180] (5,0);
  \node at (0.5,0.5) [anchor=east] {$\cdots$};
  \node at (0.5,1) [anchor=east] {$i_m=0$};
  \node at (0.5,0) [anchor=east] {$i_{\rho(s+1)}=0$};
  \node at (0.5,1.5) [anchor=east] {$\cdots$};
  \node at (0.5,2) [anchor=east] {$i_{\rho(N)}=0$};
  \node at (6,0) [anchor=west] {$j_{\rho^{*}(1)}=0$};
  \node at (6,0.5) [anchor=west] {$\cdots$};
  \node at (6,1) [anchor=west] {$j_{\rho^{*}(N-s-1)}=0$};
   \node at (6,2) [anchor=west] {$j_m=1$};
     \node at (6,1.5) [anchor=west] {$\cdots$};
  \filldraw[black] (5.5,0) circle (1pt);
  \filldraw[black] (5.5,1) circle (1pt);
  \filldraw[black] (5.5,2) circle (1pt);
  \filldraw[black] (5,0) circle (1pt);
  \filldraw[black] (5,1) circle (1pt);
  \filldraw[black] (5,2) circle (1pt);
  \node at (4.9,0) [anchor=south] {$0$};
  \node at (2.6,0.8) [anchor=south west] {$0$};
  \node at (4.9,1) [anchor=south] {$0$};
  \node at (4.9,2) [anchor=south] {$0$};
  \node at (5.4,0.5) [anchor=west] {$0$};
  \node at (5.4,1.5) [anchor=west] {$0$};
\end{tikzpicture}\quad\quad
\end{equation}
The Boltzmann weight of this state is
\begin{equation}
    c_2(z_m)=1.
\end{equation}
Hence we have
\begin{eqnarray*}
 &&(S_{\rho}^{[1]}((z_{\rho(1)},\cdots,z_{\rho(N)})) R_{\rho^{*}}^{\rho})_{i_1\cdots i_{N}}^{j_1\cdots j_N}\\
 &=& \prod_{a:i_a=1,j_a=1}b_2(z_a)\prod_{(a,b):a<b,i_a=1,j_a=1,i_b=1,j_b=1}a_2(z_a,z_b)\\
    &\times& \prod_{a: a<m,i_a=1,j_a=1}a_2(z_a,z_m)\prod_{(a,b): i_a=1,j_a=1,i_b=0,j_b=0}b_2(z_a,z_b).
\end{eqnarray*}

Therefore, we conclude that 
\begin{eqnarray}
    (F S^{[1]} F^{-1})_{i_1\cdots i_N}^{j_1\cdots j_N}&=&\sum_{m=1}^N \mathbbm{1}_{i_m=0,j_m=1}\prod_{t:1\leq t\leq N, t\neq m}  \mathbbm{1}_{i_t=j_t}\prod_{a: i_a=1,j_a=1}z_a\nonumber\\
    &&\times  \prod_{a:a>m,i_a=1,j_a=1}\frac{z_a-v z_m}{z_m-vz_a}\prod_{a:i_a=0,j_a=0}\frac{z_a-v z_m}{z_m-z_a}.
\end{eqnarray}

\end{proof}

Now we finish the proof of Theorem \ref{Theorem1} based on Proposition \ref{P3}.

\begin{proof}[Proof of Theorem \ref{Theorem1}]
Note that by Proposition \ref{P3} and (\ref{Par}), $Z(\mathcal{S}_{\lambda,z})$ can be written as the sum of the following terms
\begin{equation}
    (F S^{[m_{\lambda_1+N-1}]} F^{-1})_{0\cdots 0}^{i_1^{(\lambda_1+N-1)}\cdots i_N^{(\lambda_1+N-1)}}\cdots (F S^{[m_1]} F^{-1})_{i_1^{(2)}\cdots i_N^{(2)}}^{i_1^{(1)}\cdots i_N^{(1)}}(F S^{[m_0]} F^{-1})_{i_1^{(1)}\cdots i_N^{(1)}}^{1\cdots 1},
\end{equation}
where $(i_1^{(l)},\cdots,i_N^{(l)})\in\{0,1\}^N$ for $0\leq l\leq \lambda_1+N-1$ satisfy the following condition: if $m_l=0$, then $i_k^{(l)}=i_k^{(l+1)}$ for every $1\leq k\leq N$; if $m_l=1$, then there is a unique index $\alpha_l\in\{1,2,\cdots,N\}$ such that $i_{\alpha_l}^{(l)}=1$, $i_{\alpha_l}^{(l+1)}=0$, and $i_k^{(l)}=i_k^{(l+1)}$ for every $k\neq \alpha_l$. Here, we have assumed that $(i_1^{(\lambda_1+N)},\cdots,i_N^{(\lambda_1+N)})=(0,\cdots,0)$ and $(i_1^{(0)},\cdots,i_N^{(0)})=(1,\cdots,1)$.

Let $\beta_t:=\alpha_{\lambda_t+N-t}$ for every $1\leq t\leq N$. Note that $(\beta_1,\cdots,\beta_N)$ corresponds to a permutation $\sigma\in S_N$ such that $\sigma(t)=\beta_t$ for every $1\leq t\leq N$. Based on this observation and Proposition \ref{P3}, we can deduce that
\begin{eqnarray*}
Z(\mathcal{S}_{\lambda,z}) &=& \sum_{\sigma\in S_N} \prod_{i=1}^N z_{\sigma(i)}^{\lambda_i+N-i}\prod_{(i,j):1\leq i<j\leq N,\sigma(i)>\sigma(j)}\frac{z_{\sigma(i)}-vz_{\sigma(j)}}{z_{\sigma(j)}-vz_{\sigma(i)}}\prod_{(i,j):1\leq i<j\leq N}\frac{z_{\sigma(j)}-v z_{\sigma(i)}}{z_{\sigma(i)}-z_{\sigma(j)}}.
\end{eqnarray*}

Note that for any $\sigma\in S_N$,
\begin{eqnarray*}
&&\prod_{(i,j):1\leq i<j\leq N,\sigma(i)>\sigma(j)}\frac{z_{\sigma(i)}-vz_{\sigma(j)}}{z_{\sigma(j)}-vz_{\sigma(i)}}\prod_{(i,j):1\leq i<j\leq N}(z_{\sigma(j)}-v z_{\sigma(i)})\\
&=& \prod_{(i,j):1\leq i<j\leq N,\sigma(i)<\sigma(j)}(z_{\sigma(j)}-v z_{\sigma(i)})\prod_{(i,j):1\leq i<j\leq N,\sigma(i)>\sigma(j)}(z_{\sigma(i)}-v z_{\sigma(j)})\\
&=& \prod_{1\leq i<j\leq N}(z_j-vz_i).
\end{eqnarray*}

Hence we have
\begin{eqnarray*}
Z(\mathcal{S}_{\lambda,z}) &=& \prod_{1\leq i<j\leq N}(z_j-vz_i)\sum_{\sigma\in S_N}\prod_{i=1}^N z_{\sigma(i)}^{\lambda_i+N-i}\prod_{(i,j):1\leq i<j\leq N}(z_{\sigma(i)}-z_{\sigma(j)})^{-1}\\
&=& \prod_{1\leq i<j\leq N}(z_j-vz_i)\frac{\sum_{\sigma\in S_N}(-1)^{inv(\sigma)}z_{\sigma(i)}^{\lambda_i+N-i}}{\prod_{1\leq i<j\leq N}(z_i-z_j)}\\
&=& \prod_{1\leq i<j\leq N}(z_j-vz_i) s_{\lambda}(z_1,\cdots,z_N),
\end{eqnarray*}
where $inv(\sigma):=\{(i,j):1\leq i<j\leq N,\sigma(i)>\sigma(j)\}$ is the number of inversions of $\sigma\in S_N$.
\end{proof}

\section{Extension to lattice models related to Cartan types B and C}\label{Sect.5}

In this section, we generalize the concepts in Section \ref{Sect.3} to lattice models that are related to Cartan types B and C. Examples of such models include the models in \cite{BBCG,Iva}. In such models, both $\Gamma$ and $\Delta$ vertices appear, and there are U-turn cap vertices on the right boundary. The generalized concepts in this section will be used in Section \ref{Sect.6} to compute the partition function of the lattice model in \cite{BBCG}.

The main difference between this section and Section \ref{Sect.3} is that here we need to distinguish between $\Gamma$ ice and $\Delta$ ice. As in Section \ref{Sect.3}, we fix a positive integer $N$. To each site $i$ (where $1\leq i\leq N$), we associate a number $\epsilon_i\in\{1,-1\}$) to indicate $\Gamma$\slash $\Delta$ type, as detailed below. 

We make the following modifications to the setups in Section \ref{Sect.3.1}. For any $a,b,c,d\in\{0,1\}$, any $x_i,x_j\in\mathbb{C}$, and any $\epsilon_i,\epsilon_j\in\{1,-1\}$, we denote by $R(a,b,c,d;x_i,x_j;\epsilon_i,\epsilon_j)$ the Boltzmann weight of the following $R$-vertex (with spectral parameters $x_i,x_j$ and $\Gamma$\slash $\Delta$ type determined by $\epsilon_i,\epsilon_j$). The determination of the $\Gamma$\slash $\Delta$ type of the $R$-vertex is: if $(\epsilon_i,\epsilon_j)=(1,1)$, it is $\Gamma\Gamma$ ice; if $(\epsilon_i,\epsilon_j)=(-1,-1)$, it is $\Delta\Delta$ ice; if $(\epsilon_i,\epsilon_j)=(1,-1)$, it is $\Gamma\Delta$ ice; if $(\epsilon_i,\epsilon_j)=(-1,1)$, it is $\Delta\Gamma$ ice. 
\begin{equation}
\begin{tikzpicture}[scale=0.7]
\draw (0,0) to [out = 0, in = 180] (2,2);
\draw (0,2) to [out = 0, in = 180] (2,0);
\draw[fill=white] (0,0) circle (.35);
\draw[fill=white] (0,2) circle (.35);
\draw[fill=white] (2,0) circle (.35);
\draw[fill=white] (2,2) circle (.35);
\node at (0,0) {$a$};
\node at (0,2) {$b$};
\node at (2,2) {$c$};
\node at (2,0) {$d$};
\path[fill=white] (1,1) circle (.3);
\node at (1,1) {$R_{x_i,x_j}$};
\end{tikzpicture}
\end{equation}
For any two distinct positive integers $i,j$, we define the $R$-matrix $R_{i,j}(x_i,x_j;\epsilon_i,\epsilon_j)$ that acts on $W_i \otimes W_j$ (with spectral parameters  $x_i,x_j$) by
\begin{equation}
    R_{i,j}(x_i,x_j;\epsilon_i,\epsilon_j)=\sum_{a,b,c,d\in\{0,1\}}R(a,b,c,d;x_i,x_j;\epsilon_i,\epsilon_j) E_i^{(a,c)} E_j^{(b,d)}.
\end{equation}
We also denote $R(x_1,x_2;\epsilon_1,\epsilon_2):=R_{12}(x_1,x_2;\epsilon_1,\epsilon_2)$.

For an ordinary vertex with $\Gamma$\slash $\Delta$ type determined by $\epsilon_i$ ($\epsilon_i=1$ means $\Gamma$ ice and $\epsilon_i=-1$ means $\Delta$ ice) and spectral parameter $x_i$, we denote by $a_1(x_i;\epsilon_i)$ its Boltzmann weight for the $a_1$ state (see Figures \ref{Figure2.1}-\ref{Figure2.2}), and similarly for the other states. For an $R$-vertex with $\Gamma$\slash $\Delta$ type determined by $\epsilon_i,\epsilon_j$ and spectral parameters $x_i,x_j$, we denote by $a_1(x_i,x_j;\epsilon_i,\epsilon_j)$ it Boltzmann weight for the $a_1$ state (see Figures \ref{Figure2.3}-\ref{Figure2.6}), and similarly for the other states.

The parallels of Sections \ref{Sect.3.2}-\ref{Sect.3.4} are given in Sections \ref{Sect.5.1}-\ref{Sect.5.3} below. 

\subsection{Column operator}\label{Sect.5.1}

For any $\alpha\in\{0,1\}$, $\vec{x}=(x_1,\cdots,x_N)\in\mathbb{C}^N$, and $\vec{\epsilon}=(\epsilon_1,\cdots,\epsilon_N)\in\{1,-1\}^N$, the column operator $S^{[\alpha]}(\vec{x};\vec{\epsilon})\in End(W_1\otimes \cdots \otimes W_N)$ is defined by specifying its components $(S^{[\alpha]}(\vec{x};\vec{\epsilon}))_{i_1\cdots i_N}^{j_1\cdots j_N}$ for any $(i_1,\cdots,i_N),(j_1,\cdots,j_N)\in\{0,1\}^N$.

We define $(S^{[\alpha]}(\vec{x};\vec{\epsilon}))_{i_1\cdots i_N}^{j_1\cdots j_N}$ as follows. Consider a column of ordinary vertices whose $\Gamma$\slash $\Delta$ types and spectral parameters are given by $\epsilon_1,\cdots,\epsilon_N$ and $x_1,\cdots,x_N$ from bottom to top. The boundary conditions are specified as follows: the top edge is labeled $\alpha$, the bottom edge is labeled $0$, the left edges are labeled $i_1,\cdots,i_N$ (from bottom to top), and the right edges are labeled $j_1,\cdots,j_N$ (from bottom to top). The component $(S^{[\alpha]}(\vec{x};\vec{\epsilon}))_{i_1\cdots i_N}^{j_1\cdots j_N}$ is defined as the partition function of this configuration.

More generally, for any $\sigma\in S_N$, $\alpha\in\{0,1\}$, $\vec{x}=(x_1,\cdots,x_N)\in\mathbb{C}^N$, and $\vec{\epsilon}=(\epsilon_1,\cdots,\epsilon_N)\in\{1,-1\}^N$, we define the column operator $S^{[\alpha]}_{\sigma}(\vec{x};\vec{\epsilon})\in End(W_1\otimes \cdots \otimes W_N)$ by specifying its components $(S^{[\alpha]}_{\sigma}(\vec{x};\vec{\epsilon}))_{i_1\cdots i_N}^{j_1\cdots j_N}$ for any $(i_1,\cdots,i_N),(j_1,\cdots,j_N)\in \{0,1\}^N$. To define $(S^{[\alpha]}_{\sigma}(\vec{x};\vec{\epsilon}))_{i_1\cdots i_N}^{j_1\cdots j_N}$, consider a column of ordinary vertices whose $\Gamma$\slash $\Delta$ types and spectral parameters are given by $\epsilon_1,\cdots,\epsilon_N$ and $x_1,\cdots,x_N$ from bottom to top. We also specify the boundary condition as follows: the top edge is labeled $\alpha$, the bottom edge is labeled $0$, the left edges are labeled $i_{\sigma(1)},\cdots, i_{\sigma(N)}$ (from bottom to top), and the right edges are labeled $j_{\sigma(1)},\cdots,j_{\sigma(N)}$ (from bottom to top). The component $(S^{[\alpha]}_{\sigma}(\vec{x};\vec{\epsilon}))_{i_1\cdots i_N}^{j_1\cdots j_N}$ is defined as the partition function of this configuration.

\subsection{Permutation graph}\label{Sect.5.2}

For any two permutations $\rho_1,\rho_2\in S_N$, any two vectors $\vec{x}=(x_1,\cdots,x_N)\in\mathbb{C}^N$ and $\vec{\epsilon}=(\epsilon_1,\cdots,\epsilon_N)  \in\{1,-1\}^N$, the ``permutation graph'' $R_{\rho_1}^{\rho_2}(\vec{x};\vec{\epsilon})$ is an element of $End(W_1\otimes \cdots \otimes W_N)$ as defined below.

For the case where $\rho_1=s_i=(i,i+1)$ and $\rho_2=id$ for some $1\leq i\leq N-1$, we let
\begin{equation}
    R_{s_i}^{id}(\vec{x};\vec{\epsilon})=R_{i(i+1)}(x_i,x_{i+1};\epsilon_i,\epsilon_{i+1}).
\end{equation}
For general $\rho_1\in S_N$, we let
\begin{equation}
    R_{\rho_1}^{\rho_1}(\vec{x};\vec{\epsilon})=1,\quad R_{\rho_1\circ s_i}^{\rho_1}(\vec{x};\vec{\epsilon})=R_{\rho_1(i),\rho_1(i+1)}(x_{\rho_1(i)},x_{\rho_1(i+1)};\epsilon_{\rho_1(i)},\epsilon_{\rho_1(i+1)}),
\end{equation}
and recursively for any $\rho_1,\rho_2\in S_N$,
\begin{equation}
    R_{\rho_1 \circ  s_i}^{\rho_2}(\vec{x};\vec{\epsilon})=R_{\rho_1}^{\rho_2}(\vec{x};\vec{\epsilon})  R_{\rho_1\circ s_i}^{\rho_1}(\vec{x};\vec{\epsilon}).
\end{equation}
For general $\rho_1,\rho_2$, $R_{\rho_1}^{\rho_2}(\vec{x};\vec{\epsilon})$ can be constructed from the above definition. By the ``$RRR$'' Yang-Baxter equations and the unitarity relation (Theorems \ref{YBE2}-\ref{Unit}), $R_{\rho_1}^{\rho_2}(\vec{x};\vec{\epsilon})$ is well-defined.

\subsection{$F$-matrix}\label{Sect.5.3}

Now we define the $F$-matrix. For any $\rho\in S_N$, $I(\rho)$ and $I'(\rho)$ are defined as in Section \ref{Sect.3.4}. The $F$-matrices $F(\vec{x};\vec{\epsilon}),F^{*}(\vec{x};\vec{\epsilon})\in End(W_1 \otimes \cdots \otimes W_N)$ are defined by
\begin{equation}
    F(\vec{x};\vec{\epsilon}):=\sum_{\rho\in S_N}\sum_{(k_1,\cdots,k_N)\in I(\rho)}\prod_{i=1}^N E_{\rho(i)}^{(k_i,k_i)} R_{id}^{\rho}(\vec{x};\vec{\epsilon}),
\end{equation}
\begin{equation}
    F^{*}(\vec{x};\vec{\epsilon}):=\sum_{\rho\in S_N}\sum_{(k_1,\cdots,k_N)\in I'(\rho)}R_{\rho}^{id}(\vec{x};\vec{\epsilon})\prod_{i=1}^N E_{\rho(i)}^{(k_i,k_i)}.
\end{equation}

Hereafter, the arguments $\vec{x},\vec{\epsilon}$ may be omitted from the $F$-matrix when they are clear from the context. Proposition \ref{P2} directly generalizes to the setting here (with the newly defined permutation graph and $F$-matrix), and Proposition \ref{P1} generalizes to the following proposition.

\begin{proposition}
$\Delta:=F F^{*}$ is a diagonal matrix. The diagonal entries of $\Delta$ are given by
\begin{eqnarray*}
 (\Delta)_{i_1\cdots i_N}^{i_1\cdots i_N}=\prod_{(a,b):i_a=1,i_b=0} b_2(x_a,x_b;\epsilon_a,\epsilon_b)\prod_{(a,b):a<b,i_a=1,i_b=1}a_2(x_a,x_b;\epsilon_a,\epsilon_b)
\end{eqnarray*}
for every $(i_1,\cdots,i_N)\in\{0,1\}^N$.
\end{proposition}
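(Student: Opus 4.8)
The plan is to follow the proof of Proposition \ref{P1} essentially line by line, carrying the extra datum $\vec{\epsilon}=(\epsilon_1,\cdots,\epsilon_N)$ along with the spectral parameters. The only structural inputs I need are the well-definedness of the permutation graph $R_{\rho_1}^{\rho_2}(\vec{x};\vec{\epsilon})$ (guaranteed for all four types of $R$-vertices by the generalized ``$RRR$'' Yang--Baxter equations and the unitarity relation, Propositions \ref{YBE2}--\ref{Unit}), together with the generalization of Proposition \ref{P2}, which is noted above to hold verbatim in the present setting. Since the proof of Proposition \ref{P1} uses only these ingredients, adapting it costs essentially nothing conceptually.

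First I would record the analog of (\ref{E1}): using the definitions of $F(\vec{x};\vec{\epsilon})$ and $F^{*}(\vec{x};\vec{\epsilon})$, the composition properties of the permutation graph, and the same elementary manipulation as in \cite[Proposition 3.2]{ABFR}, one obtains
\[
\Delta = F F^{*} = \sum_{\rho\in S_N}\sum_{(k_1,\cdots,k_N)\in I(\rho)}\prod_{i=1}^N E_{\rho(i)}^{(k_i,k_i)}\, R_{\rho\sigma_0}^{\rho}(\vec{x};\vec{\epsilon})\,\prod_{i=1}^N E_{\rho(i)}^{(k_i,k_i)},
\]
where $\sigma_0(i)=N+1-i$; the cross terms with unequal permutations drop out after being sandwiched between the idempotents $\prod_i E_{\rho(i)}^{(k_i,k_i)}$, exactly as in type A, and this already exhibits $\Delta$ as diagonal. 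To read off the diagonal entry for a fixed $(i_1,\cdots,i_N)\in\{0,1\}^N$, I would note that for each $\rho$ the unique $(k_1,\cdots,k_N)\in I(\rho)$ that can contribute is $k_t=i_{\rho(t)}$, and membership in $I(\rho)$ forces the pair of conditions in (\ref{E2}), which is met by exactly one permutation $\rho=\rho(i_1,\cdots,i_N)$. Hence, just as in (\ref{E3}),
\[
(\Delta)_{i_1\cdots i_N}^{i_1\cdots i_N}=\bigl(R_{\rho\sigma_0}^{\rho}(\vec{x};\vec{\epsilon})\bigr)_{i_1\cdots i_N}^{i_1\cdots i_N},\qquad \rho=\rho(i_1,\cdots,i_N).
\]

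Finally I would evaluate this component. The boundary condition here has the spins sorted (all $1$'s below all $0$'s) on the left and reversed on the right, so $R_{\rho\sigma_0}^{\rho}$ admits a unique admissible state --- the same ``reversal braid'' drawn in the proof of Proposition \ref{P1}. In that state the strand originating at site $a$ and the strand originating at site $b$ cross exactly once, at an $R$-vertex whose $\Gamma/\Delta$ type is determined by $(\epsilon_a,\epsilon_b)$, and inspecting the state shows every crossing is of type $a_1$, $a_2$, or $b_2$, with $a_2$ occurring precisely for pairs $a<b$ with $i_a=i_b=1$ and $b_2$ precisely for pairs with $i_a=1$, $i_b=0$. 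Since in Figures \ref{Figure2.3}--\ref{Figure2.6} the $a_1$ weight equals $1$ for all four $R$-vertex types, the $a_1$ crossings contribute nothing, and multiplying the remaining weights yields
\[
(\Delta)_{i_1\cdots i_N}^{i_1\cdots i_N}=\prod_{(a,b):i_a=1,i_b=0} b_2(x_a,x_b;\epsilon_a,\epsilon_b)\prod_{(a,b):a<b,\,i_a=1,\,i_b=1}a_2(x_a,x_b;\epsilon_a,\epsilon_b),
\]
as claimed. The only place the argument genuinely differs from type A --- and the point I expect to be the main obstacle, though it is more bookkeeping than idea --- is verifying that each braid crossing really carries the $R$-vertex of type $(\epsilon_a,\epsilon_b)$ \emph{with the spectral parameters in the correct order}, so that $b_2(x_a,x_b;\epsilon_a,\epsilon_b)$ rather than $b_2(x_b,x_a;\epsilon_b,\epsilon_a)$ appears; this must be tracked carefully now that the four $R$-matrices are no longer interchangeable.
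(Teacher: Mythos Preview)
Your proposal is correct and is precisely the approach the paper intends: the paper does not give a separate proof here, merely stating that Proposition~\ref{P1} generalizes, so carrying the type-A argument through with the extra datum $\vec{\epsilon}$ is exactly what is required. Your caveat about tracking the order of spectral parameters and $\Gamma/\Delta$ types at each crossing is the right point of care, and it resolves as you indicate once one notes that in the unique admissible state the strand from site $a$ (with $i_a=1$) enters each relevant crossing from below the strand from site $b$ (with $i_b=0$ or $a<b$, $i_b=1$), matching the ordering in $b_2(x_a,x_b;\epsilon_a,\epsilon_b)$ and $a_2(x_a,x_b;\epsilon_a,\epsilon_b)$.
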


\section{Ice model representing a Whittaker function on the metaplectic double cover of $\mathrm{Sp}(2r,F)$}\label{Sect.6}

In this section, we apply our method to compute the partition function of the lattice model introduced in \cite{BBCG}. As reviewed in the Introduction, the partition function of this model represents a Whittaker function on the metaplectic double cover of $\mathrm{Sp}(2r,F)$ with $F$ a non-archimedean local field. The computation of this partition function is more involved than that in Section \ref{Sect.4}, as both types of ordinary vertices ($\Gamma$ ice and $\Delta$ ice) are involved in the model, and there are cap vertices connecting adjacent rows of $\Delta$ ice and $\Gamma$ ice on the right boundary.

\subsection{The lattice model}\label{Sect.6.1}

In this subsection, we briefly introduce the lattice model in \cite{BBCG}. Let $\lambda=(\lambda_1,\cdots,\lambda_{r})$ be a given partition (meaning that $\lambda_1\geq\cdots\geq \lambda_N\geq 0$). Consider a rectangular lattice with $2r$ rows and $\lambda_1+r$ columns. The columns are labeled $\frac{1}{2},\frac{3}{2},\cdots,\lambda_1+r-\frac{1}{2}$ from right to left, and the rows are labeled $1,2,\cdots, 2r$ from bottom to top. Note that the ordering of the rows here is reversed from that in \cite{BBCG}. Every odd-numbered row is a row of $\Gamma$ ice, and every even-numbered row is a row of $\Delta$ ice. For every $1\leq i\leq r$, the spectral parameter of the vertices in the $i$th row of $\Gamma$ ice is $z_i^{-1}$, and that of the vertices in the $i$th row of $\Delta$ ice is $z_i$.

The boundary conditions are given as follows. On the left boundary, we assign $0$ ($+$ spin) to each row (note that this is different from the notations in \cite{BBCG} as we switch the signs of spins on horizontal edges of $\Delta$ ice); on the bottom boundary, we assign $0$ to each boundary edge; on the top boundary, we assign $1$ ($-$ spin) to each column labeled $\lambda_i+r+\frac{1}{2}-i$ for every $1\leq i \leq r$, and $0$ to the rest of the columns; on the right, the $i$th row of $\Gamma$ ice and the $i$th row of $\Delta$ ice are connected by a cap vertex with spectral parameter $z_i$. Recall that the Boltzmann weights of the caps are given in Figure \ref{Figure2.7}.

Let $z=(z_1,\cdots,z_r)$. Hereafter, we denote by $Z(\mathcal{T}_{\lambda,z})$ the partition function of the above lattice model. 

As a simple example, when $r=2$, $\lambda=(2,1)$, the model configuration is shown as below.

\begin{equation}
\begin{tikzpicture}[baseline=(current bounding box.center)]
\draw (0,1)--(5,1);
\draw (0,2)--(5,2);
\draw (0,3)--(5,3);
\draw (0,4)--(5,4);
\draw (1,0.5)--(1,4.5);
\draw (2,0.5)--(2,4.5);
\draw (3,0.5)--(3,4.5);
\draw (4,0.5)--(4,4.5);
\filldraw[black] (1,1) circle (1pt);
\filldraw[black] (2,1) circle (1pt);
\filldraw[black] (3,1) circle (1pt);
\filldraw[black] (4,1) circle (1pt);
\filldraw[black] (3,2) circle (1pt);
\filldraw[black] (2,2) circle (1pt);
\filldraw[black] (1,2) circle (1pt);
\filldraw[black] (4,2) circle (1pt);
\filldraw[black] (1,3) circle (1pt);
\filldraw[black] (2,3) circle (1pt);
\filldraw[black] (3,3) circle (1pt);
\filldraw[black] (4,3) circle (1pt);
\filldraw[black] (1,4) circle (1pt);
\filldraw[black] (2,4) circle (1pt);
\filldraw[black] (3,4) circle (1pt);
\filldraw[black] (4,4) circle (1pt);
\filldraw[black] (5.5,1.5) circle (1pt);
\filldraw[black] (5.5,3.5) circle (1pt);
\draw (5,1) arc(-90:90:0.5);
\draw (5,3) arc(-90:90:0.5);
\node at (0,1) [anchor=east] {$0$};
\node at (0,2) [anchor=east] {$0$};
\node at (0,3) [anchor=east] {$0$};
\node at (0,4) [anchor=east] {$0$};
\node at (-0.5,1) [anchor=east] {$1$};
\node at (-0.5,2) [anchor=east] {$2$};
\node at (-0.5,3) [anchor=east] {$3$};
\node at (-0.5,4) [anchor=east] {$4$};
\node at (-1,1) [anchor=east] {row};
\node at (0.5,1) [anchor=south] {$\Gamma$};
\node at (0.5,2) [anchor=south] {$\Delta$};
\node at (0.5,3) [anchor=south] {$\Gamma$};
\node at (0.5,4) [anchor=south] {$\Delta$};
\node at (4.5,1) [anchor=south] {$z_1^{-1}$};
\node at (4.5,2) [anchor=south] {$z_1$};
\node at (4.5,3) [anchor=south] {$z_2^{-1}$};
\node at (4.5,4) [anchor=south] {$z_2$};
\node at (1,4.5) [anchor=south] {$1$};
\node at (2,4.5) [anchor=south] {$0$};
\node at (3,4.5) [anchor=south] {$1$};
\node at (4,4.5) [anchor=south] {$0$};
\node at (1,5) [anchor=south] {$\frac{7}{2}$};
\node at (2,5) [anchor=south] {$\frac{5}{2}$};
\node at (3,5) [anchor=south] {$\frac{3}{2}$};
\node at (4,5) [anchor=south] {$\frac{1}{2}$};
\node at (0,5) [anchor=south] {column};
\node at (1,0.5) [anchor=north] {$0$};
\node at (2,0.5) [anchor=north] {$0$};
\node at (3,0.5) [anchor=north] {$0$};
\node at (4,0.5) [anchor=north] {$0$};
\end{tikzpicture}
\end{equation}

\subsection{Computation of the partition function}\label{Sect.6.2}

The explicit form of the partition function of the lattice model reviewed in Section \ref{Sect.5.1} was conjectured in \cite{BBCG} and first proved in \cite{MSW}. In this subsection, we use the method outlined in Section \ref{Sect.1.2} and the concepts introduced in Section \ref{Sect.5} to give a new proof of this conjecture. The argument is considerably simpler than that of \cite{MSW}, and directly leads to the expression of the partition function (without the need to guess the formula as in \cite{MSW}).

First we set up some notations. Let $[\pm r]:=\{1,\overline{1},\cdots,r,\overline{r}\}$, and let $B_r$ be the hyperoctahedral group of degree $r$. We identify $\overline{\overline{i}}$ with $i$ for $1\leq i\leq r$. Any element $\sigma\in B_r$ can be viewed as a permutation of $[\pm r]$ such that $\sigma(\overline{i})=\overline{\sigma(i)}$ for any $1\leq i\leq r$. For any rational function $f(z)$ of $z=(z_1,\cdots,z_r)$, we define $\sigma f(z):=f(\sigma z)$, where $\sigma(z):=(z_{\sigma(1)},\cdots,z_{\sigma(r)})$ with $z_{\overline{i}}:=z_i^{-1}$ for every $1\leq i\leq r$.

The main result is the following theorem.

\begin{theorem}\label{Theorem2}
\begin{equation}
Z(\mathcal{T}_{\lambda,z})=z^{-\rho_B}\prod_{i=1}^r(1-\sqrt{v}z_i)\prod_{1\leq i<j\leq r}((1-vz_iz_j)(1-vz_jz_i^{-1}))\sum_{\sigma\in B_r}\sigma(z^{\lambda+\rho_C}\prod_{i=1}^r (1+\sqrt{v}z_i^{-1})\Delta_C(z)^{-1}),
\end{equation}
where $\rho_B=(r-\frac{1}{2},r-\frac{3}{2},\cdots,\frac{1}{2})$, $\rho_C=(r,r-1,\cdots,1)$, and 
\begin{equation*}
    \Delta_C(z)=\prod_{1\leq i<j\leq r}((z_i^{\frac{1}{2}}z_j^{-\frac{1}{2}}-z_i^{-\frac{1}{2}}z_j^{\frac{1}{2}})(z_i^{\frac{1}{2}}z_j^{\frac{1}{2}}-z_i^{-\frac{1}{2}}  z_j^{-\frac{1}{2}}))\prod_{i=1}^r (z_i-z_i^{-1}).
\end{equation*}
\end{theorem}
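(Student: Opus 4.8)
The plan is to run the strategy of Section~\ref{Sect.4}, now with the $\Gamma/\Delta$ bookkeeping of Section~\ref{Sect.5} and with the cap vertices incorporated through a ``cap vector''. Put $\vec{x}=(z_1^{-1},z_1,\cdots,z_r^{-1},z_r)$ and $\vec{\epsilon}=(1,-1,\cdots,1,-1)$, so that site $2i-1$ is $\Gamma$ ice with spectral parameter $z_i^{-1}$ and site $2i$ is $\Delta$ ice with spectral parameter $z_i$; and let $K\in W_1\otimes\cdots\otimes W_{2r}$ be the vector whose component on $|j_1,\cdots,j_{2r}\rangle$ equals $\prod_{i=1}^r$ (the Boltzmann weight of the cap with parameter $z_i$ and legs carrying spins $j_{2i-1},j_{2i}$), read off from Figure~\ref{Figure2.7}. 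Reading the lattice of Section~\ref{Sect.6.1} column by column and using its boundary conditions,
\[
Z(\mathcal{T}_{\lambda,z})=\langle 0,\cdots,0|\,S^{[m_{1}]}(\vec{x};\vec{\epsilon})\,S^{[m_{2}]}(\vec{x};\vec{\epsilon})\cdots S^{[m_{L}]}(\vec{x};\vec{\epsilon})\,K ,
\]
where $L=\lambda_1+r$ and, numbering columns $1,\dots,L$ from left to right, $m_k=1$ precisely when $k=i+\lambda_1-\lambda_i$ for some $1\le i\le r$. Inserting $F^{-1}F$ between consecutive factors and using Proposition~\ref{P1} in the form $F^{-1}=F^{*}\Delta^{-1}$,
\[
Z(\mathcal{T}_{\lambda,z})=(\langle 0,\cdots,0|F^{-1})\,(F S^{[m_1]} F^{-1})\cdots(F S^{[m_L]} F^{-1})\,(FK),
\]
so it remains to compute $\langle 0,\cdots,0|F^{-1}$, the components of $F S^{[m]} F^{-1}$ for $m\in\{0,1\}$, and the components of $FK$. (Well-definedness of the permutation graph and of $F,F^{*}$ in this $\Gamma/\Delta$ setting is guaranteed by Propositions~\ref{YBE2}--\ref{Unit}, as in Section~\ref{Sect.5}.)

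The first two are handled exactly as in Section~\ref{Sect.4}. By the $\Gamma/\Delta$ analogues of Propositions~\ref{P1}--\ref{P2} and spin conservation, $(F^{*})_{0\cdots0}^{j_1\cdots j_{2r}}=0$ unless all $j_t=0$, giving $\langle0,\cdots,0|F^{-1}=\langle0,\cdots,0|$. For the conjugated column operators one writes $(F S^{[m]} F^{-1})_{\vec i}^{\vec j}=(R^{\rho}_{id}\,S^{[m]}\,R^{id}_{\rho^{*}})_{\vec i}^{\vec j}\,(\Delta^{-1})_{\vec j}^{\vec j}$, pushes the $R$-braids past the column with the $RTT$ Yang--Baxter equation (Proposition~\ref{YBE1}), and reads off the unique admissible state of the resulting permutation graph, exactly as in the proof of Proposition~\ref{P3}; the only change is that the Boltzmann weights now carry the $\vec\epsilon$-dependence, so $\Gamma$ rows contribute factors in $z_i^{-1}$ and $\Delta$ rows factors in $z_i$. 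This produces an explicit tensor-product form for $F S^{[0]} F^{-1}$ and an explicit ``single-$1$-creating'' form for $F S^{[1]} F^{-1}$, the analogue of Proposition~\ref{P3}.

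The new ingredient is $FK$, and this is the main obstacle. Expanding $F=\sum_{\rho}\sum_{(k_i)\in I(\rho)}\prod_i E^{(k_i,k_i)}_{\rho(i)} R^{\rho}_{id}$, each term applies a braid of $R$-vertices --- of all four types $\Gamma\Gamma,\Gamma\Delta,\Delta\Gamma,\Delta\Delta$ according to $\vec\epsilon$ --- to the caps. The idea is to absorb these braids into the caps: whenever $\rho$ interchanges two whole blocks $\{2i-1,2i\}$ and $\{2j-1,2j\}$, the corresponding four-strand sub-braid sitting against the caps $K_i,K_j$ is exactly the configuration $I_1$ of the caduceus relation (Proposition~\ref{caduc}), so it may be replaced by $\frac{z_j-vz_i}{z_i-vz_j}$ times the bare pair of caps $I_2$; braids internal to one block interact with a single cap, and any residual back-and-forth braiding is cancelled by the unitarity relation (Proposition~\ref{Unit}). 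Carrying this reduction out shows that only the permutations lying in the copy of the hyperoctahedral group $B_r\subset S_{2r}$ preserving the block partition contribute, and yields $FK$ in closed form: for each $\sigma\in B_r$, a product of caduceus factors $\frac{z_j-vz_i}{z_i-vz_j}$ over the ``inversions'' of $\sigma$, times, for each block, the cap weight for the $\Gamma$- or $\Delta$-leg selected by $\sigma$; summing the two leg choices per block is what generates the metaplectic factor $\prod_i(1+\sqrt{v}z_i^{-1})$ once the uniform normalizations $z^{-\rho_B}$ and $\prod_i(1-\sqrt{v}z_i)$ are extracted. Making this rigorous --- tracking all four $R$-vertex types, verifying that the non-$B_r$ terms really drop out, and getting the half-integer powers of the $z_i$ and the metaplectic factor exactly right --- is the crux of the argument, and the caduceus relation is precisely the tool that makes it work.

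Finally one assembles the three pieces, as at the end of Section~\ref{Sect.4}. Tracing the contracted index through the product from $FK$ on the right to $\langle0,\cdots,0|$ on the left, the $r$ columns with $m_k=1$ strip off the $r$ ones carried by $FK$ one at a time; the $r$ positions of those ones in $FK$ record an element $\sigma\in B_r$ (which block each one occupies, together with its $\Gamma/\Delta$ sign), the diagonal and creation $z$-weights of the $F S^{[m]} F^{-1}$ factors accumulated over all columns build the monomial $z^{\sigma(\lambda+\rho_C)}$ once the overall $z^{-\rho_B}$ normalization is taken out, and the remaining rational factors combine with the caduceus factors from $FK$. One is left with $Z(\mathcal{T}_{\lambda,z})=\sum_{\sigma\in B_r}(\text{rational factors indexed by }\sigma)$, and the last step is a $B_r$-analogue of the identity
\[
\prod_{\substack{i<j\\ \sigma(i)>\sigma(j)}}\frac{z_{\sigma(i)}-vz_{\sigma(j)}}{z_{\sigma(j)}-vz_{\sigma(i)}}\prod_{i<j}\big(z_{\sigma(j)}-vz_{\sigma(i)}\big)=\prod_{i<j}(z_j-vz_i)
\]
from Section~\ref{Sect.4}: collecting the $\sigma$-dependent denominators over the positive roots of type $C_r$ extracts the $\sigma$-independent prefactor $z^{-\rho_B}\prod_i(1-\sqrt{v}z_i)\prod_{i<j}\big((1-vz_iz_j)(1-vz_jz_i^{-1})\big)$ and leaves $\sum_{\sigma\in B_r}\sigma\big(z^{\lambda+\rho_C}\prod_i(1+\sqrt{v}z_i^{-1})\Delta_C(z)^{-1}\big)$, which is the asserted formula. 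The short/long root asymmetry of type $C$ (short roots deformed by $\sqrt v$, long roots by $v$) makes this collapse somewhat more delicate than in type $A$, but it is a finite computation.
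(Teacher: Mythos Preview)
Your overall architecture matches the paper's: write $Z(\mathcal{T}_{\lambda,z})$ as a matrix element of a product of column operators against a cap vector $K$, conjugate by the $F$-matrix, compute $\langle 0,\cdots,0|F^{-1}$ and $FS^{[m]}F^{-1}$ as in type A, compute $FK$ using the caduceus relation, and then assemble. The part that differs, and where there is a genuine gap, is your treatment of $FK$.

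You propose to expand $F=\sum_{\rho\in S_{2r}}\cdots$ and argue that, after applying caduceus and unitarity to each braid $R^{\rho}_{id}$, ``only the permutations lying in the copy of the hyperoctahedral group $B_r\subset S_{2r}$ preserving the block partition contribute''. This is not how the reduction works. Because of the diagonal projectors $\prod_i E_{\rho(i)}^{(k_i,k_i)}$ with $(k_i)\in I(\rho)$, for each fixed output index $(i_1,\ldots,i_{2r})$ exactly \emph{one} permutation $\rho=\rho(i_1,\ldots,i_{2r})$ survives in the sum defining $F$ (this is the content of Proposition~\ref{P2}), and that surviving $\rho$ is a shuffle permutation separating the positions of the $1$'s from those of the $0$'s; it is essentially never an element of $B_r$. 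The correct question is not ``which $\rho$ contribute'' but ``for which outputs $(i_1,\ldots,i_{2r})$ is $(FK)_{i_1\cdots i_{2r}}=(R^{\rho}_{id}K)_{i_1\cdots i_{2r}}$ nonzero''. The paper answers this via Proposition~\ref{P4}: with the specific $\rho=\rho(i_1,\ldots,i_{2r})$ in hand, the caduceus relation is applied iteratively to slide the bottom cap past the braid, showing the result vanishes unless $i_1\neq i_2$; peeling off $K_1$ and repeating yields $i_{2a-1}\neq i_{2a}$ for all $a$ and an explicit product formula for the nonzero components. So the caduceus relation is used to constrain the \emph{outputs} of $FK$, not to cut down the sum over $\rho$ in the definition of $F$.

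Correspondingly, in the final assembly the $B_r$-sum does not arise from ``the $r$ positions of those ones in $FK$'' alone. The support of $FK$ only encodes an element of $\{\pm 1\}^r$ (which leg of each block carries the $1$); the $S_r$ factor records in which order the $r$ ones are created by the successive $FS^{[1]}F^{-1}$ operators (the paper's $\beta_t=\alpha_{\lambda_t+r-t+1/2}$). Putting these together gives $S_r\times\{\pm 1\}^r\cong B_r$. Once you make this correction, your collapse of the $\sigma$-dependent rational factors into the type-$C$ Weyl denominator and the $\sigma$-independent prefactor proceeds exactly as in the paper.
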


Throughout the rest of this section, we take $N=2r$, $\vec{x}=(z_1^{-1},z_1,\cdots,z_r^{-1},z_r)$ and $\vec{\epsilon}=(1,-1,\cdots,1,-1)$. For any $j\in \{1,2,\cdots,2r\}$, we denote by $x_j$ the $j$th entry of $\vec{x}$. Below we omit the arguments $\vec{x},\vec{\epsilon}$ from the notations $S^{[\alpha]}(\vec{x};\vec{\epsilon})$, $F(\vec{x};\vec{\epsilon})$, etc.

In order to treat the cap vertices, we define the ``cap vector'' $K\in  W_1\otimes \cdots \otimes W_N $ as follows. For every $\alpha,\beta\in\{0,1\}$, we denote by $C(\alpha,\beta;z)$ the Boltzmann weight of the cap of spectral parameter $z$ with $\alpha$ on the bottom edge and $\beta$ on the top edge. We define $K\in W_1\otimes \cdots \otimes W_N$ by specifying its component
\begin{equation}
    K_{i_1 \cdots i_N}:=\prod_{l=1}^r C(i_{2l-1},i_{2l};z_l)
\end{equation}
for every $(i_1,\cdots,i_N)\in \{0,1\}^N$.

Note that the partition function $Z(\mathcal{T}_{\lambda,z})$ can be written in terms of the column operators
\begin{equation}\label{E04}
  Z(\mathcal{T}_{\lambda,z})=\langle 0,\cdots,0|S^{[m_{\lambda_1+r-\frac{1}{2}}]}\cdots S^{[m_{\frac{1}{2}}]} K,
\end{equation}
where for every $j\in\{\frac{1}{2},\frac{3}{2},\cdots, \lambda_1+r-\frac{1}{2}\}$, $m_j=1$ if $j\in\{\lambda_i+r-i+\frac{1}{2}: i\in\{1,2,\cdots,r\}\}$ and $m_j=0$ otherwise.

Conjugating each column operator by the $F$-matrix, we write (\ref{E04}) in the following form
\begin{equation}\label{Par2}
    Z(\mathcal{T}_{\lambda,z})=(\langle 0,\cdots,0| F^{-1}) (F S^{[m_{\lambda_1+r-\frac{1}{2}}]}F^{-1}) \cdots (F S^{[m_{\frac{1}{2}}]}F^{-1}) (FK).
\end{equation}
Therefore, it suffices to compute $\langle 0,\cdots,0| F^{-1}$, the components of the conjugated column operator $F S^{[m_{j}]}F^{-1}$ for $j\in\{\frac{1}{2},\frac{3}{2},\cdots,\lambda_1+r-\frac{1}{2}\}$, and the components of $FK$.

By Proposition \ref{caduc}, we can deduce the following proposition.

\begin{proposition}\label{P4}
Assume that $t\in\{1,2,\cdots\}$ and $K_0,K_1,\cdots,K_t$ are cap vertices with spectral parameters $z_0,z_1,\cdots,z_t$. For any fixed $\alpha_0,\beta_0,\alpha_1,\beta_1,\cdots,\alpha_t,\beta_t\in \{0,1\}$ as indicated below, if $\alpha_0=\beta_0$, then the partition function of the following configuration is $0$. 
\begin{equation}
\hfill
\begin{tikzpicture}[baseline=(current bounding box.center)]
  \draw (0,0) to (3.5,1.75);
  \draw (3.5,1.75) to [out=30,in=180] (4,2);
  \draw (4,2) to (5,2);
  \draw (0,1)  to (3.5,2.75);
  \draw (3.5,2.75) to [out=30,in=180] (4,3);
  \draw (4,3) to (5,3);
  \draw (0,2) to (3.5,3.75);
  \draw (3.5,3.75) to [out=30,in=180] (4,4);
  \draw (4,4) to (5,4);
  \draw (0,3) to (3.5,4.75);
  \draw (3.5,4.75) to [out=30,in=180] (4,5);
  \draw (4,5) to (5,5);
  \draw (0,4) to (3.5,0.25);
   \draw (3.5,0.25) to [out=-30,in=180] (4,0);
  \draw (4,0) to (5,0);
  \draw (0,5) to (3.5,1.25);
  \draw (3.5,1.25) to [out=-30,in=180] (4,1);
  \draw (4,1) to (5,1);
  \draw (5,0) arc(-90:90:0.5);
  \draw (5,2) arc(-90:90:0.5);
  \draw (5,4) arc(-90:90:0.5);
  \filldraw[black] (5.5,0.5) circle (2pt);
  \filldraw[black] (5.5,2.5) circle (2pt);
  \filldraw[black] (5.5,4.5) circle (2pt);
  \node at (0,0) [anchor=east] {$\alpha_1$};
  \node at (0,1) [anchor=east] {$\beta_1$};
  \node at (0,2) [anchor=east] {$\alpha_t$};
  \node at (0,3) [anchor=east] {$\beta_t$};
  \node at (0,4) [anchor=east] {$\alpha_0$};
  \node at (0,5) [anchor=east] {$\beta_0$};
  \node at (0,1.5) [anchor=east] {$\cdots$};
  \node at (5.5,4.5) [anchor=west] {$K_t$};
  \node at (5.5,3.5) [anchor=west] {$\cdots$};
  \node at (5.5,0.5) [anchor=west] {$K_0$};
  \node at (5.5,2.5) [anchor=west] {$K_1$};
 \filldraw[black] (0.64,3.3) circle (2pt);
  \filldraw[black] (1.275,3.625) circle (2pt);
  \filldraw[black] (1.275,2.625) circle (2pt);
   \filldraw[black] (1.925,2.95) circle (2pt);
  \filldraw[black] (1.92,1.945) circle (2pt);
  \filldraw[black] (2.55,2.275) circle (2pt);
   \filldraw[black] (2.55,1.275) circle (2pt);
  \filldraw[black] (3.2,1.59) circle (2pt);
  \node at (4.7,-0.1) [anchor=south] {$\Gamma:z_0^{-1}$};
  \node at (4.6,0.9) [anchor=south] {$\Delta:z_0$};
  \node at (4.7,1.9) [anchor=south] {$\Gamma:z_1^{-1}$};
  \node at (4.6,2.9) [anchor=south] {$\Delta:z_1$};
  \node at (4.7,3.9) [anchor=south] {$\Gamma:z_t^{-1}$};
  \node at (4.6,4.9) [anchor=south] {$\Delta:z_t$};
\end{tikzpicture}
\end{equation}
\end{proposition}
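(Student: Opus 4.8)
The plan is to strip the four‑vertex ``block crossings'' off the right end of the diagram one at a time, using the caduceus relation (Proposition \ref{caduc}) to commute the cap $K_0$ backwards past each of them, until $K_0$ is attached directly to the two lines on which $\alpha_0$ and $\beta_0$ enter. At that point the conclusion is immediate, since a cap carrying equal spins on its two edges has Boltzmann weight $0$ by Figure \ref{Figure2.7}.

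First I would organize the $2(t+1)$ horizontal lines into $t+1$ \emph{blocks}: block $k$ (for $0\le k\le t$) is the pair consisting of the $\Gamma$-line of spectral parameter $z_k^{-1}$ and the $\Delta$-line of spectral parameter $z_k$, capped on the right by $K_k$. In the configuration of the proposition, block $0$ enters at the top carrying $\alpha_0$ on its $\Gamma$-line and $\beta_0$ on its $\Delta$-line, is braided downward past blocks $t,t-1,\dots,1$ in that order, exits at the bottom, and is capped there by $K_0$; every other block is merely shifted up by one position and capped by its $K_k$. The crossing of block $0$ past a block $k$ is a $2\times 2$ array of four $R$-vertices, and I would check that this array together with the two caps immediately to its right is exactly the configuration $I_1$ of Proposition \ref{caduc}, with block $0$ in the role of the block that moves downward (so the relevant spectral parameters are $z_0,z_k$ and the four $R$-vertices are the $\Delta\Delta$, $\Gamma\Gamma$, $\Gamma\Delta$, $\Delta\Gamma$ ices $A,B,C,D$ there). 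Since blocks $1,\dots,t$ never cross one another, the braid consists of exactly these $t$ block crossings, and the one nearest the caps is the block-$0$/block-$1$ crossing, sitting directly against $K_0$ and $K_1$.

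Next I would induct on the number of block crossings. At each stage the block crossing closest to $K_0$ forms, together with $K_0$ and the cap of the other block it involves, an $I_1$-configuration; by Proposition \ref{caduc} its contribution equals a scalar (namely $\tfrac{z_0-vz_k}{z_k-vz_0}$ for the appropriate $k$) times that of the corresponding $I_2$-configuration, in which the two blocks are capped directly, at the positions where they entered the crossing. Summing over the edge spins internal to the full diagram, this rewrites $Z$ as that scalar times the partition function of the same diagram with that block crossing removed and $K_0$ now lying one crossing further back. After $t$ such steps every $R$-vertex is gone, and $Z$ equals a product of $t$ scalars times the partition function of the crossing-free diagram in which each block $k$ is capped directly, by $K_k$, on the two lines on which it entered.

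That crossing-free diagram has no internal edges, so its partition function is the product $\prod_{k=0}^{t} C(\gamma_k,\delta_k;z_k)$ of cap weights, where $\gamma_k,\delta_k$ are the spins entering block $k$ on its $\Gamma$- and $\Delta$-lines; in particular the factor from block $0$ is $C(\alpha_0,\beta_0;z_0)$. By Figure \ref{Figure2.7} an admissible cap carries opposite spins on its two edges, so $C(\alpha,\beta;z)=0$ whenever $\alpha=\beta$; hence when $\alpha_0=\beta_0$ this factor vanishes and so does $Z$. The part I expect to require genuine care is the bookkeeping in the second and third paragraphs — verifying that each block crossing has exactly the $\Gamma/\Delta$ types and spectral parameters of $A,B,C,D$ in $I_1$, that block $0$ really is the descending block there (which pins down the orientation and the scalar), and that applying the caduceus relation at one crossing leaves the remaining crossings and the caps $K_1,\dots,K_t$ untouched so that the induction runs cleanly; the base case $t=1$ is a single application of Proposition \ref{caduc}.
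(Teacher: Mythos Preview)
Your proposal is correct and follows essentially the same route as the paper's proof: both iterate the caduceus relation (Proposition \ref{caduc}) on the block crossing adjacent to $K_0$, pushing $K_0$ back one block at a time, until it caps the $\alpha_0,\beta_0$ lines directly and the cap weight vanishes by Figure \ref{Figure2.7}. The paper's write-up is terser—it absorbs each stripped block (together with its cap weight) into an unspecified ``constant'' and simply says ``repeating the procedure''—whereas you keep all caps explicit and arrive at a product $\prod_k C(\gamma_k,\delta_k;z_k)$; but this is only a cosmetic difference, and your identification of the scalar $\tfrac{z_0-vz_k}{z_k-vz_0}$ and of which block plays the descending role in $I_1$ is correct.
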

\begin{proof}
Applying the caduceus relation (Proposition \ref{caduc}) to the caduceus braid at the bottom right corner, we deduce that the partition function is a constant times the partition function of the following
\begin{equation}
\hfill
\begin{tikzpicture}[baseline=(current bounding box.center)]
\draw (0,0) to (3.5,1.75);
  \draw (3.5,1.75) to [out=30,in=180] (4,2);
  \draw (4,2) to (5,2);
  \draw (0,1)  to (3.5,2.75);
  \draw (3.5,2.75) to [out=30,in=180] (4,3);
  \draw (4,3) to (5,3);
  \draw (0,2) to (3.5,3.75);
  \draw (3.5,3.75) to [out=30,in=180] (4,4);
  \draw (4,4) to (5,4);
  \draw (0,3) to (3.5,4.75);
  \draw (3.5,4.75) to [out=30,in=180] (4,5);
  \draw (4,5) to (5,5);
  \draw (0,4) to (3.5,0.25);
   \draw (3.5,0.25) to [out=-30,in=180] (4,0);
  \draw (4,0) to (5,0);
  \draw (0,5) to (3.5,1.25);
  \draw (3.5,1.25) to [out=-30,in=180] (4,1);
  \draw (4,1) to (5,1);
  \draw (5,0) arc(-90:90:0.5);
  \draw (5,2) arc(-90:90:0.5);
  \draw (5,4) arc(-90:90:0.5);
  \filldraw[black] (5.5,0.5) circle (2pt);
  \filldraw[black] (5.5,2.5) circle (2pt);
  \filldraw[black] (5.5,4.5) circle (2pt);
  \node at (0,0) [anchor=east] {$\alpha_2$};
  \node at (0,1) [anchor=east] {$\beta_2$};
  \node at (0,2) [anchor=east] {$\alpha_t$};
  \node at (0,3) [anchor=east] {$\beta_t$};
  \node at (0,4) [anchor=east] {$\alpha_0$};
  \node at (0,5) [anchor=east] {$\beta_0$};
  \node at (0,1.5) [anchor=east] {$\cdots$};
  \node at (5.5,4.5) [anchor=west] {$K_t$};
  \node at (5.5,3.5) [anchor=west] {$\cdots$};
  \node at (5.5,0.5) [anchor=west] {$K_0$};
  \node at (5.5,2.5) [anchor=west] {$K_2$};
 \filldraw[black] (0.64,3.3) circle (2pt);
  \filldraw[black] (1.275,3.625) circle (2pt);
  \filldraw[black] (1.275,2.625) circle (2pt);
   \filldraw[black] (1.925,2.95) circle (2pt);
  \filldraw[black] (1.92,1.945) circle (2pt);
  \filldraw[black] (2.55,2.275) circle (2pt);
   \filldraw[black] (2.55,1.275) circle (2pt);
  \filldraw[black] (3.2,1.59) circle (2pt);
  \node at (4.7,-0.1) [anchor=south] {$\Gamma:z_0^{-1}$};
  \node at (4.6,0.9) [anchor=south] {$\Delta:z_0$};
  \node at (4.7,1.9) [anchor=south] {$\Gamma:z_2^{-1}$};
  \node at (4.6,2.9) [anchor=south] {$\Delta:z_2$};
  \node at (4.7,3.9) [anchor=south] {$\Gamma:z_t^{-1}$};
  \node at (4.6,4.9) [anchor=south] {$\Delta:z_t$};
\end{tikzpicture}
\end{equation}

Repeating the procedure, we conclude that the original partition function is a constant times the Boltzmann weight of the following configuration, which is $0$ as $\alpha_0=\beta_0$. Hence the original partition function is $0$.
\begin{equation}
\hfill
\begin{tikzpicture}[baseline=(current bounding box.center)]
  \draw (0,0) arc(-90:90:0.5);
  \filldraw[black] (0.5,0.5) circle (2pt);
  \node at (0,0) [anchor=east] {$\alpha_0$};
  \node at (0,1) [anchor=east] {$\beta_0$};
  \node at (0.5,0.5) [anchor=west] {$K_0$};
\end{tikzpicture}
\end{equation}

\end{proof}

Based on Proposition \ref{P4}, we obtain the following proposition. It explicitly computes $ \langle   0,\cdots,0|F^{-1}$, the components of $F S^{[m]}F^{-1}$ for $m\in\{0,1\}$, and the components of $FK$.

\begin{proposition}\label{P5}
We have
\begin{equation}\label{EE1}
   \langle   0,\cdots,0|F^{-1}=\langle 0,\cdots,0|.
\end{equation}
For any $(i_1,\cdots,i_N),(j_1,\cdots,j_N)\in\{0,1\}^N$, we have
\begin{equation}\label{EE2}
    (F S^{[0]}  F^{-1})_{i_1\cdots i_N}^{j_1\cdots j_N}=\prod_{t=1}^N \mathbbm{1}_{i_t=j_t}\prod_{t: i_t=1}b_2(x_t;\epsilon_t),
\end{equation}
\begin{eqnarray}\label{EE3}
       (F S^{[1]} F^{-1})_{i_1\cdots i_N}^{j_1\cdots j_N}&=&\sum_{m=1}^N \mathbbm{1}_{i_m=0,j_m=1}\prod_{t:1\leq t\leq N, t\neq m}\mathbbm{1}_{i_t=j_t}\prod_{t: i_t=1,j_t=1}b_2(x_t;\epsilon_t)\nonumber\\
    &&\times  \prod_{t:t>m,i_t=1,j_t=1}a_2^{-1}(x_m,x_t;\epsilon_m,\epsilon_t)\prod_{t:i_t=0,j_t=0}b_2^{-1}(x_m,x_t;\epsilon_m,\epsilon_t),
\end{eqnarray}
\begin{eqnarray}\label{EE4}
(FK)_{i_1,\cdots,i_{N}} &=& \prod_{a=1}^r \mathbbm{1}_{i_{2a-1}\neq i_{2a}} \prod_{a=1}^r z_a^{-\frac{1}{2}}\prod_{a:1\leq a\leq r, i_{2a-1}=0,i_{2a}=1}\frac{z_a+\sqrt{v}}{z_a^{-1}+\sqrt{v}}\nonumber\\
&\times& \prod_{(a,b):1\leq a<b\leq r, i_{2a-1}=1,i_{2b-1}=1}b_2(z_b^{-1},z_a;1,-1)\prod_{(a,b):1\leq a<b\leq r,i_{2a-1}=1,i_{2b}=1}b_2(z_b,z_a;-1,-1)\nonumber\\
&\times& \prod_{(a,b):1\leq a<b\leq r, i_{2a}=1,i_{2b-1}=1}b_2(z_b^{-1},z_a^{-1};1,1)\prod_{(a,b):1\leq a<b\leq r,i_{2a}=1,i_{2b}=1}b_2(z_b,z_a^{-1};-1,1).
\end{eqnarray}
Equivalently, we have
\begin{equation}
    F S^{[0]} F^{-1}=\bigotimes_{t\in\{1,\cdots,N\}}\begin{pmatrix}
    1 & 0\\
    0 & b_2(x_t;\epsilon_t)
    \end{pmatrix}_t,
\end{equation}
\begin{eqnarray}
     F S^{[1]} F^{-1}=\sum_{m=1}^N &&\bigotimes_{t\in \{1,\cdots,m-1\}}
    \begin{pmatrix}
    b_2^{-1}(x_m,x_t;\epsilon_m,\epsilon_t) & 0\\
    0 & b_2(x_t;\epsilon_t)
    \end{pmatrix}_t
    \bigotimes \begin{pmatrix}
    0 & 1\\
    0 & 0
    \end{pmatrix}_m\nonumber\\
    &&    \bigotimes_{t\in\{m+1,\cdots,N\}}
    \begin{pmatrix}
    b_2^{-1}(x_m,x_t;\epsilon_m,\epsilon_t) & 0\\
    0 & \frac{b_2(x_t;\epsilon_t)}{a_2(x_m,x_t;\epsilon_m,\epsilon_t)}
    \end{pmatrix}_t,
\end{eqnarray}
\begin{eqnarray}
    FK=(\prod_{a=1}^r z_a^{-\frac{1}{2}})\sum_{(e_1,\cdots,e_r)\in \{\pm 1\}^r}&&\prod_{a=1}^r\frac{z_a^{-e_a}+\sqrt{v}}{z_a^{-1}+\sqrt{v}}  \prod_{1\leq a<b\leq r} b_2(z_b^{-e_b},z_a^{e_a};e_b,-e_a)\nonumber\\ &&\bigotimes_{a\in\{1,\cdots,r\}}|e_a\rangle_{2a-1}\otimes |-e_a\rangle_{2a},
\end{eqnarray}
where the basis vectors of each $W_t$ for $t\in\{1,\cdots,N\}$ are ordered as $|0\rangle$, $|1\rangle$, and for every $1\leq t\leq N$,
\begin{equation*}
    |+1\rangle_t=\begin{pmatrix}
    0\\
    1
    \end{pmatrix}_t ,  \quad |-1\rangle_t=\begin{pmatrix}
    1\\
    0
    \end{pmatrix}_t. 
\end{equation*}
\end{proposition}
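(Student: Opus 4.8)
The plan is to follow the template already established in the proof of Proposition \ref{P3}, making the obvious substitutions for the type-dependent Boltzmann weights, and then to add one genuinely new ingredient to handle the cap vector. First I would establish (\ref{EE1}): since $F^{-1}=F^*\Delta^{-1}$ with $\Delta$ diagonal, $(\langle 0,\cdots,0|F^{-1})^{j_1\cdots j_N}=(F^*)_{0\cdots 0}^{j_1\cdots j_N}(\Delta^{-1})_{j_1\cdots j_N}^{j_1\cdots j_N}$, and by the generalized Proposition \ref{P2} this equals $(R_{\rho^*}^{id})_{0\cdots 0}^{j_1\cdots j_N}$ times a scalar; spin conservation in the permutation graph forces $j_1=\cdots=j_N=0$, and in that case everything is $1$. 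Next, for (\ref{EE2}) and (\ref{EE3}), I would write $(FS^{[\alpha]}F^{-1})_{i_1\cdots i_N}^{j_1\cdots j_N}=(R_{id}^{\rho}S^{[\alpha]}R_{\rho^*}^{id})_{i_1\cdots i_N}^{j_1\cdots j_N}(\Delta^{-1})_{j_1\cdots j_N}^{j_1\cdots j_N}$, push the $R$-braids through the column using the $RTT$ relation (Proposition \ref{YBE1}) exactly as in (\ref{P}), and reduce to counting admissible states of $S_\rho^{[\alpha]}(\cdots)R_{\rho^*}^{\rho}$. The $\alpha=0$ case has a unique admissible state whose weight is a product of $b_2$ and $a_2$ patterns; combined with $(\Delta^{-1})$ from the generalized Proposition \ref{P1} this collapses to $\prod_{t:i_t=1}b_2(x_t;\epsilon_t)$. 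The $\alpha=1$ case proceeds as before: strip the lines with $i_a=j_a=1$ at a known factor, reduce to a small configuration with a single $-$ spin moving through, push $R$-braids to the left, and read off that the remaining weight is $c_2(x_m;\epsilon_m)=1$. The only change from Section \ref{Sect.4.2} is bookkeeping with $\epsilon_t$ in the weights, so I would present this compactly rather than redrawing every figure.

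The new part is (\ref{EE4}), the computation of $FK$. Here I would use the generalized Proposition \ref{P2}: $(FK)_{i_1\cdots i_N}=\sum_{j_1,\cdots,j_N}(F)_{i_1\cdots i_N}^{j_1\cdots j_N}K_{j_1\cdots j_N}=\sum_{j_1,\cdots,j_N}(R_{id}^{\rho})_{i_1\cdots i_N}^{j_1\cdots j_N}\prod_{l=1}^r C(j_{2l-1},j_{2l};z_l)$, where $\rho=\rho(i_1,\cdots,i_N)$ is the sorting permutation attached to $(i_1,\cdots,i_N)$. This expresses $(FK)_{i_1\cdots i_N}$ as the partition function of a permutation-graph-plus-caps configuration of exactly the form appearing in Proposition \ref{P4}. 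By Proposition \ref{P4} (applied repeatedly via the caduceus relation), this partition function vanishes unless $i_{2a-1}\neq i_{2a}$ for every $a$; that accounts for the factor $\prod_a \mathbbm{1}_{i_{2a-1}\neq i_{2a}}$. On the surviving configurations, each cap contributes $C(i_{2a-1},i_{2a};z_a)$, which is $z_a^{-1/2}$ or $-\sqrt{v}z_a^{1/2}$ according to Figure \ref{Figure2.7}; pulling out $\prod_a z_a^{-1/2}$ and writing the sign ratio as $\frac{z_a+\sqrt{v}}{z_a^{-1}+\sqrt{v}}$ (valid since $-\sqrt{v}z_a^{1/2}/z_a^{-1/2}=-\sqrt{v}z_a$ and a short manipulation using $z_{\overline a}=z_a^{-1}$ and the choice $x_{2a-1}=z_a^{-1},x_{2a}=z_a$) gives the scalar prefactor in (\ref{EE4}). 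The remaining crossings in the permutation graph, after the cap constraints have fixed the right-hand spins, are all $b_2$-type crossings between the lines carrying a $-$ spin; tracking which pair $(x_a,x_b;\epsilon_a,\epsilon_b)$ labels each crossing — there are four cases according to whether the $-$ spins sit on the $\Gamma$ row ($2a-1$) or the $\Delta$ row ($2a$) — yields the four products of $b_2$'s in (\ref{EE4}). Re-indexing by $e_a=+1$ if the $-$ spin in block $a$ is at site $2a-1$ and $e_a=-1$ if it is at site $2a$ gives the "equivalently" formula for $FK$.

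The main obstacle will be the last step: correctly identifying, for each choice of which lines carry $-$ spins, the precise set of $b_2$ crossings and the spectral-parameter/$\epsilon$ labels on each, and checking that only $b_2$ patterns (no $c_1$, $c_2$, or $a$ patterns) actually occur in the surviving admissible state of the permutation graph. This is where the interaction between the sorting permutation $\rho(i_1,\cdots,i_N)$, the alternating $\vec{\epsilon}=(1,-1,\cdots,1,-1)$ pattern, and the cap-enforced constraint $i_{2a-1}\neq i_{2a}$ must be made explicit; as in Proposition \ref{P3} one argues that the two spins in a block cross at most once and that crossing is forced to be a $b_2$ pattern by spin conservation, but keeping the four sub-cases straight (and getting the normalization of the cap weights and the $\sqrt v$ factors consistent with Figures \ref{Figure2.1}--\ref{Figure2.2} and \ref{Figure2.7}) is the delicate bookkeeping. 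I would isolate this in a single lemma about the admissible states of $R_{id}^{\rho}$ applied to $K$ so that the rest of the proof reads as a routine parallel to Section \ref{Sect.4.2}. Finally, I would note that all denominators are handled by the same continuity argument invoked at the start of Section \ref{Sect.4.2}.
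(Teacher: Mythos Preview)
Your plan for (\ref{EE1}), (\ref{EE2}), (\ref{EE3}) is exactly the paper's: it simply says these ``can be proved in a similar manner as Proposition~\ref{P3}'' and moves on, so that part is fine.

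For (\ref{EE4}) your outline is also the paper's --- write $(FK)_{i_1\cdots i_N}=(R_{id}^{\rho}K)_{i_1\cdots i_N}$ via the analog of Proposition~\ref{P2}, then invoke Proposition~\ref{P4} iteratively to force $i_{2a-1}\neq i_{2a}$ for every $a$ --- but your computation of the surviving weight has a real gap. You assert that on the surviving configurations ``each cap contributes $C(i_{2a-1},i_{2a};z_a)$'' and that ``the remaining crossings \ldots\ are all $b_2$-type crossings between the lines carrying a $-$ spin''. Neither holds when $i_{2a-1}=0,\ i_{2a}=1$. In that case the sorting permutation $\rho$ places line $2a$ (the $-$ spin) below line $2a-1$ (the $+$ spin) on the left, so these two lines \emph{do} cross inside the permutation graph before reaching the cap. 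That crossing is not forced by spin conservation to be a $b_2$ pattern: both the $b_2$ state (spins pass through, cap sees $(0,1)$) and the $c_1$ state (spins swap, cap sees $(1,0)$) are admissible, and both contribute. Your proposed identity $-\sqrt{v}\,z_a=\dfrac{z_a+\sqrt v}{z_a^{-1}+\sqrt v}$ is therefore false; the factor $\dfrac{z_a+\sqrt v}{z_a^{-1}+\sqrt v}$ does not come from a single cap weight but from the two-term sum
\[
c_1(z_a,z_a^{-1};-1,1)\,C(1,0;z_a)+b_2(z_a,z_a^{-1};-1,1)\,C(0,1;z_a)
=\frac{(1-v)z_a}{z_a^{-1}-vz_a}\,z_a^{-1/2}-\frac{z_a-z_a^{-1}}{z_a^{-1}-vz_a}\,\sqrt v\,z_a^{1/2}
=\frac{z_a^{-1/2}(z_a+\sqrt v)}{z_a^{-1}+\sqrt v}.
\]
This is the key computation you are missing. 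The paper organises the argument by peeling off the bottom cap $K_1$ together with its two lines, recording this factor (or simply $z_1^{-1/2}$ when $i_1=1,\ i_2=0$, since then there is no within-block crossing) along with the $b_2$ crossings of those two lines against the $-$-spin lines of the remaining blocks, and then repeating. Once you replace your ``only $b_2$ crossings'' claim by this two-state within-block computation, the rest of your bookkeeping (the four $b_2$ products indexed by which of the two rows in each block carries the $-$ spin) goes through as you describe.
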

\begin{proof}

The results (\ref{EE1}), (\ref{EE2}), and (\ref{EE3}) can be proved in a similar manner as Proposition \ref{P3}. Below we present the proof of (\ref{EE4}).

We note that by an analog of Proposition \ref{P2}, 
\begin{equation}
    (FK)_{i_1\cdots i_N}=(R_{id}^{\rho}K)_{i_1\cdots i_N},
\end{equation}
where $\rho\in S_N$ is the unique permutation that satisfies the condition 
\begin{eqnarray}
  && 0\leq i_{\rho(N)}\leq \cdots \leq i_{\rho(1)} \leq 1,\nonumber\\
  && i_{\rho(t)}=i_{\rho(t+1)}\text{ implies }\rho(t)<\rho(t+1), \text{ for every }  1\leq t\leq N-1.
\end{eqnarray}

By spin conservation and the Boltzmann weights of the caps, in order for $(R_{id}^{\rho} K)_{i_1\cdots i_N}$ to be non-vanishing, we must have
\begin{eqnarray*}
    &&i_{\rho(1)}=\cdots=i_{\rho(r)}=1,\quad
     i_{\rho(r+1)}=\cdots=i_{\rho(2r)}=0,\\
    && \rho(1)<\cdots<\rho(r),\quad \rho(r+1)<\cdots<\rho(2r).
\end{eqnarray*}

Note that $(R_{id}^{\rho} K)_{i_1\cdots i_N}$ is the partition function of a lattice model. An illustration is given below.
\begin{equation}
\hfill
\begin{tikzpicture}[baseline=(current bounding box.center)]
  \draw (4,2) to (5,2);
  \draw (4,3) to (5,3);
  \draw (4,4) to (5,4);
  \draw (4,5) to (5,5);
  \draw (4,0) to (5,0);
  \draw (4,1) to (5,1);
  \draw (0,0) to (4,0);
  \draw (0,3) to [out=0,in=180] (4,1);
  \draw (0,1) to [out=0, in=180] (4,3);
  \draw (0,5)  to [out=0,in=180] (4,2);
  \draw (0,2) to [out=0,in=180] (4,4);
  \draw (0,4) to [out=0,in=180] (4,5);
  \draw (5,0) arc(-90:90:0.5);
  \draw (5,2) arc(-90:90:0.5);
  \draw (5,4) arc(-90:90:0.5);
  \filldraw[black] (5.5,0.5) circle (2pt);
  \filldraw[black] (5.5,2.5) circle (2pt);
  \filldraw[black] (5.5,4.5) circle (2pt);
  \node at (0,0) [anchor=east] {$i_{\rho(1)}=1$};
  \node at (0,1) [anchor=east] {$i_{\rho(2)}=1$};
  \node at (0,3) [anchor=east] {$i_{\rho(r+1)}=0$};
  \node at (0,2) [anchor=east] {$i_{\rho(r)}=1$};
  \node at (0,4) [anchor=east] {$i_{\rho(2r-1)}=0$};
  \node at (0,5) [anchor=east] {$i_{\rho(2r)}=0$};
  \node at (0,3.5) [anchor=east] {$\cdots$};
  \node at (0,1.5) [anchor=east] {$\cdots$};
  \node at (5.5,4.5) [anchor=west] {$K_r$};
  \node at (5.5,3.5) [anchor=west] {$\cdots$};
  \node at (5.5,0.5) [anchor=west] {$K_1$};
  \node at (5.5,2.5) [anchor=west] {$K_2$};
  \node at (4.7,-0.1) [anchor=south] {$\Gamma:z_1^{-1}$};
  \node at (4.6,0.9) [anchor=south] {$\Delta:z_1$};
  \node at (4.7,1.9) [anchor=south] {$\Gamma:z_2^{-1}$};
  \node at (4.6,2.9) [anchor=south] {$\Delta:z_2$};
   \node at (4.7,3.9) [anchor=south] {$\Gamma:z_{r}^{-1}$};
  \node at (4.6,4.9) [anchor=south] {$\Delta:z_{r}$};
\end{tikzpicture}
\end{equation}
By spin conservation, the Boltzmann weights of the cap vertices, and Proposition \ref{P4}, we can deduce that for any admissible state of the above lattice model, the situation that $i_{1}=i_{2}=1$ or $i_{1}=i_{2}=0$ cannot happen. Hence for any admissible state, either $i_{1}=1,i_{2}=0$, or $i_{1}=0,i_{2}=1$. For either case, by a similar argument as that in the proof of Proposition \ref{P3}, we can remove the cap vertex $K_1$ together with the two lines associated with it up to a constant factor. Repeating the above argument, we can deduce that for any admissible state and any $1\leq a\leq r$, either $i_{2a-1}=1,i_{2a}=0$, or $i_{2a-1}=0,i_{2a}=1$.

% for any $1\leq a  \leq r$, the situation that $i_{2a-1}=i_{2a}=1$ or $i_{2a-1}=i_{2a}=0$ cannot happen

% for any admissible state of the above lattice model, for any $1\leq a  \leq r$, the situation that $i_{2a-1}=i_{2a}=1$ cannot happen. Further using Proposition \ref{P4}, we can deduce that for any $1\leq a\leq r$, the situation that the situation $i_{2a-1}=i_{2a}=0$ cannot happen. Therefore, for any admissible state, either $i_{2a-1}=1,i_{2a}=0$, or $i_{2a-1}=0,i_{2a}=1$. 

Note that using a similar argument as that in the proof of Proposition \ref{P3}, if $i_{1}=1,i_2=0$, we can remove the cap vertex $K_1$ together with the two lines associated with it up to a factor of
\begin{equation}
    z_1^{-\frac{1}{2}}\prod_{a:1<a\leq r, i_{2a-1}=1,i_{2a}=0}b_2(z_a^{-1},z_1;1,-1)\prod_{a:  1<a\leq r,i_{2a-1}=0,i_{2a}=1}b_2(z_a,z_1;-1,-1).
\end{equation}

Now note that the partition function of the following configuration can be computed as
\begin{equation}
\hfill
\begin{tikzpicture}[baseline=(current bounding box.center)]
  \draw (0,-0.5) to  [out=0,in=180] (2,0.5);
  \draw (0,0.5) to  [out=0,in=180] (2,-0.5);
  \draw (2,-0.5) arc(-90:90:0.5);
  \filldraw[black] (2.5,0) circle (2pt);
  \node at (0,-0.5) [anchor=east] {$1$};
  \node at (0,0.5) [anchor=east] {$0$};
  \node at (2.5,0) [anchor=west] {$K_1$};
\end{tikzpicture}
\end{equation}
\begin{eqnarray}
  &&  c_1(z_1,z_1^{-1};-1,1)C(1,0;z_1)+b_2(z_1,z_1^{-1};-1,1)C(0,1;z_1)\nonumber \\
  &=& \frac{(1-v)z_1}{z_1^{-1}-vz_1} z_1^{-\frac{1}{2}}-\frac{z_1-z_1^{-1}}{z_1^{-1}-vz_1}\sqrt{v} z_1^{\frac{1}{2}}=\frac{z_1^{-\frac{1}{2}}(z_1+\sqrt{v})}{z_1^{-1}+\sqrt{v}}.
\end{eqnarray}
Therefore, if $i_1=0,i_2=1$, we can remove the cap $K_1$ together with the two lines associated with it up to a factor of
\begin{equation}
    \frac{z_1^{-\frac{1}{2}}(z_1+\sqrt{v})}{z_1^{-1}+\sqrt{v}}\prod_{a:1<a\leq r, i_{2a-1}=1,i_{2a}=0}b_2(z_a^{-1},z_1^{-1};1,1)\prod_{a:  1<a\leq r,i_{2a-1}=0,i_{2a}=1}b_2(z_a,z_1^{-1};-1,1).
\end{equation}

By repeatedly removing each cap and its two associated lines (from bottom to top), we conclude that
\begin{eqnarray}
(FK)_{i_1,\cdots,i_{N}} &=& \prod_{a=1}^r \mathbbm{1}_{i_{2a-1}\neq i_{2a}} \prod_{i=1}^r z_i^{-\frac{1}{2}}\prod_{a:1\leq a\leq r, i_{2a-1}=0,i_{2a}=1}\frac{z_a+\sqrt{v}}{z_a^{-1}+\sqrt{v}}\nonumber\\
&\times& \prod_{(a,b):1\leq a<b\leq r, i_{2a-1}=1,i_{2b-1}=1}b_2(z_b^{-1},z_a;1,-1)\prod_{(a,b):1\leq a<b\leq r,i_{2a-1}=1,i_{2b}=1}b_2(z_b,z_a;-1,-1)\nonumber\\
&\times& \prod_{(a,b):1\leq a<b\leq r, i_{2a}=1,i_{2b-1}=1}b_2(z_b^{-1},z_a^{-1};1,1)\prod_{(a,b):1\leq a<b\leq r,i_{2a}=1,i_{2b}=1}b_2(z_b,z_a^{-1};-1,1).
\end{eqnarray}

\end{proof}

Now we finish the proof of Theorem \ref{Theorem2} using Proposition \ref{P5}.

\begin{proof}[Proof of Theorem \ref{Theorem2}]
Note that by Proposition \ref{P5} and (\ref{Par2}), $Z(\mathcal{T}_{\lambda,z})$ can be written as the sum of the following terms
\begin{equation}
    (F S^{[m_{\lambda_1+r-\frac{1}{2}}]} F^{-1})_{0\cdots 0}^{i_1^{(\lambda_1+r-\frac{1}{2})}\cdots i_N^{(\lambda_1+r-\frac{1}{2})}}\cdots (F S^{[m_{\frac{3}{2}}]} F^{-1})_{i_1^{(\frac{5}{2})}\cdots i_N^{(\frac{5}{2})}}^{i_1^{(\frac{3}{2})}\cdots i_N^{(\frac{3}{2})}}(F S^{[m_{\frac{1}{2}}]} F^{-1})_{i_1^{(\frac{3}{2})}\cdots i_N^{(\frac{3}{2})}}^{i_1^{(\frac{1}{2})}  \cdots i_N^{(\frac{1}{2})}}(FK)_{i_1^{(\frac{1}{2})}\cdots i_N^{(\frac{1}{2})}},
\end{equation}
where $(i_1^{(l)},\cdots,i_N^{(l)})\in\{0,1\}^N$ for $l=\frac{1}{2},\frac{3}{2},\cdots, \lambda_1+r-\frac{1}{2}$ satisfy the following condition: if $m_l=0$, then $i_k^{(l)}=i_k^{(l+1)}$ for every $1\leq k\leq N$; if $m_l=1$, then there is a unique index $\alpha_l\in\{1,2,\cdots,r\}$ and a unique integer $e_l\in\{0,1\}$, such that $i_{2\alpha_l-e_l}^{(l)}=1,i_{2\alpha_l-e_l}^{(l+1)}=0,i_{2\alpha_l-1+e_l}^{(l)}=0,i_{2\alpha_l-1+e_l}^{(l+1)}=0$, and $i_k^{(l)}=i_k^{(l+1)}$ for every $k\neq 2\alpha_l-1,2\alpha_l$. Here, we have assumed that $(i_1^{(\lambda_1+r+\frac{1}{2})},\cdots,i_N^{(\lambda_1+r+\frac{1}{2})})=(0,\cdots,0)$.

Let $\beta_t:=\alpha_{\lambda_t+r-t+\frac{1}{2}}$, $f_t:=2e_t-1$ for every $1\leq t\leq r$. Let $f:=(f_1,\cdots,f_r)\in\{\pm 1\}^r $. Note that $(\beta_1,\cdots,\beta_r)$ corresponds to a permutation $\sigma\in S_r$ such that $\sigma(t)=\beta_t$ for every $1\leq t\leq r$. Based on this observation and Proposition \ref{P5}, we can deduce that
\begin{eqnarray*}
Z(\mathcal{T}_{\lambda,z}) &=& \sum_{\sigma\in S_r}\sum_{f\in \{\pm 1\}^r}\prod_{i=1}^r z_{\sigma(i)}^{-f_i (\lambda_i+r-i)}\prod_{i=1}^r z_i^{-\frac{1}{2}}\prod_{i=1}^r\frac{z_{\sigma(i)}^{-f_i}+\sqrt{v}}{z_i^{-1}+\sqrt{v}}\prod_{i=1}^r\frac{z_i^{-1}-vz_i}{z_{\sigma(i)}^{-f_i}-z_{\sigma(i)}^{f_i}}\prod_{1\leq i<j\leq r}B(i,j,\sigma,f),
\end{eqnarray*}
where $B(i,j,\sigma,f)$ is given as follows. If $\sigma(i)<\sigma(j)$, then
\begin{equation}
    B(i,j,\sigma,f):=b_2^{-1}(z_{\sigma(i)}^{-f_i},z_{\sigma(j)}^{-f_j};f_i,f_j) b_2^{-1}(z_{\sigma(i)}^{-f_i},z_{\sigma(j)}^{f_j}; f_i,-f_j);
\end{equation}
if $\sigma(i)>\sigma(j)$, then
\begin{equation}
    B(i,j,\sigma,f):=a_2^{-1}(z_{\sigma(j)}^{-f_j},z_{\sigma(i)}^{-f_i};f_j,f_i)b_2^{-1}(z_{\sigma(i)}^{-f_i},z_{\sigma(j)}^{-f_j};f_i,f_j) b_2^{-1}(z_{\sigma(j)}^{-f_j},z_{\sigma(i)}^{f_i}; f_j,-f_i).
\end{equation}

By computation, we obtain that if $\sigma(i)<\sigma(j)$, 
\begin{eqnarray*}
 B(i,j,\sigma,f)&=& (z_{\sigma(i)}^{\frac{1}{2}}z_{\sigma(j)}^{-\frac{1}{2}}-v z_{\sigma(i)}^{-\frac{1}{2}} z_{\sigma(j)}^{\frac{1}{2}})
 (z_{\sigma(i)}^{-\frac{1}{2}}z_{\sigma(j)}^{-\frac{1}{2}}-v z_{\sigma(i)}^{\frac{1}{2}} z_{\sigma(j)}^{\frac{1}{2}})\\
 &&\times (z_{\sigma(i)}^{-\frac{1}{2}f_i}z_{\sigma(j)}^{\frac{1}{2}f_j}-z_{\sigma(i)}^{\frac{1}{2}f_i}z_{\sigma(j)}^{-\frac{1}{2}f_j})^{-1}(z_{\sigma(i)}^{-\frac{1}{2}f_i}z_{\sigma(j)}^{-\frac{1}{2}f_j}-z_{\sigma(i)}^{\frac{1}{2}f_i}z_{\sigma(j)}^{\frac{1}{2}f_j})^{-1};
\end{eqnarray*}
if $\sigma(i)>\sigma(j)$,
\begin{eqnarray*}
 B(i,j,\sigma,f)&=& (z_{\sigma(i)}^{-\frac{1}{2}}z_{\sigma(j)}^{\frac{1}{2}}-v z_{\sigma(i)}^{\frac{1}{2}} z_{\sigma(j)}^{-\frac{1}{2}})
 (z_{\sigma(i)}^{-\frac{1}{2}}z_{\sigma(j)}^{-\frac{1}{2}}-v z_{\sigma(i)}^{\frac{1}{2}} z_{\sigma(j)}^{\frac{1}{2}})\\
 &&\times (z_{\sigma(i)}^{-\frac{1}{2}f_i}z_{\sigma(j)}^{\frac{1}{2}f_j}-z_{\sigma(i)}^{\frac{1}{2}f_i}z_{\sigma(j)}^{-\frac{1}{2}f_j})^{-1}(z_{\sigma(i)}^{-\frac{1}{2}f_i}z_{\sigma(j)}^{-\frac{1}{2}f_j}-z_{\sigma(i)}^{\frac{1}{2}f_i}z_{\sigma(j)}^{\frac{1}{2}f_j})^{-1}.
\end{eqnarray*}
Thus we have
\begin{eqnarray*}
\prod_{1\leq i<j\leq r}B(i,j,\sigma,f) &=& \prod_{1\leq i<j\leq r}\frac{(z_i^{\frac{1}{2}}z_j^{-\frac{1}{2}}-vz_i^{-\frac{1}{2}}z_j^{\frac{1}{2}})(z_i^{-\frac{1}{2}}z_j^{-\frac{1}{2}}-vz_i^{\frac{1}{2}}z_j^{\frac{1}{2}})}{(z_{\sigma(i)}^{-\frac{1}{2}f_i}z_{\sigma(j)}^{\frac{1}{2}f_j}-z_{\sigma(i)}^{\frac{1}{2}f_i}z_{\sigma(j)}^{-\frac{1}{2}f_j})(z_{\sigma(i)}^{-\frac{1}{2}f_i}z_{\sigma(j)}^{-\frac{1}{2}f_j}-z_{\sigma(i)}^{\frac{1}{2}f_i}z_{\sigma(j)}^{\frac{1}{2}f_j})}.
\end{eqnarray*}

Therefore, we conclude that
\begin{equation}
Z(\mathcal{T}_{\lambda,z})=z^{-\rho_B}\prod_{i=1}^r(1-\sqrt{v}z_i)\prod_{1\leq i<j\leq r}((1-vz_iz_j)(1-vz_jz_i^{-1}))\sum_{\sigma\in B_r}\sigma(z^{\lambda+\rho_C}\prod_{i=1}^r (1+\sqrt{v}z_i^{-1})\Delta_C(z)^{-1}).
\end{equation}

\end{proof}

\newpage
\bibliographystyle{acm}
\bibliography{Partition.bib}

\end{document}